\newcommand{\im}{\mathrm{i}}
\newcommand{\N}{\mathbb{N}}
\newcommand{\R}{\mathbb{R}}
\newcommand{\C}{\mathbb{C}}
\newcommand{\defeq}{\coloneqq}
\DeclareMathOperator{\id}{id}
\newcommand{\xd}{\mathrm{d}}
\newcommand{\cH}{\mathcal{H}}
\newcommand{\cD}{\mathcal{D}}
\DeclareMathOperator{\tr}{\mathrm{tr}}
\newcommand{\coh}{\mathsf{K}}
\newcommand{\one}{\mathbf{1}}
\newcommand{\Lop}{\mathrm{op}}
\newcommand{\xL}{\hat{L}}
\newcommand{\wq}[1]{\widehat{#1}}
\newcommand{\cM}{\mathcal{M}}
\newcommand{\cB}{\mathcal{B}}
\newcommand{\cT}{\mathcal{T}}
\newcommand{\nop}{\mathrm{op}}
\newcommand{\ntr}{\mathrm{tr}}
\newcommand{\comp}{\diamond}
\theoremstyle{definition}
\newtheorem{dfn}{Definition}[section]
\theoremstyle{plain}
\newtheorem{lem}[dfn]{Lemma}
\newtheorem{prop}[dfn]{Proposition}
\newtheorem{thm}[dfn]{Theorem}
\begin{document}

%!TEX root = main.tex

\begin{titlepage}
\title{\textbf{Spectral decomposition of field operators and causal measurement in quantum field theory}}
\author{Robert Oeckl\footnote{email: robert@matmor.unam.mx}}

\affil{Institute for Quantum Optics and Quantum Information, \\
Boltzmanngasse 3, 1090 Vienna, Austria}

\affil{Centro de Ciencias Matemáticas, \\
Universidad Nacional Autónoma de México, \\
C.P.~58190, Morelia, Michoacán, Mexico}

\date{UNAM-CCM-2024-2\\ 13 September 2024\\ 2 April 2025 (v2)}

\maketitle

\vspace{\stretch{1}}

\begin{abstract}

We construct the spectral decomposition of field operators in bosonic quantum field theory as a limit of a strongly continuous family of positive-operator-valued measure decompositions. The latter arise from integrals over families of bounded positive operators. Crucially, these operators have the same locality properties as the underlying field operators. We use the decompositions to construct families of quantum operations implementing measurements of the field observables. Again, the quantum operations have the same locality properties as the field operators. What is more, we show that these quantum operations do not lead to superluminal signaling and are possible measurements on quantum fields in the sense of Sorkin.

\end{abstract}

\vspace{\stretch{1}}
\end{titlepage}

%!TEX root = main.tex

\section{Introduction}

In non-relativistic quantum mechanics, self-adjoint operators on Hilbert space play an important role in encoding physical observables that are used to describe measurement processes and their outcomes. The eigenvalues of the observable are the possible values labeling the outcomes of a measurement, while the eigenspaces determine the state after the measurement. That is, the spectral decomposition of observables plays an essential role in quantum measurement theory. In the special case of performing only a single measurement to obtain an expectation value of an observable and discarding the system afterwards one can get away without knowing the spectral decomposition of the observable explicitly. However, in the case of multiple measurements or a single measurement with post-selection this is no longer the case. Indeed, in modern quantum measurement theory, the basic concept is not that of an observable, but that of a quantum operation \cite{Kra:statechanges}. This is a completely positive super-operator on the space of self-adjoint operators on Hilbert space. To construct the quantum operations corresponding to an observable requires precisely the spectral decomposition of the observable.

Historically, observables in terms of self-adjoint operators have played a much less important role in describing measurement processes in quantum field theory (QFT). Instead, scattering theory has been of most interest, described through an asymptotic transition amplitude. This corresponds to a one-shot measurement where preparation and observation happen at asymptotically early and late times respectively. While this has been adequate for most applications in high energy physics, there is a growing need for a full-fledged measurement theory of quantum field theory that is able to describe processes involving multiple measurements that are moreover localized in both time and space. Rather than provide individual references on this subject we recommend to the reader an excellent recent survey article by Papageorgiou and Fraser \cite{PaFr:eliminatingimpossible}.

The most straightforward approach to a measurement theory of quantum field theory would be to follow the same path as in non-relativistic quantum mechanics. That is, use the spectral decomposition of the observables of interest and construct the corresponding quantum operations. The literature is surprisingly sparse on this topic. What is more, there is an important body of literature suggesting that this approach is problematic. We limit ourselves here to mention Sorkin's seminal work on the subject, published in 1993 \cite{Sor:impossible}. In it, Sorkin showed that a relatively generic class of projective measurements leads to superluminal signaling and is thus unphysical. A quantum operation constructed from projection operators of a spectral decomposition would presumably be affected by this.

This difficulty has boosted efforts to look for alternative ways of describing measurement in QFT, particularly through an explicit modelling of the measurement apparatus as a quantum system in its own right which is subsequently measured in the conventional way, in the sense of von~Neumann \cite{vne:mathgrundquant}. One would then aim to show that the interaction between system and apparatus satisfies locality and does not lead to superluminal signaling. Two approaches have reported success in this respect recently. One is based on particle-detector models \cite{PGGaMM:detectormeasurementqft}, in the tradition of Unruh-DeWitt detectors \cite{BiDa:qftcurved}. The other one is based on modeling the apparatus through a quantum field \cite{FeVe:qftlocalmeasure}, generalizing the work of Hellwig and Kraus \cite{HeKr:opmeasureii}. (In the latter case discussion of Sorkin's problem is in \cite{BoFeRu:impossible,MuVe:superluminal}.) Note that these approaches still need to recur to the conventional non-relativistic measurement scheme to extract the measured value from the apparatus after the system-apparatus interaction, potentially reintroducing vulnerabilities.

The simplest, but arguably most important class of localizable observables in QFT is given by field operators. By these we mean here those operators arising from the quantization of linear functions on phase space. Equivalently, these are the operators that can be written as the sum of a creation and corresponding annihilation operator. Using coherent states, properties of the Weyl quantization, and characteristic functions of measures, we motivate in Section~\ref{sec:motivation} the sesquilinear form determining the spectral measure of field operators. In Section~\ref{sec:povmdec} we extend this sesquilinear form to a 1-parameter family and show that this gives rise to a 1-parameter family of positive-operator-valued measure (POVM) decompositions of field operators, satisfying continuity and composition properties, with the spectral measure as a limit. In Section~\ref{sec:quantop} we construct quantum operations to measure the field operator. These come in two types. The discrete outcome type, based on the spectral decomposition, consists of partitioning the real line into subsets, with quantum operations testing whether the value of the observable lies in a given subset or not (Section~\ref{sec:doutcomes}). The continuous outcome type amounts to a full and precise measurement of the value of the observable (Section~\ref{sec:coutcomes}) and is based instead on the 1-parameter family of POVM decompositions, approximating the spectral decomposition. Locality and causality properties of the quantum operations are investigated in Section~\ref{sec:loccaus}. In particular, we show that all quantum operations have the same locality properties as the corresponding field operator. Moreover, we show that measurements of the continuous outcome type do not lead to superluminal signaling, but satisfy what we call causal transparency. We conclude the paper with some discussion and an outlook in Section~\ref{sec:outlook}. The appendix contains parts of the proofs of Lemmas~\ref{lem:prod} and \ref{lem:projsumid}.

%!TEX root = main.tex

\section{Motivation}
\label{sec:motivation}

%\subsection{The holomorphic representation, coherent states and field operators}

Consider a separable complex Hilbert space $L$ with inner product $\{\cdot,\cdot\}$. Denote by $L^*$ the dual Hilbert space. Let $\xL$ be the algebraic dual of $L^*$, equipped with the weak *-topology. Then, the natural inclusion $L\subseteq \xL$ is continuous. The pair $(L,\xL)$ forms an abstract Wiener space \cite{Gro:wienerspaces}.\footnote{The construction is slightly different to that given by Gross. In particular, $\hat{L}$ is merely a Fréchet space rather than a Banach space, see \cite{Oe:holomorphic}.} In particular, $\xL$ acquires a centered Gaussian probability measure $\nu$ determined by the inner product of $L$. Let $\cH$ be the separable Hilbert space of square-integrable holomorphic functions on $\xL$ with inner product $\langle\cdot,\cdot\rangle$ \cite{Oe:holomorphic}. Note that the elements of $\cH$ are completely determined by their values on $L$ \cite[Theorem~3.18]{Oe:holomorphic}. This allows for a treatment strongly analogous to the case where $L$ is finite-dimensional, even if it is not. $\cH$ is the space of holomorphic wave functions of the bosonic quantum field theory determined by the linear phase space $L$. The space $\cH$ is naturally isomorphic to the usual Fock space construction over $L$ \cite{Oe:freefermi}.

The space $\cH$ is a reproducing-kernel Hilbert space in the sense of Bargmann \cite{Bar:hilbanalytic} (generalized, if $L$ is infinite-dimensional). In particular, we have a family of coherent states $\{\coh_{\xi}\}_{\xi\in L}$ indexed by elements of $L$. Their wave functions are given by,
\begin{equation}
    \coh_\xi(\phi)=\exp\left(\frac12\{\xi,\phi\}\right), \quad\forall\xi\in L\, \forall\phi\in L .
\end{equation}
The reproducing property is,
\begin{equation}
    \psi(\phi)=\langle\coh_\phi,\psi\rangle,\quad\forall\phi\in L\,\forall \psi\in\cH.
\end{equation}
From this follows the completeness relation,
\begin{equation}
    \langle\eta,\psi\rangle=\int_{\xL}\xd\nu(\phi)\, \langle\eta,\coh_\phi\rangle \langle\coh_\phi,\psi\rangle,\quad\forall\psi,\eta\in\cH .
    \label{eq:cohcompl}
\end{equation}

In the following, we write the complex structure of $L$ always as a map $J:L\to L$. We may view $L$ as a real vector space and consider its complexification that we will denote $L^\C=L\oplus\im L$. Then, $J$ extends to a complex linear anti-involution $J:L^{\C}\to L^{\C}$. The inner product extends to a complex bilinear map on $L^{\C}$ and we decompose it into real and imaginary parts as follows \cite{Woo:geomquant},
\begin{equation}
    \{\xi,\phi\}=g(\xi,\phi)+2\im\omega(\xi,\phi)
    =2\omega(\xi,J \phi)+2\im\omega(\xi,\phi) .
    \label{eq:lip}
\end{equation}
Here, $g$ and $\omega$ are symmetric and antisymmetric non-degenerate real bilinear forms on $L$ respectively. Both extend to complex bilinear forms on $L^\C$. We refer to $\omega$ also as the symplectic form.

We proceed to consider the quantization of a linear observable on the phase space $L$. That is, consider a real linear and continuous map $L\to\R$. By the Riesz representation theorem, there exists a unique element $\xi\in L$ such that the linear map is given by
\begin{equation}
    D_{\xi}(\phi)=2\omega(\xi,\phi)=g(J\xi,\phi), \quad\forall\phi\in L .
    \label{eq:linobsdual}
\end{equation}
The \emph{quantization} of $D_{\xi}$ is then the \emph{field operator} on $\cH$ given by
\begin{equation}
    \wq{D_{\xi}}=\frac{1}{\sqrt{2}}(a_{J\xi}+a^{\dagger}_{J\xi}),
    \label{eq:linca}
\end{equation}
where $a^\dagger_{\phi}$ and $a_{\phi}$ denote the creation and annihilation operators associated to $\phi\in L$. The latter satisfy the commutation relations,
\begin{equation}
    [a_{\xi},a^{\dagger}_{\phi}]=\{\phi,\xi\} .
    \label{eq:ccr}
\end{equation}
We take a field operator to be any operator on $\cH$ that can be represented as the sum of a creation operator with the corresponding annihilation operator. Thus, we have one-to-one correspondences between elements of $L$, linear observables and field operators.
Note that a field operator is necessarily unbounded, having the real line as its spectrum.

Let $\cD\subseteq\cH$ be the dense subspace generated by the coherent states. We take this to be the domain of the field operators. We have,
\begin{equation}
    \langle \coh_{\gamma}, \wq{D_{\xi}}\, \coh_{\beta}\rangle
    = \langle \coh_{\gamma}, \coh_{\beta}\rangle D_{\xi}(\beta,\gamma),
    \label{eq:qlinobs}
\end{equation}
where $(\beta,\gamma)$ denotes the element in $L^\C$ given by,
\begin{equation}
    (\beta,\gamma)\defeq \frac12\left((1+\im J)\beta+(1-\im J)\gamma\right) .
    \label{eq:betagammaint}
\end{equation}
We have also extended $D_{\xi}$ to a complex linear function $L^\C\to\C$. Note,
\begin{equation}
    \|\wq{D_{\xi}}\, \coh_{\beta}\|_{\cH}^2=\|\coh_{\beta}\|_{\cH}^2 \left(\frac12\|\xi\|_{L}^2+\left(D_{\xi}(\beta)\right)^2\right).
\end{equation}

The standard quantization prescriptions all coincide for linear observables. For non-linear observables they may be characterized by the following identity \cite{Oe:feynobs}. Let $F_{\xi}=\exp(\im D_{\xi})$. Consider,
\begin{equation}
    \langle \coh_{\gamma}, \wq{F_{\xi}}\, \coh_{\beta}\rangle
    =\langle \coh_{\gamma},\coh_{\beta}\rangle
    F_{\xi}(\beta,\gamma) \exp\left(-c\,\|\xi\|^2\right) .
    \label{eq:quantid}
\end{equation}
For \emph{Weyl quantization} we have $c=1/4$. This is the most relevant quantization prescription for quantum field theory, due to its compatibility with the path integral prescription. For normal-ordered (Wick ordered) quantization, we have $c=0$. For anti-normal-ordered (anti-Wick ordered) quantization, we have $c=1/2$. We also say that $F_{\xi}$ is the Weyl/normal-ordered/anti-normal-ordered symbol of the operator $\wq{F_{\xi}}$. In the following, we are exclusively interested in Weyl quantization, the standard prescription in quantum field theory.

Since Weyl quantization preserves the exponential function, we can interpret equation~\eqref{eq:quantid} as the characteristic function of a matrix element of the spectral measure of the operator $\wq{D_{\xi}}$. Also replacing $\xi$ by $s\xi$, we rewrite it to this end as,
\begin{equation}
    \langle \coh_{\gamma}, \exp\left(\im s \wq{D_{\xi}}\right)\, \coh_{\beta}\rangle
    =\langle \coh_{\gamma},\coh_{\beta}\rangle
    \exp\left(\im s D_{\xi}(\beta,\gamma)\right) \exp\left(-\frac{s^2}{4}\|\xi\|^2\right) .
    \label{eq:charspec}
\end{equation}
Assuming the spectral measure to be absolutely continuous with respect to the Lebesgue measure and setting $\mu=D_{\xi}(\beta,\gamma)$ as well as $\sigma^2=\frac12 \|\xi\|^2$ we must have,
\begin{equation}
    \int_{-\infty}^{\infty}\xd q\, e^{\im s q} B_{\xi}(q)(\coh_{\gamma},\coh_{\beta})
    =\langle \coh_{\gamma},\coh_{\beta}\rangle
    \exp\left(\im s \mu -\frac12 \sigma^2 s^2\right) ,
\end{equation}
where $B_{\xi}(q):\cD\times\cD\to\C$ is our notation for the sesquilinear form determining the spectral measure. We recognize the characteristic function as the complex multiple of that of the Gaussian measure with mean $\mu$ and variance $\sigma^2$. That is,
\begin{equation}
    B_{\xi}(q)(\coh_{\gamma},\coh_{\beta})=\langle \coh_{\gamma},\coh_{\beta}\rangle \frac{1}{\sqrt{\pi}\|\xi\|}
    \exp\left(-\frac{1}{\|\xi\|^2}\left(D_{\xi}(\beta,\gamma)-q \right)^2\right) .
    \label{eq:smeasure}
\end{equation}

%!TEX root = main.tex

\section{POVM and spectral decompositions}
\label{sec:povmdec}

In the following we show rigorously that expression~\eqref{eq:smeasure} determines the spectral decomposition of $\wq{D_{\xi}}$. Moreover, we exhibit a family of positive-operator-valued measure (POVM) decompositions with the spectral decomposition as a limit. We denote by $\cB$ the space of bounded operators on $\cH$, equipped with the operator norm $\|\cdot\|_{\nop}$.

\subsection{Pointwise measure}

It is convenient to introduce a generalization of the form $B_\xi(q)$, involving an additional parameter $\epsilon\ge 0$ as follows,
\begin{align}
    B_{\xi}^{\epsilon}(q)(\coh_{\gamma},\coh_\beta)
    \defeq \langle \coh_{\gamma},\coh_{\beta}\rangle\,
    \frac{1}{\sqrt{\pi (\|\xi\|^2+\epsilon^2)}}
    \exp\left(-\frac{1}{\|\xi\|^2+\epsilon^2} \left(D_{\xi}(\beta,\gamma)-q\right)^2\right) .
    \label{eq:defbepsilon}
\end{align}

The previous form is recovered in the case $\epsilon=0$, $B_{\xi}(q)=B_{\xi}^{0}(q)$. Note that $B_{\xi}^{\epsilon}(q)$ is also well-defined if $\xi=0$, as long as $\epsilon>0$. In that case,
\begin{equation}
    B_{0}^{\epsilon}(q)(\coh_{\gamma},\coh_\beta)
    =\langle \coh_{\gamma},\coh_{\beta}\rangle \frac{e^{-q^2/\epsilon^2}}{\sqrt{\pi} \epsilon} .
\end{equation}
The following presentation implies that $B_{\xi}^{\epsilon}(q)$ is a hermitian sesquilinear form on $\cD$.

\begin{lem}
    \begin{multline}
        B_{\xi}^{\epsilon}(q)(\coh_{\gamma},\coh_\beta)
        = \frac{\|\xi\| e^{-\frac{q^2}{\|\xi\|^2+\epsilon^2}}}{\pi (\|\xi\|^2+\epsilon^2)}  \int_{-\infty}^{\infty}\xd r\, e^{-\frac{\|\xi\|^2 r^2}{\|\xi\|^2+\epsilon^2}}\, \int_{\xL}\xd\nu(\phi) \nonumber \\
        \langle\coh_{\gamma},\coh_{\phi+\frac{2 \|\xi\| r}{\|\xi\|^2+\epsilon^2}\xi+\frac{2 q}{\|\xi\|^2+\epsilon^2} J\xi}\rangle \langle\coh_{\phi-\frac{2 \|\xi\| r}{\|\xi\|^2+\epsilon^2}\xi+\frac{2 q}{\|\xi\|^2+\epsilon^2} J\xi},\coh_{\beta}\rangle \nonumber .
    \end{multline}
\end{lem}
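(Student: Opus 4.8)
The plan is to verify the claimed identity by evaluating the right-hand side and reducing it to the defining expression \eqref{eq:defbepsilon}. First I would abbreviate the two shift amounts as $a\defeq \frac{2\|\xi\|r}{\|\xi\|^2+\epsilon^2}$ and $b\defeq \frac{2q}{\|\xi\|^2+\epsilon^2}$, and write the coherent-state overlaps in the integrand explicitly. Combining the wave-function form of the coherent states with the reproducing property gives $\langle\coh_\gamma,\coh_\psi\rangle=\exp(\frac12\{\psi,\gamma\})$ for any index $\psi$, so each factor in the integrand becomes a single exponential. Since the coefficients $a,b$ are real, real-linearity of $\{\cdot,\cdot\}$ in its first slot lets me separate the $\phi$-dependent part, $\exp(\frac12\{\phi,\gamma\}+\frac12\{\beta,\phi\})$, from the $\phi$-independent contributions carried by the shifts $\pm a\xi+bJ\xi$.

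Then I would perform the inner Gaussian integral over $\phi\in\xL$. The crucial point is that the $\phi$-dependence factors off exactly as $\langle\coh_\gamma,\coh_\phi\rangle\langle\coh_\phi,\coh_\beta\rangle$, so the completeness relation \eqref{eq:cohcompl} collapses the $\phi$-integral to $\langle\coh_\gamma,\coh_\beta\rangle$. This leaves $\langle\coh_\gamma,\coh_\beta\rangle$ times $\exp\big(\frac{a}{2}(\{\xi,\gamma\}-\{\beta,\xi\})+\frac{b}{2}(\{J\xi,\gamma\}+\{\beta,J\xi\})\big)$, with only $a$ (hence $r$) still to be integrated. Interchanging the $\phi$- and $r$-integrations is harmless, as all integrands are absolutely convergent Gaussians; I also assume $\xi\neq0$, which is implicit in the stated form.

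The step I expect to be the main obstacle is the algebraic simplification of this exponent. Using the decomposition \eqref{eq:lip} of $\{\cdot,\cdot\}$ into $g$ and $\omega$, the identities $g(J\xi,\cdot)=D_\xi$ and $\omega(J\xi,\cdot)=-\frac12 g(\xi,\cdot)$, and the definition \eqref{eq:betagammaint} of $(\beta,\gamma)$, I would establish $D_\xi(\beta,\gamma)=\frac12\big((D_\xi(\beta)+D_\xi(\gamma))-\im(g(\xi,\gamma)-g(\xi,\beta))\big)$ and then show that the bracketed combination in the exponent equals $D_\xi(\beta,\gamma)\,(b+\im a)$. The point of delicacy is that the "real" pieces $g(\xi,\gamma)-g(\xi,\beta)$ must recombine with the "imaginary" pieces $D_\xi(\beta)+D_\xi(\gamma)$ precisely so as to leave the single holomorphic combination $D_\xi(\beta,\gamma)$; this is exactly where the complex structure $J$ and the projectors $\frac12(1\pm\im J)$ do the work.

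Finally I would carry out the remaining $r$-integral. Substituting $b+\im a=\frac{2}{\|\xi\|^2+\epsilon^2}(q+\im\|\xi\|r)$, the exponent becomes Gaussian in $r$; completing the square (with a purely imaginary shift of the contour, justified by Gaussian decay) and using $\int_{-\infty}^\infty e^{-\|\xi\|^2 r^2/(\|\xi\|^2+\epsilon^2)}\,\xd r=\sqrt{\pi(\|\xi\|^2+\epsilon^2)}/\|\xi\|$ produces exactly the prefactor $1/\sqrt{\pi(\|\xi\|^2+\epsilon^2)}$ and the Gaussian $\exp(-(D_\xi(\beta,\gamma)-q)^2/(\|\xi\|^2+\epsilon^2))$ of \eqref{eq:defbepsilon}, completing the proof.
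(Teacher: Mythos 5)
Your proof is correct. The paper actually states this lemma without any proof, so there is nothing to compare against line by line; but your computation --- writing each overlap as $\exp(\tfrac12\{\cdot,\cdot\})$, collapsing the $\phi$-integral via the completeness relation \eqref{eq:cohcompl}, reducing the leftover exponent to $(b+\im a)D_{\xi}(\beta,\gamma)$ via the identity $D_{\xi}(\beta,\gamma)=\tfrac{\im}{2}(\{\beta,\xi\}-\{\xi,\gamma\})$, and then performing the Gaussian $r$-integral --- is exactly the verification the author leaves implicit, and it matches the style of the appendix proof of Lemma~\ref{lem:prod}. Your caveat $\xi\neq 0$ is appropriate, since for $\xi=0$ the right-hand side degenerates.
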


We will later show that it is also positive. 

\begin{lem}
    \label{lem:weakcont}
    The hermitian sesquilinear forms $B_{\xi}^{\epsilon}(q)$ form a weakly continuous family in the parameters $(\xi,\epsilon,q)$. That is, for fixed $\psi,\eta\in\cD$, the function
    \begin{equation}
        (L\times\R_0^+\times\R)\setminus (\{0\}\times\{0\}\times\R)\to\C:
        (\xi,\epsilon,q)\mapsto B_{\xi}^{\epsilon}(q)(\eta,\psi)
    \end{equation}
    is continuous.
\end{lem}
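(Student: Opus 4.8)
The plan is to reduce the statement to the case of matrix elements between coherent states and then recognize the resulting scalar function as a composition of elementary continuous maps. Since $\cD$ is the subspace generated by the coherent states, any fixed $\psi,\eta\in\cD$ can be written as finite linear combinations $\psi=\sum_j b_j\coh_{\beta_j}$ and $\eta=\sum_i a_i\coh_{\gamma_i}$. By the hermitian sesquilinearity of $B_\xi^\epsilon(q)$ one then has $B_\xi^\epsilon(q)(\eta,\psi)=\sum_{i,j}\overline{a_i}\,b_j\,B_\xi^\epsilon(q)(\coh_{\gamma_i},\coh_{\beta_j})$, a finite sum. Since a finite linear combination of continuous functions is continuous, it suffices to prove continuity of $(\xi,\epsilon,q)\mapsto B_\xi^\epsilon(q)(\coh_\gamma,\coh_\beta)$ for arbitrary fixed $\beta,\gamma\in L$.

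Next I would factor out the constant $\langle\coh_\gamma,\coh_\beta\rangle$, which does not depend on $(\xi,\epsilon,q)$, so that the task reduces to showing continuity of the scalar function
\begin{equation*}
    f(\xi,\epsilon,q)=\frac{1}{\sqrt{\pi(\|\xi\|^2+\epsilon^2)}}\exp\left(-\frac{\left(D_\xi(\beta,\gamma)-q\right)^2}{\|\xi\|^2+\epsilon^2}\right)
\end{equation*}
on the stated domain. I would then isolate the continuous building blocks. Writing $A\defeq\|\xi\|^2+\epsilon^2$, the map $(\xi,\epsilon)\mapsto A$ is continuous and nonnegative by continuity of the Hilbert-space norm, and $A=0$ holds exactly when $\xi=0$ and $\epsilon=0$, which is precisely the excluded locus; hence $A>0$ everywhere on the domain. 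The map $\xi\mapsto D_\xi(\beta,\gamma)=2\omega(\xi,(\beta,\gamma))$ is continuous (indeed complex-linear) in $\xi$ for the fixed element $(\beta,\gamma)\in L^\C$, since $\omega$ is a continuous bilinear form. Consequently $(\xi,\epsilon,q)\mapsto(A,\,D_\xi(\beta,\gamma)-q)$ is a continuous map into $(0,\infty)\times\C$.

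Finally I would conclude by composition: the function $(A,z)\mapsto\frac{1}{\sqrt{\pi A}}\exp(-z^2/A)$ is continuous on $(0,\infty)\times\C$, the prefactor being continuous for $A>0$ and the exponent jointly continuous in both arguments. Composing this with the continuous map of the previous step yields continuity of $f$, and hence of $B_\xi^\epsilon(q)(\eta,\psi)$. I do not expect any deep obstacle here; the one point genuinely requiring care is the exclusion of the singular set $\{0\}\times\{0\}\times\R$. Removing it is exactly what keeps $A$ locally bounded away from $0$, so that the prefactor $1/\sqrt{\pi A}$ stays continuous and the Gaussian does not degenerate; without this restriction $f$ would have no limit as $(\xi,\epsilon)\to(0,0)$.
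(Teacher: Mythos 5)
Your proposal is correct and follows essentially the same route as the paper: reduce to matrix elements between coherent states (using that $\cD$ is spanned by them and sesquilinearity gives a finite sum) and then read off continuity from the defining expression \eqref{eq:defbepsilon}. The paper leaves the "inspection" step implicit, whereas you spell out the relevant details — continuity of $\|\xi\|^2+\epsilon^2$ and of $\xi\mapsto D_\xi(\beta,\gamma)$, and the role of the excluded set $\{0\}\times\{0\}\times\R$ in keeping the denominator bounded away from zero — which is a faithful elaboration rather than a different argument.
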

\begin{proof}
    It is sufficient to consider the special case where $\psi$ and $\eta$ are coherent states. The result then follows from inspection of the defining expression (\ref{eq:defbepsilon}).
\end{proof}

We define the product $A'\star A$ of sesquilinear forms $A,A'$ on $\cD$ as follows, if the integral exists,
\begin{equation}
    (A'\star A)(\eta,\psi)\defeq \int_{\xL}\xd\nu(\phi)\, A'(\eta,\coh_\phi) A(\coh_\phi,\psi) .
\end{equation}

\begin{lem}
    \label{lem:prod}
    If either $\xi,\xi'$ are real linearly independent or $\epsilon>0$ or $\epsilon'>0$, then,
    \begin{multline}
        (B_{\xi'}^{\epsilon'}(q')\star B_{\xi}^{\epsilon}(q))(\coh_\gamma, \coh_\beta)
        =\langle \coh_\gamma, \coh_\beta\rangle
        \frac{1}{\pi\sqrt{(\|\xi\|^2+\epsilon^2) (\|\xi'\|^2+\epsilon'^2)- \{\xi,\xi'\}^2}}\\
        \exp\left(\frac{1}{(\|\xi\|^2+\epsilon^2) (\|\xi'\|^2+\epsilon'^2)- \{\xi,\xi'\}^2}
           \left(
           2\{\xi,\xi'\}\left(D_{\xi}(\beta,\gamma)-q\right)\left(D_{\xi'}(\beta,\gamma)-q'\right)\right. \right. \\
           \left. \left. -(\|\xi'\|^2+\epsilon'^2)\left(D_{\xi}(\beta,\gamma)-q\right)^2
           -(\|\xi\|^2+\epsilon^2)\left(D_{\xi'}(\beta,\gamma)-q'\right)^2
       \right)\right) .
    \label{eq:prod}
    \end{multline}
    In particular, if $\omega(\xi,\xi')=0$, then $B_{\xi'}^{\epsilon'}(q')\star B_{\xi}^{\epsilon}(q)=B_{\xi}^{\epsilon}(q)\star B_{\xi'}^{\epsilon'}(q')$.
\end{lem}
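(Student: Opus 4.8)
The plan is to evaluate the defining integral of the $\star$-product directly, reducing it to a finite-dimensional Gaussian integral. Writing $(B_{\xi'}^{\epsilon'}(q')\star B_{\xi}^{\epsilon}(q))(\coh_\gamma,\coh_\beta)=\int_{\xL}\xd\nu(\phi)\,B_{\xi'}^{\epsilon'}(q')(\coh_\gamma,\coh_\phi)\,B_{\xi}^{\epsilon}(q)(\coh_\phi,\coh_\beta)$ and inserting the definition \eqref{eq:defbepsilon}, the integrand equals $\frac{1}{\pi\sqrt{(\|\xi\|^2+\epsilon^2)(\|\xi'\|^2+\epsilon'^2)}}$ times the overlaps $\langle\coh_\gamma,\coh_\phi\rangle\langle\coh_\phi,\coh_\beta\rangle$ and the two Gaussian factors $\exp(-(D_{\xi'}(\phi,\gamma)-q')^2/(\|\xi'\|^2+\epsilon'^2))$ and $\exp(-(D_{\xi}(\beta,\phi)-q)^2/(\|\xi\|^2+\epsilon^2))$. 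Since $D_\xi,D_{\xi'}$ are linear, the integrand depends on $\phi$ only through the overlaps and these two linear functionals, so the $\phi$-integral is effectively a finite-dimensional Gaussian.

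To carry it out I would linearize each Gaussian factor by the Fourier identity $e^{-x^2/s^2}=\frac{s}{2\sqrt{\pi}}\int_\R\xd k\,e^{-s^2k^2/4+\im kx}$, introducing two auxiliary real variables $k,k'$. The $\phi$-dependence of the integrand then becomes the exponential of a linear functional of $\phi$ times the overlaps; using \eqref{eq:betagammaint} and \eqref{eq:lip}, the $\phi$-parts of $D_\xi(\beta,\phi)$ and $D_{\xi'}(\phi,\gamma)$ are $-\tfrac{\im}{2}\{\xi,\phi\}$ and $\tfrac{\im}{2}\{\phi,\xi'\}$, so the $\phi$-integral collapses by the shifted completeness relation $\int_{\xL}\xd\nu(\phi)\,\langle\coh_{\gamma-k'\xi'},\coh_\phi\rangle\langle\coh_\phi,\coh_{\beta+k\xi}\rangle=\langle\coh_{\gamma-k'\xi'},\coh_{\beta+k\xi}\rangle$ coming from \eqref{eq:cohcompl}. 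Expanding the resulting overlap yields $\langle\coh_\gamma,\coh_\beta\rangle$, the terms linear in $(k,k')$, and the cross term $-\tfrac{kk'}{2}\{\xi,\xi'\}$ that supplies the off-diagonal coupling. The remaining integral over $(k,k')$ is a two-dimensional Gaussian with quadratic form $-\tfrac14(k,k')\,M\,(k,k')^{\mathsf T}$ and linear term $\im\,(a,b)(k,k')^{\mathsf T}$, where $a=D_\xi(\beta,\gamma)-q$, $b=D_{\xi'}(\beta,\gamma)-q'$ and
\[
M=\begin{pmatrix}\|\xi\|^2+\epsilon^2 & \{\xi,\xi'\}\\ \{\xi,\xi'\} & \|\xi'\|^2+\epsilon'^2\end{pmatrix},\qquad \det M=(\|\xi\|^2+\epsilon^2)(\|\xi'\|^2+\epsilon'^2)-\{\xi,\xi'\}^2 .
\]
Completing the square, with the factor $\im$ producing the overall minus sign through $\im^2=-1$, gives the prefactor $1/(\pi\sqrt{\det M})$ and the exponential $\exp(-(a,b)M^{-1}(a,b)^{\mathsf T})$; inverting the $2\times 2$ matrix $M$ reproduces exactly \eqref{eq:prod}.

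The main obstacle I expect is twofold. The delicate bookkeeping is tracking the complex interpolated arguments of \eqref{eq:betagammaint} through the $\phi$-integration, so that the off-diagonal entry of $M$ emerges as the full complex pairing $\{\xi,\xi'\}$ rather than only its real or imaginary part; this is where the non-commutativity of the two forms is encoded. The analytic point is convergence and the attendant use of Fubini: the $(k,k')$-Gaussian converges precisely when $\mathrm{Re}\,M$ is positive definite, i.e. when $(\|\xi\|^2+\epsilon^2)(\|\xi'\|^2+\epsilon'^2)-g(\xi,\xi')^2>0$ (recall $\|\xi\|^2=g(\xi,\xi)$ and $\mathrm{Re}\,\{\xi,\xi'\}=g(\xi,\xi')$). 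By the Cauchy--Schwarz inequality for the positive-definite form $g$, this quantity is non-negative and vanishes exactly when $\xi,\xi'$ are real linearly dependent and $\epsilon=\epsilon'=0$; hence the stated hypothesis is precisely the condition for convergence. The same condition guarantees $\det M\neq 0$, so that the principal square root in \eqref{eq:prod} is well-defined, and it underlies the absolute convergence needed to justify interchanging the $\phi$- and $(k,k')$-integrations.

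For the final assertion, when $\omega(\xi,\xi')=0$ the decomposition \eqref{eq:lip} gives $\{\xi,\xi'\}=g(\xi,\xi')=\{\xi',\xi\}$, so $M$, $\det M$, and the pair $(a,b)$ are all invariant under exchanging the primed and unprimed data. The right-hand side of \eqref{eq:prod} is therefore symmetric under that exchange, which is exactly the statement $B_{\xi'}^{\epsilon'}(q')\star B_{\xi}^{\epsilon}(q)=B_{\xi}^{\epsilon}(q)\star B_{\xi'}^{\epsilon'}(q')$.
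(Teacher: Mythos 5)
Your proposal is correct and follows essentially the same route as the paper's Appendix~\ref{sec:app} computation: the Fourier linearization of the two Gaussian factors (your variables $k,k'$ are the paper's $2t,2t'$ via the representation \eqref{eq:intrepsql}), absorption of the resulting linear exponentials into shifted coherent states, collapse of the $\phi$-integral by the completeness relation \eqref{eq:cohcompl}, and a final two-dimensional Gaussian integral whose matrix $M$ carries the full complex pairing $\{\xi,\xi'\}$ off the diagonal. Your convergence discussion via Cauchy--Schwarz for $g$ and the symmetry argument for the $\omega(\xi,\xi')=0$ case likewise match the paper's reasoning.
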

\begin{proof}
    Equation \eqref{eq:prod} is obtained by explicit calculation, see Appendix~\ref{sec:app}. It is valid as long as the denominators in the fractions are not zero. By the Cauchy-Schwarz inequality these are zero precisely if $\epsilon=\epsilon'=0$ and moreover $\xi'$ and $\xi$ are linearly dependent over $\R$, so that $\{\xi,\xi'\}=\|\xi\| \|\xi'\|$. If $\omega(\xi,\xi')=0$, then $\{\xi,\xi'\}=\{\xi',\xi\}$ so that the right-hand side of equation \eqref{eq:prod} becomes symmetric under interchange of $(\xi,\epsilon,q)$ and $(\xi',\epsilon',q')$.
\end{proof}

\begin{lem}
    \label{lem:projsumid}
    If either $\xi,\xi'$ are real linearly independent or $\epsilon>0$ or $\epsilon'>0$, then,
    \begin{multline}
        \int_{-\infty}^{\infty}\xd s\, \left(B_{\xi'}^{\epsilon'}(q-s) \star B_{\xi}^{\epsilon}(s)\right) (\coh_\gamma,\coh_\beta)\\
         =\langle \coh_\gamma, \coh_\beta\rangle
         \frac{1}{\sqrt{\pi}\sqrt{2\{\xi,\xi'\}+\|\xi\|^2+\|\xi'\|^2+\epsilon^2+\epsilon'^2}}\\
         \exp\left(-\frac{1}{2\{\xi,\xi'\}+\|\xi\|^2+\|\xi'\|^2+\epsilon^2+\epsilon'^2}
         \left(D_{\xi+\xi'}(\beta,\gamma)-q\right)^2\right) .
         \label{eq:projsumid}
    \end{multline}
    If, moreover, $\omega(\xi,\xi')=0$, then,
    \begin{equation}
        \int_{-\infty}^{\infty}\xd s\, \left(B_{\xi'}^{\epsilon'}(q-s) \star B_{\xi}^{\epsilon}(s)\right) (\coh_\gamma,\coh_\beta)
        =B_{\xi+\xi'}^{\epsilon''}(q)(\coh_\gamma,\coh_\beta) ,
        \label{eq:projcomid}
    \end{equation}
    where $\epsilon''^2=\epsilon^2 +\epsilon'^2$.
\end{lem}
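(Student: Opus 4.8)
The plan is to prove Lemma~\ref{lem:projsumid} by reducing it to the product formula of Lemma~\ref{lem:prod}, which I may assume. The key observation is that both sides of \eqref{eq:projsumid} are multiples of $\langle\coh_\gamma,\coh_\beta\rangle$, so after dividing out that factor I am left with a convolution-type integral over $s$ of the exponential appearing on the right-hand side of \eqref{eq:prod}. Concretely, I would start from the explicit expression in Lemma~\ref{lem:prod} with $q$ replaced by $s$ and $q'$ replaced by $q-s$, abbreviate $u\defeq D_\xi(\beta,\gamma)$, $u'\defeq D_{\xi'}(\beta,\gamma)$, $a\defeq\|\xi\|^2+\epsilon^2$, $b\defeq\|\xi'\|^2+\epsilon'^2$, $c\defeq\{\xi,\xi'\}$, and write the exponent as a quadratic form in the pair $(u-s,\,u'-q+s)$. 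The whole $s$-integral then has the shape of a Gaussian integral $\int_{-\infty}^\infty \xd s\, \exp(-Q(s))$ where $Q$ is a quadratic polynomial in $s$.

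First I would carry out the substitution $s\mapsto s$ and expand $Q(s)$, collecting the coefficient of $s^2$, of $s$, and the constant term. The coefficient of $s^2$ should come out (up to the common prefactor $1/(ab-c^2)$) to be $-(a+b+2c)/(ab-c^2)$, which after the standard one-dimensional Gaussian integration $\int \exp(-\alpha s^2+\beta s+\gamma_0)\,\xd s=\sqrt{\pi/\alpha}\,\exp(\beta^2/(4\alpha)+\gamma_0)$ produces the normalization $1/\sqrt{\pi}\,\sqrt{2c+\|\xi\|^2+\|\xi'\|^2+\epsilon^2+\epsilon'^2}$ announced on the right, since $a+b+2c=2\{\xi,\xi'\}+\|\xi\|^2+\|\xi'\|^2+\epsilon^2+\epsilon'^2$. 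Before integrating I should record that the convergence of the Gaussian integral requires the coefficient of $s^2$ to be strictly negative, i.e.\ $ab-c^2>0$ (guaranteed by the hypothesis and the Cauchy--Schwarz argument already used in the proof of Lemma~\ref{lem:prod}) together with $a+b+2c>0$; the latter follows because $a+b+2c=\|\xi+\xi'\|^2+\epsilon^2+\epsilon'^2\ge 0$ and vanishes only in a degenerate case excluded by the hypothesis.

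The genuinely delicate part, and the step I expect to be the main obstacle, is showing that the exponent collapses to exactly $-(D_{\xi+\xi'}(\beta,\gamma)-q)^2/(a+b+2c)$. After completing the square in $s$ and substituting the optimal $s_\ast$, the remaining constant exponent is a quadratic expression in $u$, $u'$, and $q$ with coefficients built from $a$, $b$, $c$; I must verify the algebraic identity that this expression equals $-(u+u'-q)^2/(a+b+2c)$, using the linearity $D_{\xi+\xi'}(\beta,\gamma)=D_\xi(\beta,\gamma)+D_{\xi'}(\beta,\gamma)=u+u'$. This is a purely mechanical but error-prone simplification in which the off-diagonal $2c\,(u-s)(u'-q+s)$ term must conspire with the diagonal terms so that all dependence on the difference $u-u'$ and on $c$ separately cancels, leaving only the symmetric combination $u+u'-q$; I would relegate the bulk of this computation to Appendix~\ref{sec:app}, as the excerpt already signals.

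Finally, to obtain the second assertion \eqref{eq:projcomid} under the extra hypothesis $\omega(\xi,\xi')=0$, I would note that this condition forces $\{\xi,\xi'\}=g(\xi,J\xi')$ to be real and symmetric, and in fact $\{\xi,\xi'\}=2\omega(\xi,J\xi')$, so that $2\{\xi,\xi'\}=\{\xi,\xi'\}+\{\xi',\xi\}$; more to the point, when $\omega(\xi,\xi')=0$ one has $\|\xi+\xi'\|^2=\|\xi\|^2+\|\xi'\|^2+2\{\xi,\xi'\}$ because the cross term $\{\xi,\xi'\}+\{\xi',\xi\}$ reduces to $2g(\xi,\xi')/2=2\{\xi,\xi'\}$ with no imaginary contribution. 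Substituting $\epsilon''^2=\epsilon^2+\epsilon'^2$ then makes the normalization and the exponent of the right-hand side of \eqref{eq:projsumid} coincide termwise with $B_{\xi+\xi'}^{\epsilon''}(q)(\coh_\gamma,\coh_\beta)$ as defined in \eqref{eq:defbepsilon}. I would present this as a short verification comparing the two closed forms, emphasizing that it is precisely the vanishing of the symplectic pairing that turns the sum rule into the clean composition law expressing the additivity of the field operator $\wq{D_\xi}+\wq{D_{\xi'}}=\wq{D_{\xi+\xi'}}$ at the level of their pointwise spectral measures.
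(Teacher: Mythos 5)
Your proposal follows essentially the same route as the paper's own proof: substitute the closed form of Lemma~\ref{lem:prod} with $(q,q')\to(s,q-s)$, perform the one-dimensional Gaussian integral in $s$ (whose normalization yields the $\sqrt{a+b+2c}$ prefactor), verify algebraically that the residual exponent collapses to $-(D_{\xi+\xi'}(\beta,\gamma)-q)^2/(a+b+2c)$, and obtain \eqref{eq:projcomid} by comparing with definition \eqref{eq:defbepsilon} via $\|\xi+\xi'\|^2=\|\xi\|^2+\|\xi'\|^2+2\{\xi,\xi'\}$ when $\omega(\xi,\xi')=0$. This is precisely the computation the paper carries out in Appendix~\ref{sec:app}, so the two arguments coincide.
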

\begin{proof}
    Equation \eqref{eq:projsumid} is obtained by explicit calculation from Lemma~\ref{lem:prod}, see Appendix~\ref{sec:app}. If $\omega(\xi,\xi')=0$, the right-hand side is recognized to take the form specified in equation \eqref{eq:projcomid}.
\end{proof}

\begin{lem}
    \label{lem:comp}
    If either $\epsilon>0$ or $\epsilon'>0$, then,
    \begin{equation}
    B_{\xi}^{\epsilon'}(q')\star B_{\xi}^{\epsilon}(q)
    = \frac{1}{\sqrt{\pi (\epsilon^2+\epsilon'^2)}}
    \exp\left(-\frac{1}{\epsilon^2+\epsilon'^2}(q-q')^2\right)
    B_{\xi}^{\epsilon''}\left(\frac{\epsilon'^2 q +\epsilon^2 q'}{\epsilon^2+\epsilon'^2}\right),
    \quad\text{with}\;\; \epsilon''^2=\frac{\epsilon^2 \epsilon'^2}{\epsilon^2+\epsilon'^2} .
    \end{equation}
\end{lem}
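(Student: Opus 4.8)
The plan is to obtain this as a direct specialization of Lemma~\ref{lem:prod} to the case $\xi'=\xi$, followed by completing the square in a single real variable. The hypothesis that $\epsilon>0$ or $\epsilon'>0$ is exactly what is needed for Lemma~\ref{lem:prod} to be applicable here: since $\xi'=\xi$ are trivially real linearly dependent, the only admissible branch of that lemma's hypothesis is the one requiring a nonzero $\epsilon$ or $\epsilon'$. This same condition guarantees that all the denominators appearing below are strictly positive, so no analytic subtlety arises and the entire argument is algebraic.

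First I would set $\xi'=\xi$ in equation~\eqref{eq:prod}, so that $\|\xi'\|=\|\xi\|$, $\{\xi,\xi'\}=\{\xi,\xi\}=\|\xi\|^2$, and $D_{\xi'}(\beta,\gamma)=D_{\xi}(\beta,\gamma)$. Writing $a=\|\xi\|^2$ and $u=D_{\xi}(\beta,\gamma)$ for brevity, the common denominator collapses to
\begin{equation}
    (a+\epsilon^2)(a+\epsilon'^2)-a^2 = a(\epsilon^2+\epsilon'^2)+\epsilon^2\epsilon'^2 \eqqcolon \Delta .
\end{equation}
The numerator of the exponent then reads $2a(u-q)(u-q')-(a+\epsilon'^2)(u-q)^2-(a+\epsilon^2)(u-q')^2$. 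I would first collect the three terms proportional to $a$ using the identity $2xy-x^2-y^2=-(x-y)^2$ with $x=u-q$ and $y=u-q'$, noting $x-y=q'-q$, which turns the numerator into $-a(q-q')^2-\epsilon'^2(u-q)^2-\epsilon^2(u-q')^2$.

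The one step requiring genuine care is completing the square in $u$ in the remaining $\epsilon$-weighted terms, via the quadratic identity
\begin{equation}
    \epsilon'^2(u-q)^2 + \epsilon^2(u-q')^2 = (\epsilon^2+\epsilon'^2)(u-Q)^2 + \frac{\epsilon^2\epsilon'^2}{\epsilon^2+\epsilon'^2}(q-q')^2 ,
\end{equation}
where $Q=\frac{\epsilon'^2 q+\epsilon^2 q'}{\epsilon^2+\epsilon'^2}$ is precisely the argument in the statement. Collecting the two contributions proportional to $(q-q')^2$ and dividing by $\Delta$, I would then use the factorization $\Delta/(\epsilon^2+\epsilon'^2)=a+\epsilon''^2$ (with $\epsilon''^2=\epsilon^2\epsilon'^2/(\epsilon^2+\epsilon'^2)$) to check that the coefficient of $(q-q')^2$ simplifies to $1/(\epsilon^2+\epsilon'^2)$, while the $u$-dependent piece becomes $-(u-Q)^2/(a+\epsilon''^2)$, which is exactly the exponent of $B_{\xi}^{\epsilon''}(Q)$. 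A parallel bookkeeping of the prefactors, splitting $1/(\pi\sqrt{\Delta})$ as $1/\sqrt{\pi(\epsilon^2+\epsilon'^2)}$ times $1/\sqrt{\pi(a+\epsilon''^2)}$ using the same factorization $(\epsilon^2+\epsilon'^2)(a+\epsilon''^2)=\Delta$, assembles the right-hand side exactly in the claimed form, with $1/\sqrt{\pi(\|\xi\|^2+\epsilon''^2)}$ recovered as the normalization of $B_{\xi}^{\epsilon''}$. The only real obstacle is thus the algebraic bookkeeping of the two completions of the square; once the variance $\epsilon''^2$ and the weighted mean $Q$ are correctly identified, everything matches.
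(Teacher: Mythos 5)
Your proposal is correct and follows exactly the paper's route: the paper's own proof simply states that the lemma "arises from a special case of Lemma~\ref{lem:prod}, recognizing the resulting expression as a hermitian sesquilinear form of the same family," and your specialization $\xi'=\xi$ with the two completions of the square (yielding the variance $\epsilon''^2=\epsilon^2\epsilon'^2/(\epsilon^2+\epsilon'^2)$, the weighted mean $Q$, and the factorization $\Delta=(\epsilon^2+\epsilon'^2)(\|\xi\|^2+\epsilon''^2)$) is precisely the omitted algebra, which checks out. Your observation that linear dependence of $\xi'=\xi$ forces the $\epsilon>0$ or $\epsilon'>0$ branch of Lemma~\ref{lem:prod} is also the right justification for the hypothesis.
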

\begin{proof}
    This arises from a special case of Lemma~\ref{lem:prod}, recognizing the resulting expression as a hermitian sesquilinear form of the same family.
\end{proof}

\begin{lem}
    \label{lem:pos}
    The hermitian sesquilinear forms $B_{\xi}^{\epsilon}(q)$ are positive.
\end{lem}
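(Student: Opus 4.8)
The plan is to reduce positivity of $B_\xi^\epsilon(q)$ to the positivity of a $\star$-square and then pass to the limit $\epsilon\to 0$. The crucial observation is that Lemma~\ref{lem:comp}, applied with $\epsilon'=\epsilon$ and $q'=q$, exhibits $B_\xi^\epsilon(q)$ itself as a $\star$-square up to a positive constant. Indeed, with those values the exponential prefactor becomes $1$ and $\epsilon''^2=\epsilon^2/2$, so that
\begin{equation}
    B_\xi^{\epsilon}(q)\star B_\xi^{\epsilon}(q)=\frac{1}{\epsilon\sqrt{2\pi}}\,B_\xi^{\epsilon/\sqrt2}(q) .
\end{equation}
Reparametrizing, this says that for every $\delta>0$ and every $\xi\in L$ (including $\xi=0$),
\begin{equation}
    B_\xi^{\delta}(q)=2\sqrt{\pi}\,\delta\,\bigl(B_\xi^{\delta\sqrt2}(q)\star B_\xi^{\delta\sqrt2}(q)\bigr),
\end{equation}
so that every member of the family with $\epsilon>0$ is a positive multiple of the $\star$-square of another hermitian member of the family (whose $\star$-square exists by Lemma~\ref{lem:comp}, since $\delta\sqrt2>0$).

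First I would record the elementary fact that the $\star$-square of any hermitian form is positive. If $A$ is a hermitian sesquilinear form on $\cD$ for which $A\star A$ exists, then for $\psi\in\cD$, using the definition of $\star$ and the hermiticity relation $A(\psi,\coh_\phi)=\overline{A(\coh_\phi,\psi)}$,
\begin{equation}
    (A\star A)(\psi,\psi)=\int_{\xL}\xd\nu(\phi)\,A(\psi,\coh_\phi)\,A(\coh_\phi,\psi)=\int_{\xL}\xd\nu(\phi)\,\bigl|A(\coh_\phi,\psi)\bigr|^2\ge 0 .
\end{equation}
For $\psi=\sum_i c_i\coh_{\beta_i}\in\cD$ the integrand is the modulus squared of the explicit Gaussian combination $\phi\mapsto\sum_i c_i A(\coh_\phi,\coh_{\beta_i})$, whose integrability against $\nu$ is exactly what makes the $\star$-product well defined. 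Combining this with the previous display (taking $A=B_\xi^{\delta\sqrt2}(q)$) shows $B_\xi^\delta(q)\ge 0$ for all $\delta>0$ and all $\xi\in L$.

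It then remains to treat the spectral density itself, i.e.\ the case $\epsilon=0$ with $\xi\ne 0$. Here I would invoke weak continuity. For fixed $\psi\in\cD$ and $\xi\ne 0$ the point $(\xi,0,q)$ lies in the domain of Lemma~\ref{lem:weakcont}, so $\epsilon\mapsto B_\xi^\epsilon(q)(\psi,\psi)$ is continuous on $\R_0^+$ and
\begin{equation}
    B_\xi^{0}(q)(\psi,\psi)=\lim_{\epsilon\to 0^+}B_\xi^{\epsilon}(q)(\psi,\psi)\ge 0 ,
\end{equation}
since every term on the right is nonnegative by the previous paragraph and nonnegativity is preserved under limits. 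This settles the remaining case and completes the argument.

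The only genuinely non-routine ingredient is recognizing the $\star$-square structure hidden in Lemma~\ref{lem:comp}; once that is in hand, positivity for $\epsilon>0$ is immediate and the boundary case $\epsilon=0$ follows by the soft continuity argument. I expect the only points requiring any care to be the convergence of the $\star$-integrals (already guaranteed by the hypotheses of Lemmas~\ref{lem:prod} and~\ref{lem:comp}) and the harmless interchange of limit and nonnegativity.
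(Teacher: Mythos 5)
Your proof is correct and follows essentially the same route as the paper: it uses Lemma~\ref{lem:comp} with $\epsilon'=\epsilon$, $q'=q$ to exhibit $B_{\xi}^{\epsilon}(q)$ as $2\sqrt{\pi}\epsilon\,B_{\xi}^{\sqrt{2}\epsilon}(q)\star B_{\xi}^{\sqrt{2}\epsilon}(q)$, observes that a $\star$-square of a hermitian form is positive, and handles $\epsilon=0$ by the weak continuity of Lemma~\ref{lem:weakcont}. The only difference is that you spell out the positivity of the $\star$-square explicitly, which the paper leaves implicit.
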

\begin{proof}
    If $\epsilon>0$ then, from Lemma~\ref{lem:comp} we have that
    \begin{equation}
        B_{\xi}^{\epsilon}(q)=2\sqrt{\pi}\epsilon\,  B_{\xi}^{\sqrt2\epsilon}(q)\star B_{\xi}^{\sqrt2\epsilon}(q) .
    \end{equation}
    That is, $B_{\xi}^{\epsilon}(q)$ is a positive multiple of a square of a hermitian sesquilinear form and is thus positive. Positivity of $B_{\xi}^{0}(q)$ follows by taking the limit $\epsilon\to 0$ of $B_{\xi}^{\epsilon}(q)(\psi,\psi)$ for $\psi\in\cD$ due to Lemma~\ref{lem:weakcont}.
\end{proof}

\begin{lem}
    \label{lem:unitaryintegral}
    For $\epsilon>0$ we have,
    \begin{equation}
        B_{\xi}^{\epsilon}(q)(\coh_{\gamma},\coh_{\beta})
        =\frac{1}{\pi}\,
        \int_{-\infty}^{\infty}\xd t\, e^{-\epsilon^2 t^2-2\im q t}
        \langle \coh_{\gamma},\wq{G_{\xi}^t}\coh_{\beta}\rangle .
        \label{eq:unitaryintegral}
    \end{equation}
    The operator $\wq{G_{\xi}^t}$ is unitary on $\cH$ and its Weyl symbol is given by,
    \begin{equation}
        G_{\xi}^t(\phi)\defeq \exp\left(2\im t (D_{\xi}(\phi)\right) .
        \label{eq:unitaryg}
    \end{equation}
\end{lem}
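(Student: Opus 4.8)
The plan is to read the claimed identity as a scalar equation for each fixed pair of coherent states $\coh_\gamma,\coh_\beta$, evaluate the right-hand side of \eqref{eq:unitaryintegral} directly using the characteristic-function identity \eqref{eq:charspec}, and recognize the outcome as the defining expression \eqref{eq:defbepsilon}. Since all matrix elements against coherent states are scalars, the asserted equation involves no interchange of an operator-valued integral with an inner product; it is a one-dimensional integral identity of complex numbers.

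First I would establish the two assertions about $\wq{G_{\xi}^t}$. The Weyl symbol $G_{\xi}^t(\phi)=\exp(2\im t\,D_{\xi}(\phi))$ has the exponential form $\exp(\im s D_{\xi})$ with $s=2t$, and since Weyl quantization preserves the exponential for linear observables, one has $\wq{G_{\xi}^t}=\exp(2\im t\,\wq{D_{\xi}})$. The field operator $\wq{D_{\xi}}$ is symmetric on $\cD$ and, being of the standard form $(a_{J\xi}+a^\dagger_{J\xi})/\sqrt2$, essentially self-adjoint; hence $\exp(2\im t\,\wq{D_{\xi}})$ is unitary by the functional calculus via Stone's theorem, which yields the unitarity claim.

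Next I would compute the matrix element. Applying \eqref{eq:charspec} with $s=2t$ gives
\begin{equation}
    \langle\coh_\gamma,\wq{G_{\xi}^t}\coh_\beta\rangle
    =\langle\coh_\gamma,\coh_\beta\rangle
    \exp\left(2\im t\, D_{\xi}(\beta,\gamma)\right)\exp\left(-t^2\|\xi\|^2\right).
\end{equation}
Substituting this into the right-hand side of \eqref{eq:unitaryintegral} and collecting the exponents, the $t$-integral reduces to the single Gaussian
\begin{equation}
    \frac{1}{\pi}\langle\coh_\gamma,\coh_\beta\rangle
    \int_{-\infty}^{\infty}\xd t\,
    \exp\left(-(\|\xi\|^2+\epsilon^2)t^2+2\im t\left(D_{\xi}(\beta,\gamma)-q\right)\right).
\end{equation}
Completing the square in $t$ and shifting the contour back to the real axis (legitimate because $\|\xi\|^2+\epsilon^2>0$), this evaluates to $\langle\coh_\gamma,\coh_\beta\rangle(\pi(\|\xi\|^2+\epsilon^2))^{-1/2}\exp(-(D_{\xi}(\beta,\gamma)-q)^2/(\|\xi\|^2+\epsilon^2))$, which is exactly \eqref{eq:defbepsilon}.

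The computation is essentially the inverse Fourier transform of a Gaussian, so no step is genuinely hard. The two points requiring care are the identification of $\wq{G_{\xi}^t}$ as a well-defined unitary operator, i.e. the essential self-adjointness of $\wq{D_{\xi}}$ needed both for the functional calculus and for the applicability of the symbol identity \eqref{eq:charspec}, and the justification of the contour shift in the Gaussian integral, whose absolute convergence is guaranteed precisely by the strict positivity of $\|\xi\|^2+\epsilon^2$ under the hypothesis $\epsilon>0$.
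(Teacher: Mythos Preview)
Your proof is correct and follows essentially the same approach as the paper: both arguments identify the integrand via the characteristic-function identity \eqref{eq:charspec} with $s=2t$ and reduce the claim to the standard Gaussian Fourier integral, the paper going from \eqref{eq:defbepsilon} to the integral representation and you going in the reverse direction. Your explicit justification of unitarity via essential self-adjointness of $\wq{D_{\xi}}$ and Stone's theorem is a welcome addition that the paper leaves implicit.
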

\begin{proof}
    As is easy to see, if $\epsilon>0$, the definition~\eqref{eq:defbepsilon} of $B_{\xi}^{\epsilon}(q)$ is equivalent to
    \begin{equation}
        B_{\xi}^{\epsilon}(q)(\coh_{\gamma},\coh_{\beta})
        =\frac{1}{\pi}\, \int_{-\infty}^{\infty}\xd t\, e^{-\epsilon^2 t^2-2\im q t}
         \langle \coh_{\gamma},\coh_{\beta}\rangle\,
         \exp\left(2\im t D_{\xi}(\beta,\gamma)-\frac{1}{4}(2 t \|\xi\|)^2\right) .
        \label{eq:intrepsql}        
    \end{equation}
    By comparison with expression \eqref{eq:charspec} we recognize the integrand (without the first exponential factor and with $s=2 t$) as the Weyl quantization of the observable \eqref{eq:unitaryg}.
\end{proof}

\begin{lem}
    \label{lem:bopq}
    For $\epsilon>0$ the positive hermitian sesquilinear form $B_{\xi}^\epsilon(q)$ on $\cD$ extends to a positive hermitian sesquilinear form on $\cH$. Moreover, there exists a unique bounded positive operator $\Pi_{\xi}^\epsilon(q)$ such that for any $\eta,\psi\in\cH$,
    \begin{equation}
        B_{\xi}^\epsilon(q)(\eta,\psi)=\langle\eta,\Pi_{\xi}^\epsilon(q)\psi\rangle,\quad\text{with}\quad
        \|\Pi_{\xi}^\epsilon(q)\|_{\nop}\le \frac{1}{\sqrt{\pi} \epsilon} .
    \end{equation}
\end{lem}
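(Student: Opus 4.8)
The plan is to read the operator directly off the integral representation provided by Lemma~\ref{lem:unitaryintegral}. Since $\epsilon>0$, I would introduce the candidate operator
\begin{equation}
  \Pi_{\xi}^\epsilon(q)\defeq \frac{1}{\pi}\int_{-\infty}^{\infty}\xd t\, e^{-\epsilon^2 t^2-2\im q t}\, \wq{G_{\xi}^t},
\end{equation}
understood as a weak (Pettis) integral, i.e.\ through the sesquilinear form on $\cH$ defined for $\eta,\psi\in\cH$ by
\begin{equation}
  \tilde B(\eta,\psi)\defeq \frac{1}{\pi}\int_{-\infty}^{\infty}\xd t\, e^{-\epsilon^2 t^2-2\im q t}\, \langle\eta,\wq{G_{\xi}^t}\psi\rangle .
\end{equation}
The integrand is bounded in modulus by $e^{-\epsilon^2 t^2}\|\eta\|_{\cH}\|\psi\|_{\cH}$, since each $\wq{G_{\xi}^t}$ is unitary, and it is measurable in $t$ (indeed continuous, as $t\mapsto\wq{G_{\xi}^t}$ is the strongly continuous one-parameter unitary group generated by $\wq{D_{\xi}}$). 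As $e^{-\epsilon^2 t^2}$ is integrable for $\epsilon>0$, the integral converges absolutely and $\tilde B$ is a well-defined sesquilinear form on $\cH$.

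The same majorant yields the norm bound: I would estimate
\begin{equation}
  |\tilde B(\eta,\psi)|\le \frac{1}{\pi}\left(\int_{-\infty}^{\infty}\xd t\, e^{-\epsilon^2 t^2}\right)\|\eta\|_{\cH}\|\psi\|_{\cH}=\frac{1}{\sqrt{\pi}\,\epsilon}\,\|\eta\|_{\cH}\|\psi\|_{\cH},
\end{equation}
so $\tilde B$ is bounded with constant $1/(\sqrt{\pi}\,\epsilon)$. By the Riesz representation theorem for bounded sesquilinear forms there is then a unique bounded operator $\Pi_{\xi}^\epsilon(q)$ with $\tilde B(\eta,\psi)=\langle\eta,\Pi_{\xi}^\epsilon(q)\psi\rangle$ and $\|\Pi_{\xi}^\epsilon(q)\|_{\nop}\le 1/(\sqrt{\pi}\,\epsilon)$. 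Comparison with Lemma~\ref{lem:unitaryintegral} shows that the matrix elements of $\tilde B$ on coherent states coincide with $B_{\xi}^\epsilon(q)(\coh_\gamma,\coh_\beta)$; since the coherent states generate $\cD$ and both forms are sesquilinear, $\tilde B$ agrees with $B_{\xi}^\epsilon(q)$ on all of $\cD\times\cD$. Because $\cD$ is dense in $\cH$ and $\tilde B$ is continuous, $\tilde B$ is the unique bounded extension of $B_{\xi}^\epsilon(q)$ to $\cH$.

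It remains to transport positivity and hermiticity. By Lemma~\ref{lem:pos} the form $B_{\xi}^\epsilon(q)$ is positive and hermitian on $\cD$; both properties pass to the continuous extension $\tilde B$ by density, so that $\langle\psi,\Pi_{\xi}^\epsilon(q)\psi\rangle=\tilde B(\psi,\psi)\ge 0$ for every $\psi\in\cH$, whence $\Pi_{\xi}^\epsilon(q)$ is a positive operator. I expect the only delicate point to be the handling of the weak integral: one must ensure that it genuinely defines a \emph{bounded} operator rather than merely a densely defined form. This is exactly what the uniform unitarity bound $\|\wq{G_{\xi}^t}\|_{\nop}=1$, combined with the integrability of the Gaussian weight, secures, so that the Riesz representation step applies without any further domain considerations. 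The remaining verifications are routine density and dominated-convergence arguments.
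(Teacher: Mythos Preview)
Your proof is correct and essentially mirrors the paper's: both rely on the integral representation of Lemma~\ref{lem:unitaryintegral} and the unitarity of $\wq{G_{\xi}^t}$ to obtain the bound $1/(\sqrt{\pi}\,\epsilon)$, the only difference being that the paper bounds the diagonal form $B_{\xi}^{\epsilon}(q)(\psi,\psi)$ on $\cD$ and then extends, whereas you define the bounded form on all of $\cH$ at once and then verify it restricts correctly. One minor caveat: invoking strong continuity of $t\mapsto\wq{G_{\xi}^t}$ as the group ``generated by $\wq{D_{\xi}}$'' presumes essential self-adjointness of $\wq{D_{\xi}}$, which the paper only establishes later via Theorem~\ref{thm:povmspecdec}; it is cleaner to infer continuity (hence measurability) directly from the explicit coherent-state matrix elements \eqref{eq:charspec} together with density and the uniform bound $\|\wq{G_{\xi}^t}\|_{\nop}=1$.
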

\begin{proof}
    Since $\wq{G_{\xi}^t(q)}$ of Lemma~\ref{lem:unitaryintegral} is a unitary operator, we have from the Cauchy-Schwarz inequality $|\langle\psi,\wq{G_{\xi}^t(q)}\psi\rangle|\le \langle\psi,\psi\rangle$ for $\psi\in\cH$. This yields from equation~\eqref{eq:unitaryintegral} for $\psi\in\cD$ the bound,
    \begin{equation}
        B_{\xi}^\epsilon(q)(\psi,\psi)
        \le \frac{1}{\pi}\, \int_{-\infty}^{\infty}\xd t\, e^{-\epsilon^2 t^2} \langle\psi,\psi\rangle
        =\frac{1}{\sqrt{\pi}\epsilon} \langle\psi,\psi\rangle .
    \end{equation}
    This implies that $B_{\xi}^\epsilon(q)(\psi,\psi)$ extends to a positive hermitian sesquilinear form on $\cH$ with the same bound. Moreover, there exists thus a unique bounded positive operator $\Pi_{\xi}^{\epsilon}(q)$ on $\cH$, related to $B_{\xi}^\epsilon(q)(\psi,\psi)$ as stated. The inequality
    \begin{equation}
        \langle\psi,\Pi_{\xi}^\epsilon(q) \psi\rangle \le
        \frac{1}{\sqrt{\pi}\epsilon} \langle\psi,\psi\rangle
    \end{equation}
    for $\psi\in\cH$ then implies the stated bound on $\Pi_{\xi}^{\epsilon}(q)$.
\end{proof}

\begin{lem}
    \label{lem:strongcont}
    For $\epsilon>0$ the positive operators $\Pi_{\xi}^{\epsilon}(q)$ form a strongly continuous family in the parameters $(\xi,\epsilon,q)$. That is, for fixed $\psi\in\cH$, the function
    \begin{equation}
        L\times\R^+\times\R \to\cH:
        (\xi,\epsilon,q)\mapsto \Pi_{\xi}^{\epsilon}(q) \psi
    \end{equation}
    is continuous.
\end{lem}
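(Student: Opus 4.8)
The plan is to deduce strong continuity from the combination of three already-available facts: the weak continuity of the forms on the dense domain (Lemma~\ref{lem:weakcont}), the uniform operator bound (Lemma~\ref{lem:bopq}), and the self-composition identity contained in Lemma~\ref{lem:comp}, together with the elementary Hilbert-space principle that weak convergence of a sequence of vectors combined with convergence of their norms implies norm convergence. Since $L$ is a metric space and $\R^+\times\R$ is metric, it suffices to test continuity along sequences $(\xi_n,\epsilon_n,q_n)\to(\xi,\epsilon,q)$ with all $\epsilon_n,\epsilon>0$.

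First I would upgrade the weak continuity from $\cD$ to all of $\cH$. On a neighborhood of the limit point on which $\epsilon\ge\epsilon_0>0$, Lemma~\ref{lem:bopq} gives the locally uniform bound $\|\Pi_\xi^\epsilon(q)\|_{\nop}\le 1/(\sqrt\pi\,\epsilon_0)$. Approximating arbitrary $\eta,\psi\in\cH$ by vectors in the dense subspace $\cD$ and inserting this uniform bound into the standard three-term estimate, the continuity of $(\xi,\epsilon,q)\mapsto B_\xi^\epsilon(q)(\eta,\psi)=\langle\eta,\Pi_\xi^\epsilon(q)\psi\rangle$ known on $\cD\times\cD$ extends to all of $\cH\times\cH$. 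This yields weak continuity of the operator family, i.e.\ $\langle\eta,\Pi_{\xi_n}^{\epsilon_n}(q_n)\psi\rangle\to\langle\eta,\Pi_\xi^\epsilon(q)\psi\rangle$ for every fixed $\eta,\psi\in\cH$.

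Next I would obtain convergence of the norms. The key observation is that the $\star$-product of forms corresponds to composition of the associated operators: inserting the completeness relation~\eqref{eq:cohcompl} into the definition of $\star$ shows that $B_{\xi}^{\epsilon'}(q')\star B_{\xi}^{\epsilon}(q)$ is the form of $\Pi_{\xi}^{\epsilon'}(q')\Pi_{\xi}^{\epsilon}(q)$. Specializing Lemma~\ref{lem:comp} to $\epsilon'=\epsilon$ and $q'=q$, and using that coherent states are total, then gives the operator identity $(\Pi_\xi^\epsilon(q))^2=\tfrac{1}{\sqrt{2\pi}\,\epsilon}\,\Pi_\xi^{\epsilon/\sqrt2}(q)$. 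Hence $\|\Pi_\xi^\epsilon(q)\psi\|^2=\langle\psi,(\Pi_\xi^\epsilon(q))^2\psi\rangle=\tfrac{1}{\sqrt{2\pi}\,\epsilon}\,B_\xi^{\epsilon/\sqrt2}(q)(\psi,\psi)$, and applying the weak continuity of the previous step (now with the width $\epsilon/\sqrt2$) shows $\|\Pi_{\xi_n}^{\epsilon_n}(q_n)\psi\|\to\|\Pi_\xi^\epsilon(q)\psi\|$. I would then conclude from the identity
\begin{equation*}
    \|\Pi_{\xi_n}^{\epsilon_n}(q_n)\psi-\Pi_\xi^\epsilon(q)\psi\|^2
    =\|\Pi_{\xi_n}^{\epsilon_n}(q_n)\psi\|^2
    -2\,\mathrm{Re}\,\langle\Pi_\xi^\epsilon(q)\psi,\Pi_{\xi_n}^{\epsilon_n}(q_n)\psi\rangle
    +\|\Pi_\xi^\epsilon(q)\psi\|^2 ,
\end{equation*}
whose right-hand side tends to $\|\Pi_\xi^\epsilon(q)\psi\|^2-2\|\Pi_\xi^\epsilon(q)\psi\|^2+\|\Pi_\xi^\epsilon(q)\psi\|^2=0$ by norm convergence of the first term and weak convergence of the middle term (taking $\eta=\Pi_\xi^\epsilon(q)\psi$).

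I expect the genuine content, and hence the main obstacle, to be precisely the passage from weak to strong continuity. Weak continuity together with the uniform bound is by itself far too weak; what rescues the argument is positivity of the operators combined with the explicit self-composition formula $(\Pi_\xi^\epsilon(q))^2\propto\Pi_\xi^{\epsilon/\sqrt2}(q)$ of Lemma~\ref{lem:comp}, which converts control of the norms $\|\Pi_{\xi_n}^{\epsilon_n}(q_n)\psi\|$ into the very weak-continuity statement already in hand. A minor technical point to handle with care is the locally uniform operator bound underlying the extension of weak continuity off $\cD$, which requires $\epsilon$ to stay bounded away from $0$; this is consistent with the hypothesis $\epsilon>0$ and the exclusion of $\epsilon=0$ from the domain of the family.
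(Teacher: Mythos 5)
Your proposal is correct and is essentially the paper's own argument: both reduce strong continuity to the expansion $\|(\Pi_{\xi'}^{\epsilon'}(q')-\Pi_{\xi}^{\epsilon}(q))\psi\|^2=\|\Pi_{\xi'}^{\epsilon'}(q')\psi\|^2-2\Re\langle\Pi_{\xi}^{\epsilon}(q)\psi,\Pi_{\xi'}^{\epsilon'}(q')\psi\rangle+\|\Pi_{\xi}^{\epsilon}(q)\psi\|^2$ together with the fact that the $\star$-products realizing these terms (operator composition, via the completeness relation) are explicitly computable and continuous. The only organizational difference is that the paper reduces to coherent states and reads off all four terms, cross-terms included, from the general product formula of Lemma~\ref{lem:prod}, whereas you handle the cross-term by weak convergence tested against the fixed vector $\Pi_{\xi}^{\epsilon}(q)\psi$ and therefore need only the diagonal composition law of Lemma~\ref{lem:comp} plus Lemma~\ref{lem:weakcont} extended to $\cH$ by the locally uniform bound.
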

\begin{proof}
    Due to the boundedness of $\Pi_{\xi}^{\epsilon}(q)$ it is sufficient to consider $\psi$ to be a coherent state, say $\coh_{\beta}$. Using self-adjointness, we then write,
    \begin{multline}
        \left\|\left(\Pi_{\xi'}^{\epsilon'}(q')-\Pi_{\xi}^{\epsilon}(q)\right)\coh_{\beta}\right\|^2 \\
        =\left(B_{\xi'}^{\epsilon'}(q')\star B_{\xi'}^{\epsilon'}(q')
        -B_{\xi}^{\epsilon}(q)\star B_{\xi'}^{\epsilon'}(q')
        -B_{\xi'}^{\epsilon'}(q')\star B_{\xi}^{\epsilon}(q)
        +B_{\xi}^{\epsilon}(q)\star B_{\xi}^{\epsilon}(q)\right)
        (\coh_{\beta},\coh_{\beta}) .
    \end{multline}
    Continuity can be read off from the explicit expressions obtained with Lemma~\ref{lem:prod}.
\end{proof}

\begin{lem}
    \label{lem:normcont}
    For $\epsilon>0$ the positive operators $\Pi_{\xi}^{\epsilon}(q)$ form norm continuous families in the parameters $(\epsilon,q)$. That is, the function $\R^+\times\R\to\cB$ given by $(\epsilon,q)\mapsto \Pi_{\xi}^{\epsilon}(q)$ is continuous.
\end{lem}
\begin{proof}
    We use the representation of equation \eqref{eq:unitaryintegral} of Lemma~\ref{lem:unitaryintegral} which by Lemma~\ref{lem:bopq} extends to arbitrary matrix elements. Thus, we have the estimate,
    \begin{equation}
        \left\|\Pi_{\xi}^{\epsilon}(q)-\Pi_{\xi}^{\epsilon}(q')\right\|_{\nop}
        \le \frac{1}{\pi}\int_{-\infty}^{\infty}\xd t\, e^{-\epsilon^2 t^2}
        \left|e^{-2\im q t}-e^{-2\im q' t}\right|
        \left\|\wq{G_{\xi}^t}\right\|_{\nop}
        \le |q-q'| \frac{1}{\pi}\int_{-\infty}^{\infty} e^{-\epsilon^2 t^2}
        2 |t| .
    \end{equation}
    Note that the integral is finite and does not depend on $q$ or $q'$. On the other hand, suppose $\epsilon'\le\epsilon$. Then,
    \begin{equation}
        \left\|\Pi_{\xi}^{\epsilon}(q')-\Pi_{\xi}^{\epsilon'}(q')\right\|_{\nop}
        \le \frac{1}{\pi}\int_{-\infty}^{\infty}\xd t\, \left(e^{-\epsilon'^2 t^2}-e^{-\epsilon^2 t^2}\right)
        \left|e^{-2\im q' t}\right|
        \left\|\wq{G_{\xi}^t}\right\|_{\nop}
        = \frac{1}{\sqrt{\pi}}\left(\frac{1}{\epsilon'}-\frac{1}{\epsilon}\right) .
        \label{eq:piqepsilon}
    \end{equation}
    Combining, we obtain a suitable estimate for,
    \begin{equation}
        \left\|\Pi_{\xi}^{\epsilon}(q)-\Pi_{\xi}^{\epsilon'}(q')\right\|_{\nop}
        \le \left\|\Pi_{\xi}^{\epsilon}(q)-\Pi_{\xi}^{\epsilon}(q')\right\|_{\nop} + \left\|\Pi_{\xi}^{\epsilon}(q')-\Pi_{\xi}^{\epsilon'}(q')\right\|_{\nop} .
    \end{equation}
    This shows the desired continuity.
\end{proof}

\begin{prop}
    \label{prop:piobs}
    For $\epsilon>0$, $\Pi_{\xi}^{\epsilon}(q)$ is the Weyl quantization of the observable
    \begin{equation}
       H_{\xi}^{\epsilon}(q)(\phi)
       \defeq\frac{1}{\sqrt{\pi}\epsilon}\exp\left(-\frac{1}{\epsilon^2}(D_{\xi}(\phi)-q)^2\right) .
    \end{equation}
\end{prop}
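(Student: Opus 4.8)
The plan is to exhibit $H_{\xi}^{\epsilon}(q)$ as a Gaussian superposition of the exponential symbols $G_{\xi}^{t}$ of Lemma~\ref{lem:unitaryintegral} and then transport this superposition through the Weyl quantization map, which is linear. Since $H_{\xi}^{\epsilon}(q)$ depends on $\phi$ only through the single linear coordinate $D_{\xi}(\phi)$, a one-dimensional Gaussian Fourier identity should give
\begin{equation}
  H_{\xi}^{\epsilon}(q)(\phi)
  =\frac{1}{\pi}\int_{-\infty}^{\infty}\xd t\, e^{-\epsilon^2 t^2-2\im q t}\, G_{\xi}^{t}(\phi) .
\end{equation}
This is just the statement that the Fourier transform of a Gaussian is a Gaussian: completing the square in $t$ with quadratic coefficient $\epsilon^2$ and linear term $2\im t\,(D_{\xi}(\phi)-q)$ reproduces $\frac{1}{\sqrt{\pi}\epsilon}\exp(-\epsilon^{-2}(D_{\xi}(\phi)-q)^2)$. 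Crucially, this is precisely the integrand whose quantization already appears in equation~\eqref{eq:unitaryintegral}.

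First I would fix coherent states $\coh_{\gamma},\coh_{\beta}$ and argue that the coherent-state matrix element of the Weyl quantization of the right-hand side above is obtained by pulling the quantization map inside the $t$-integral, giving $\frac{1}{\pi}\int \xd t\, e^{-\epsilon^2 t^2-2\im q t}\langle\coh_{\gamma},\wq{G_{\xi}^{t}}\coh_{\beta}\rangle$. By Lemma~\ref{lem:unitaryintegral} this integral equals $B_{\xi}^{\epsilon}(q)(\coh_{\gamma},\coh_{\beta})$, which by Lemma~\ref{lem:bopq} equals $\langle\coh_{\gamma},\Pi_{\xi}^{\epsilon}(q)\coh_{\beta}\rangle$. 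Since the coherent states span the dense domain $\cD$ and agreement of all such matrix elements determines a bounded operator uniquely, I would then conclude $\wq{H_{\xi}^{\epsilon}(q)}=\Pi_{\xi}^{\epsilon}(q)$.

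The step I expect to be the main obstacle is the interchange of the $t$-integration with the Weyl quantization map, i.e. the claim that quantizing the Gaussian superposition of symbols yields the operator-valued Gaussian superposition of their quantizations. At the level of coherent-state matrix elements this is a Fubini/dominated-convergence matter, and the required domination is easy: since $\wq{G_{\xi}^{t}}$ is unitary the matrix element $\langle\coh_{\gamma},\wq{G_{\xi}^{t}}\coh_{\beta}\rangle$ is bounded in modulus by $\|\coh_{\gamma}\|\,\|\coh_{\beta}\|$ (indeed it carries the Gaussian decay $e^{-t^2\|\xi\|^2}$ visible in equation~\eqref{eq:intrepsql}), and the kernel $e^{-\epsilon^2 t^2}$ with $\epsilon>0$ guarantees absolute integrability. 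The only genuinely structural point to pin down is what one takes as the definition of the Weyl quantization of a non-exponential, Gaussian symbol: here it is fixed by linearity together with the exponential identity~\eqref{eq:quantid}, and because $H_{\xi}^{\epsilon}(q)$ only probes the $\xi$-direction its Fourier content lies entirely within the one-parameter family $\{G_{\xi}^{t}\}_{t\in\R}$, so Lemma~\ref{lem:unitaryintegral} supplies everything needed and no decomposition transverse to $\xi$ enters.
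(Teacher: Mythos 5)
Your proposal is correct and follows essentially the same route as the paper: the paper's proof likewise writes $H_{\xi}^{\epsilon}(q)$ as the Gaussian superposition $\frac{1}{\pi}\int\xd t\, e^{-\epsilon^2 t^2-2\im q t}G_{\xi}^t$, pulls the (linear) Weyl quantization inside the integral, and invokes Lemma~\ref{lem:unitaryintegral} to identify the result with $\Pi_{\xi}^{\epsilon}(q)$. Your added detail on the dominated-convergence justification of the interchange merely fleshes out what the paper dispatches with the phrase "linearity (and integrability)".
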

\begin{proof}
    We note, with definition \eqref{eq:unitaryg},
    \begin{equation}
       H_{\xi}^{\epsilon}(q)(\phi)
       =\frac{1}{\pi}\, \int_{-\infty}^{\infty}\xd t\, e^{-\epsilon^2 t^2-2\im q t} G_{\xi}^t(\phi) .
    \end{equation}
    Applying Weyl quantization on both sides, linearity (and integrability) on the right-hand side imply that it is equivalent to quantize just $G_{\xi}^t$. By Lemma~\ref{lem:unitaryintegral} the right-hand side is then equal to $\Pi_{\xi}^{\epsilon}(q)$.
\end{proof}

\subsection{Functional calculus}

We introduce the following notation, where $\psi,\eta\in\cD$, and $f:\R\to\C$ is a Lebesgue measurable function, whenever the integral exists,
\begin{equation}
    B_{\xi}^{\epsilon}[f](\psi,\eta)\defeq \int_{-\infty}^{\infty}\xd q f(q) B_{\xi}^{\epsilon}(q)(\psi,\eta) .
    \label{eq:defbint}
\end{equation}
We note that $q\mapsto B_{\xi}^{\epsilon}(q)(\psi,\eta)$ is continuous by Lemma~\ref{lem:weakcont} and thus in particular measurable.
If $\epsilon>0$ we also define, whenever the integral exists,
\begin{equation}
    \Pi_{\xi}^{\epsilon}[f] \defeq \int_{-\infty}^{\infty}\xd q f(q) \Pi_{\xi}^{\epsilon}(q) .
    \label{eq:defpint}
\end{equation}
For this to make sense, we use continuity of the map $q\mapsto \Pi_{\xi}^{\epsilon}(q)$, due to Lemma~\ref{lem:normcont}. This implies here measurability and also approximability by simple functions on every closed finite interval. We also consider the weaker version of definition~\eqref{eq:defpint} in the strong operator topology.

\begin{lem}
    \label{lem:integrals}
    Let $\one(q)=1$, $\id(q)=q$, $e_s(q)=e^{\im s q}$. Then,
    \begin{align}
        B_{\xi}^{\epsilon}[\one](\coh_\gamma,\coh_\beta)
         & =\langle\coh_\gamma, \coh_\beta\rangle, \label{eq:intconst} \\
        B_{\xi}^{\epsilon}[\id](\coh_\gamma,\coh_\beta)
         & =\langle\coh_\gamma, \coh_\beta\rangle D_{\xi}(\beta,\gamma)
        =\langle\coh_\gamma, \wq{D_{\xi}}\coh_\beta\rangle, \label{eq:intlin} \\
        B_{\xi}^{\epsilon}[e_s](\coh_\gamma,\coh_\beta)
         & = \langle\coh_\gamma, \coh_\beta\rangle \exp\left(\im s D_{\xi}(\beta,\gamma)\right) \exp\left(-\frac{s^2}{4} (\|\xi\|^2+\epsilon^2)
         \right) .
    \end{align}
\end{lem}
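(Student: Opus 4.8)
The plan is to recognize that, up to the fixed overlap factor, $B_\xi^\epsilon(q)(\coh_\gamma,\coh_\beta)$ is simply a normalized Gaussian density in the integration variable $q$. Setting $\mu\defeq D_\xi(\beta,\gamma)$ and $\sigma^2\defeq\frac12(\|\xi\|^2+\epsilon^2)$, the defining expression \eqref{eq:defbepsilon} reads
\begin{equation*}
  B_\xi^\epsilon(q)(\coh_\gamma,\coh_\beta)=\langle\coh_\gamma,\coh_\beta\rangle\,\frac{1}{\sqrt{2\pi\sigma^2}}\exp\!\left(-\frac{(q-\mu)^2}{2\sigma^2}\right),
\end{equation*}
with $\sigma^2>0$ precisely in the regime where the form is defined (i.e.\ not $\xi=0$ together with $\epsilon=0$). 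The three quantities $B_\xi^\epsilon[\one]$, $B_\xi^\epsilon[\id]$ and $B_\xi^\epsilon[e_s]$ are then, modulo the overall factor $\langle\coh_\gamma,\coh_\beta\rangle$, respectively the zeroth moment, the first moment, and the Fourier transform (characteristic function) of this Gaussian. Each integral converges absolutely, since the modulus of the integrand is again Gaussian in $q$, so the integrals in the sense of \eqref{eq:defbint} indeed exist.

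Each of the three is then a standard Gaussian integral, which I would evaluate by completing the square. The one genuine subtlety, and what I expect to be the main obstacle, is that the "mean" $\mu=D_\xi(\beta,\gamma)$ is in general \emph{complex}, since $D_\xi$ has been extended to a complex-linear function on $L^\C$ and $(\beta,\gamma)\in L^\C$ via \eqref{eq:betagammaint}. Thus the familiar real-variable formulas must be justified for a complex mean. I would handle this by analytic continuation: for fixed $\sigma>0$ the maps
\begin{equation*}
  \mu\mapsto\int_{-\infty}^{\infty}\xd q\, f(q)\,\frac{1}{\sqrt{2\pi\sigma^2}}\exp\!\left(-\frac{(q-\mu)^2}{2\sigma^2}\right),\qquad f\in\{\one,\id,e_s\},
\end{equation*}
are entire in $\mu\in\C$, differentiation under the integral sign being legitimate thanks to the Gaussian decay. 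For real $\mu$ they take the classical values $1$, $\mu$ and $\exp\!\left(\im s\mu-\frac12\sigma^2 s^2\right)$, each of which is itself entire in $\mu$; by the identity theorem the equalities persist for all $\mu\in\C$, in particular for $\mu=D_\xi(\beta,\gamma)$. (Equivalently, after completing the square one may shift the contour of integration by $\operatorname{Im}\mu$, the shift being permitted by Cauchy's theorem together with the decay of the integrand.)

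Substituting back $\mu=D_\xi(\beta,\gamma)$ and $\sigma^2=\frac12(\|\xi\|^2+\epsilon^2)$ then yields \eqref{eq:intconst}, the first equality of \eqref{eq:intlin}, and the stated formula for $B_\xi^\epsilon[e_s]$; in the characteristic-function case one uses $\frac12\sigma^2 s^2=\frac{s^2}{4}(\|\xi\|^2+\epsilon^2)$, as required. Finally, the second equality in \eqref{eq:intlin}, namely $\langle\coh_\gamma,\coh_\beta\rangle\,D_\xi(\beta,\gamma)=\langle\coh_\gamma,\wq{D_\xi}\coh_\beta\rangle$, is not a new computation: it is exactly the matrix-element identity \eqref{eq:qlinobs} already recorded in the motivation, so it may simply be quoted.
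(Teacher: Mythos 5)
Your proposal is correct and matches the paper's proof, which simply states that the result follows by direct computation with expression \eqref{eq:defbepsilon} and quotes \eqref{eq:qlinobs} for the second equality in \eqref{eq:intlin}. You merely spell out the Gaussian-moment computation and the (legitimate and worthwhile) justification for the complex mean $\mu=D_\xi(\beta,\gamma)$, which the paper leaves implicit.
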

\begin{proof}
    This is obtained by direct computation with expression \eqref{eq:defbepsilon}. The second equality in \eqref{eq:intlin} is equation \eqref{eq:qlinobs}.
\end{proof}

\begin{lem}
    \label{lem:fbound}
    Let $f:\R\to\C$ be Lebesgue measurable and essentially bounded. Then, $B_{\xi}^{\epsilon}[f]$ extends to a sesquilinear form on $\cH$. Moreover, there is a unique bounded operator $\Pi_{\xi}^{\epsilon}[f]$ on $\cH$ such that for any $\eta,\psi\in\cH$,
    \begin{equation}
        B_{\xi}^{\epsilon}[f](\eta,\psi)=\langle\eta,\Pi_{\xi}^{\epsilon}[f]\psi\rangle,\quad\text{with}\quad \|\Pi_{\xi}^{\epsilon}[f]\|_{\nop}\le \|\Re(f)\|_{\infty} +\|\Im(f)\|_{\infty} .
        \label{eq:formop}
    \end{equation}
    Also, $\Pi_{\xi}^{\epsilon}[\overline{f}]=\Pi_{\xi}^{\epsilon}[f]^{\dagger}$. Furthermore, if $f$ is real, then $B_{\xi}^{\epsilon}[f]$ is hermitian and $\Pi_{\xi}^{\epsilon}[f]$ is self-adjoint. Also, if $f$ is positive, i.e.\ $f\ge 0$, then $B_{\xi}^{\epsilon}[f]$ is positive and $\Pi_{\xi}^{\epsilon}[f]$ is positive.
\end{lem}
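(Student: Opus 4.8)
The plan is to exploit the fact, established in Lemma~\ref{lem:integrals} for $f=\one$, that the positive operators $\Pi_{\xi}^{\epsilon}(q)$ form a normalized resolution of the identity, so that all estimates reduce to elementary integration against a finite positive measure. Concretely, for fixed $\psi\in\cD$ the map $q\mapsto\langle\psi,\Pi_{\xi}^{\epsilon}(q)\psi\rangle$ is non-negative (Lemma~\ref{lem:bopq}), continuous (Lemma~\ref{lem:normcont}), and on coherent states is an explicit integrable Gaussian; hence $\xd\mu_{\psi}(q)\defeq\langle\psi,\Pi_{\xi}^{\epsilon}(q)\psi\rangle\,\xd q$ is a finite positive Borel measure whose total mass, by sesquilinear extension of equation~\eqref{eq:intconst} from coherent states to $\cD$, equals $B_{\xi}^{\epsilon}[\one](\psi,\psi)=\langle\psi,\psi\rangle$. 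For any essentially bounded $f$ this makes $B_{\xi}^{\epsilon}[f](\psi,\psi)=\int f\,\xd\mu_{\psi}$ a well-defined finite integral.

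First I would split $f=\Re(f)+\im\,\Im(f)$ and treat a real essentially bounded $f$, with $M\defeq\|f\|_{\infty}$. Since $B_{\xi}^{\epsilon}(q)$ is hermitian and $f$ is real, $B_{\xi}^{\epsilon}[f]$ is a hermitian form on $\cD$, and $B_{\xi}^{\epsilon}[f](\psi,\psi)=\int f\,\xd\mu_{\psi}$ yields the diagonal bound $|B_{\xi}^{\epsilon}[f](\psi,\psi)|\le M\,\mu_{\psi}(\R)=M\|\psi\|^{2}$. If in addition $f\ge 0$, the same identity shows $B_{\xi}^{\epsilon}[f](\psi,\psi)\ge 0$, giving positivity. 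A polarization argument then promotes the diagonal bound for the hermitian form to the off-diagonal estimate $|B_{\xi}^{\epsilon}[f](\eta,\psi)|\le M\|\eta\|\,\|\psi\|$ on $\cD$; by density of $\cD$ the form extends uniquely and continuously to $\cH$ with the same bound, and the Riesz representation of bounded sesquilinear forms furnishes a unique bounded self-adjoint operator $\Pi_{\xi}^{\epsilon}[f]$ with $\|\Pi_{\xi}^{\epsilon}[f]\|_{\nop}\le M$ representing it, positive whenever $f\ge 0$.

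For general complex $f$ I would set $\Pi_{\xi}^{\epsilon}[f]\defeq\Pi_{\xi}^{\epsilon}[\Re(f)]+\im\,\Pi_{\xi}^{\epsilon}[\Im(f)]$. Linearity of definition~\eqref{eq:defbint} in $f$ shows that this operator represents $B_{\xi}^{\epsilon}[f]$, and the triangle inequality gives the advertised bound $\|\Pi_{\xi}^{\epsilon}[f]\|_{\nop}\le\|\Re(f)\|_{\infty}+\|\Im(f)\|_{\infty}$. The adjoint relation $\Pi_{\xi}^{\epsilon}[\overline{f}]=\Pi_{\xi}^{\epsilon}[f]^{\dagger}$ then follows from $\Re(\overline{f})=\Re(f)$ and $\Im(\overline{f})=-\Im(f)$ together with self-adjointness of $\Pi_{\xi}^{\epsilon}[\Re(f)]$ and $\Pi_{\xi}^{\epsilon}[\Im(f)]$, or directly from $B_{\xi}^{\epsilon}[\overline{f}](\eta,\psi)=\overline{B_{\xi}^{\epsilon}[f](\psi,\eta)}$. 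Uniqueness throughout is guaranteed by the totality of the coherent states in $\cH$.

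I expect the main obstacle to be careful bookkeeping of integrability and measure-theoretic well-definedness rather than any single hard estimate: one must verify that $\int f\,\xd\mu_{\psi}$ genuinely converges for $\psi\in\cD$ (which rests on the Gaussian decay of the coherent-state matrix elements and on the finiteness $\mu_{\psi}(\R)=\|\psi\|^{2}$) and that the sesquilinear extension of equation~\eqref{eq:intconst} from coherent states to $\cD$ is legitimate. The only genuinely non-trivial analytic input is the polarization step converting the diagonal bound into the operator-norm bound for real $f$, which is what produces the sharp constant; everything else is a routine consequence of the POVM normalization.
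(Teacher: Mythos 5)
Your proposal is essentially correct and follows the same core strategy as the paper: everything is reduced to the normalization $B_{\xi}^{\epsilon}[\one](\psi,\psi)=\langle\psi,\psi\rangle$ from Lemma~\ref{lem:integrals}, the diagonal bound is transferred to an operator-norm bound, and the complex case is handled by decomposing $f$. The one genuine difference of route is in the middle step: the paper first treats \emph{positive} $f$, for which $B_{\xi}^{\epsilon}[f]$ is a positive hermitian form dominated by $\|f\|_{\infty}\langle\cdot,\cdot\rangle$ (so the off-diagonal bound comes from Cauchy--Schwarz for positive forms), and then splits a general $f$ into four positive pieces; you instead treat \emph{real} $f$ directly as a hermitian form with a two-sided diagonal bound and invoke polarization plus the parallelogram law to get $|B_{\xi}^{\epsilon}[f](\eta,\psi)|\le\|f\|_{\infty}\|\eta\|\|\psi\|$. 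Both are standard and yield the same constant $\|\Re(f)\|_{\infty}+\|\Im(f)\|_{\infty}$ after the real/imaginary split.

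One point does need repair. You build the measure $\mu_{\psi}$ out of the operators $\Pi_{\xi}^{\epsilon}(q)$, citing Lemmas~\ref{lem:bopq} and \ref{lem:normcont}, and these exist only for $\epsilon>0$. The lemma, however, is also applied at $\epsilon=0$ (this is precisely how $\Pi_{\xi}[\chi_A]$ and the spectral measure are obtained in Lemma~\ref{lem:intproj} and Theorem~\ref{thm:povmspecdec}), and the paper stresses that $\Pi_{\xi}(q)=\Pi_{\xi}^{0}(q)$ does \emph{not} exist as a bounded operator. The fix is purely notational: define $\xd\mu_{\psi}(q)=B_{\xi}^{\epsilon}(q)(\psi,\psi)\,\xd q$ directly at the level of the sesquilinear forms on $\cD$, using Lemma~\ref{lem:pos} for non-negativity and Lemma~\ref{lem:weakcont} for continuity in $q$; the total-mass computation and all subsequent estimates go through verbatim for all $\epsilon\ge 0$ (with $\xi\ne 0$ when $\epsilon=0$). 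With that substitution your argument covers the full scope of the statement.
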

\begin{proof}
    Assume at first that $f$ is positive.
    From the definition \eqref{eq:defbint} and Lemma~\ref{lem:pos} it is clear that $B_{\xi}^{\epsilon}[f]$ is a positive hermitian sesquilinear form on $\cD$. Without loss of generality suppose that $f$ is bounded and $\|f\|_{\infty}=\|f\|_{\mathrm{sup}}$. For $\psi\in\cD$ we obtain with equation \eqref{eq:intconst} of Lemma~\ref{lem:integrals},
    \begin{equation}
        0\le B_{\xi}^{\epsilon}[f](\psi,\psi)\le B_{\xi}^{\epsilon}[\|f\|_{\infty} \one](\psi,\psi)= \|f\|_{\infty} B_{\xi}^{\epsilon}[\one](\psi,\psi) = \|f\|_{\infty} \langle\psi,\psi\rangle .
    \end{equation}
    That is, the positive hermitian sesquilinear form $B_{\xi}^{\epsilon}[f]$ is bounded on the dense subspace $\cD$ by the inner product with a constant. It therefore extends to a positive hermitian sesquilinear form on $\cH$, with the same bound. In turn, it determines a bounded positive operator $\Pi_{\xi}^{\epsilon}[f]$ satisfying the equation in \eqref{eq:formop}.

    Let $f$ be complex-valued. By decomposing $f$ into real and imaginary part, and each of these into positive and negative part we can see that $B_{\xi}^{\epsilon}[f]$ is still a well-defined sesquilinear form on $\cH$ and $\Pi_{\xi}^{\epsilon}[f]$ is still a bounded operator, related by the equation in \eqref{eq:formop}. Moreover, $B_{\xi}^{\epsilon}[f]$ is hermitian and $\Pi_{\xi}^{\epsilon}[f]$ is self-adjoint if $f$ is real-valued. Also, complex conjugation of $f$ corresponds to taking the adjoint of $\Pi_{\xi}^{\epsilon}[f]$.
    
    For $\psi\in\cH$ we have,
    \begin{equation}
        0\le |\langle\psi,\Pi_{\xi}^{\epsilon}[f]\psi\rangle |
        \le \langle\psi,\Pi_{\xi}^{\epsilon}[|f|]\psi\rangle
        \le \|f\|_{\infty} B_{\xi}^{\epsilon}[\one](\psi,\psi)
        \le \|f\|_{\infty}\langle\psi,\psi\rangle.
    \end{equation}
    If $f$ is real-valued and $\Pi_{\xi}^{\epsilon}[f]$ therefore self-adjoint, this implies $\|\Pi_{\xi}^{\epsilon}[f]\|_{\nop}\le \|f\|_{\infty}$. If $f$ is complex-valued, we decompose it into its real and imaginary part, yielding the bound in expression \eqref{eq:formop}.
\end{proof}

\begin{lem}
    \label{lem:weakcontint}
    Let $f:\R\to\C$ be essentially bounded. Then, the family $\Pi_{\xi}^{\epsilon}[f]$ is weakly continuous in the parameters $(\xi,\epsilon)$. That is, for fixed $\psi,\eta\in\cH$, the function
    \begin{equation}
        (L\times\R_0^+)\setminus (\{0\}\times\{0\})\to\C:
        (\xi,\epsilon)\mapsto \langle\eta,\Pi_{\xi}^{\epsilon}[f]\psi\rangle
    \end{equation}
    is continuous.
\end{lem}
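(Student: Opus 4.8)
The plan is to exploit the uniform operator-norm bound from Lemma~\ref{lem:fbound}, namely $\|\Pi_{\xi}^{\epsilon}[f]\|_{\nop}\le \|\Re(f)\|_{\infty}+\|\Im(f)\|_{\infty}$, which is independent of $(\xi,\epsilon)$. This bound lets me reduce the problem to matrix elements between coherent states, and then establish continuity of those matrix elements by dominated convergence using the explicit Gaussian form \eqref{eq:defbepsilon}.

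First I would reduce to coherent states. Since $\cD$ is the span of the coherent states and dense in $\cH$, given $\psi,\eta\in\cH$ and $\delta>0$ I would choose $\psi',\eta'\in\cD$ with $\|\psi-\psi'\|<\delta$ and $\|\eta-\eta'\|<\delta$. Writing $M\defeq\|\Re(f)\|_{\infty}+\|\Im(f)\|_{\infty}$ and splitting
\[
    \langle\eta,\Pi_{\xi}^{\epsilon}[f]\psi\rangle - \langle\eta',\Pi_{\xi}^{\epsilon}[f]\psi'\rangle
    = \langle\eta-\eta',\Pi_{\xi}^{\epsilon}[f]\psi\rangle + \langle\eta',\Pi_{\xi}^{\epsilon}[f](\psi-\psi')\rangle ,
\]
the uniform bound gives $|\langle\eta,\Pi_{\xi}^{\epsilon}[f]\psi\rangle - \langle\eta',\Pi_{\xi}^{\epsilon}[f]\psi'\rangle|\le M\delta(\|\psi\|+\|\eta\|+\delta)$, uniformly in $(\xi,\epsilon)$. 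Hence by a standard three-term estimate it suffices to prove continuity of $(\xi,\epsilon)\mapsto\langle\eta',\Pi_{\xi}^{\epsilon}[f]\psi'\rangle$ for $\psi',\eta'\in\cD$, and by sesquilinearity it suffices to treat $\psi'=\coh_{\beta}$ and $\eta'=\coh_{\gamma}$.

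The core step is then the continuity of
\[
    (\xi,\epsilon)\mapsto \langle\coh_{\gamma},\Pi_{\xi}^{\epsilon}[f]\coh_{\beta}\rangle
    = B_{\xi}^{\epsilon}[f](\coh_{\gamma},\coh_{\beta})
    = \int_{-\infty}^{\infty}\xd q\, f(q)\, B_{\xi}^{\epsilon}(q)(\coh_{\gamma},\coh_{\beta}) ,
\]
which I would prove by dominated convergence on a compact neighborhood $N$ of any fixed point $(\xi_0,\epsilon_0)\neq(0,0)$. For each fixed $q$ the integrand $f(q)B_{\xi}^{\epsilon}(q)(\coh_{\gamma},\coh_{\beta})$ is continuous in $(\xi,\epsilon)$ by Lemma~\ref{lem:weakcont}, since $f$ does not depend on the parameters. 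For the dominating function I would use \eqref{eq:defbepsilon}: because $N$ avoids the origin, $\|\xi\|^2+\epsilon^2$ is bounded below by some $c>0$ and above by some $C<\infty$ on $N$, and $D_{\xi}(\beta,\gamma)$ stays in a bounded region. Writing $D_{\xi}(\beta,\gamma)=\mu_R+\im\mu_I$, the modulus of the Gaussian factor equals $\exp(-(\mu_R-q)^2/(\|\xi\|^2+\epsilon^2))\exp(\mu_I^2/(\|\xi\|^2+\epsilon^2))$, whose second factor is bounded on $N$ and whose first factor is dominated, uniformly over $N$, by an integrable Gaussian tail in $q$ (using $|\mu_R|\le R_0$ and $\|\xi\|^2+\epsilon^2\le C$). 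Together with $|f|\le\|f\|_{\infty}$ this yields a parameter-independent integrable majorant, so dominated convergence applies and the matrix element is continuous.

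The main obstacle is the construction of this dominating function, specifically that $D_{\xi}(\beta,\gamma)$ is complex, so the Gaussian factor does not decay by itself; the estimate must separate the real and imaginary parts of its argument and use that on a neighborhood away from $(0,0)$ the width $\|\xi\|^2+\epsilon^2$ is controlled both from below (to keep the prefactor $1/\sqrt{\pi(\|\xi\|^2+\epsilon^2)}$ and the factor $\exp(\mu_I^2/(\|\xi\|^2+\epsilon^2))$ bounded) and from above (to retain Gaussian integrability in $q$). Everything else is routine.
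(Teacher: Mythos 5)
Your proof is correct and follows essentially the same route as the paper's: establish continuity of the integrated matrix elements on $\cD$ using the pointwise continuity of $B_{\xi}^{\epsilon}(q)$ together with Gaussian integrability control in $q$ (the paper phrases this as $L^1$-convergence of $q\mapsto B_{\xi}^{\epsilon}(q)(\eta,\psi)$, you as dominated convergence with an explicit majorant), then extend to $\cH$ via the uniform operator-norm bound and density of $\cD$. Your treatment of the complex-valued $D_{\xi}(\beta,\gamma)$ in the majorant is a detail the paper leaves implicit, and it is handled correctly.
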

\begin{proof}
    For $\psi,\eta\in\cD$ the integral
    \begin{equation}
      \int_{-\infty}^\infty\xd q\,
      \left| B_{\xi'}^{\epsilon'}(q)(\eta,\psi)- B_{\xi}^{\epsilon}(q)(\eta,\psi)\right|
    \end{equation}
    converges to $0$ when $(\xi',\epsilon')\to (\xi,\epsilon)$. This implies the required weak continuity in this special case. The extension of weak continuity to $\psi,\eta\in\cH$ follows from the boundedness of $\Pi_{\xi}^{\epsilon}[f]$ with the denseness of $\cD\subseteq\cH$.
\end{proof}

\begin{lem}
    \label{lem:strongcontint}
    Let $\epsilon>0$, $f:\R\to\C$ essentially bounded and of compact support. Then, the family $\Pi_{\xi}^{\epsilon}[f]$ is strongly continuous in the parameters $(\xi,\epsilon)$. That is, for fixed $\psi\in\cH$, the function
    \begin{equation}
        L\times\R^+\to\cH:
        (\xi,\epsilon)\mapsto \Pi_{\xi}^{\epsilon}[f]\psi
    \end{equation}
    is continuous. Also, we have norm continuity in $\epsilon$. That is, the map $\R^+\to \cB:\epsilon\mapsto\Pi_{\xi}^{\epsilon}[f]$ is continuous.
\end{lem}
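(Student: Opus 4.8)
The plan is to reduce both assertions to the continuity properties of the pointwise operators $\Pi_{\xi}^{\epsilon}(q)$ established in Lemmas~\ref{lem:strongcont}, \ref{lem:normcont} and \ref{lem:bopq}, exploiting that the compact support of $f$ together with its essential boundedness make $f$ integrable.

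For the strong continuity I would work from the representation \eqref{eq:defpint} in the strong operator topology, so that for any $\psi\in\cH$
\begin{equation*}
\left(\Pi_{\xi'}^{\epsilon'}[f]-\Pi_{\xi}^{\epsilon}[f]\right)\psi
=\int_{-\infty}^{\infty}\xd q\, f(q)\left(\Pi_{\xi'}^{\epsilon'}(q)-\Pi_{\xi}^{\epsilon}(q)\right)\psi ,
\end{equation*}
a Bochner integral in $\cH$. Taking norms inside the integral gives
\begin{equation*}
\left\|\left(\Pi_{\xi'}^{\epsilon'}[f]-\Pi_{\xi}^{\epsilon}[f]\right)\psi\right\|
\le \int_{-\infty}^{\infty}\xd q\, |f(q)|\,
\left\|\left(\Pi_{\xi'}^{\epsilon'}(q)-\Pi_{\xi}^{\epsilon}(q)\right)\psi\right\| .
\end{equation*}
By Lemma~\ref{lem:strongcont}, for each fixed $q$ the integrand tends to $0$ as $(\xi',\epsilon')\to(\xi,\epsilon)$. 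I would then invoke dominated convergence: restricting to a neighborhood of $(\xi,\epsilon)$ on which $\epsilon'\ge\epsilon_0>0$, the bound of Lemma~\ref{lem:bopq} gives $\|\Pi_{\xi'}^{\epsilon'}(q)\|_{\nop}\le 1/(\sqrt{\pi}\epsilon_0)$ uniformly in $q$ and $\xi'$, so the integrand is dominated by $|f(q)|\,\bigl(1/(\sqrt{\pi}\epsilon_0)+1/(\sqrt{\pi}\epsilon)\bigr)\,\|\psi\|$. Since $f$ is essentially bounded and compactly supported it lies in $L^1$, whence this dominating function is integrable and the integral tends to $0$. This establishes the asserted strong continuity for every $\psi\in\cH$ directly, with no need to first reduce to a dense subspace.

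The only point requiring care, and the main obstacle, is precisely this domination step: one must keep $\epsilon'$ bounded away from $0$ (legitimate since $(\xi,\epsilon)$ has $\epsilon>0$) so that the pointwise norm bound of Lemma~\ref{lem:bopq} remains uniform in $q$, and one must use the compact support of $f$ to guarantee integrability of the dominating function. Both would fail at $\epsilon=0$ or for $f$ merely essentially bounded, which is exactly why these hypotheses are imposed.

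For the norm continuity in $\epsilon$ at fixed $\xi$ I would argue more directly from the norm-topology integral \eqref{eq:defpint},
\begin{equation*}
\left\|\Pi_{\xi}^{\epsilon'}[f]-\Pi_{\xi}^{\epsilon}[f]\right\|_{\nop}
\le \int_{-\infty}^{\infty}\xd q\, |f(q)|\,
\left\|\Pi_{\xi}^{\epsilon'}(q)-\Pi_{\xi}^{\epsilon}(q)\right\|_{\nop} .
\end{equation*}
The key input is that the estimate \eqref{eq:piqepsilon} obtained inside the proof of Lemma~\ref{lem:normcont} is uniform in $q$: for $\epsilon'\le\epsilon$ it gives $\|\Pi_\xi^{\epsilon'}(q)-\Pi_\xi^{\epsilon}(q)\|_{\nop}\le \tfrac{1}{\sqrt{\pi}}\bigl(\tfrac{1}{\epsilon'}-\tfrac{1}{\epsilon}\bigr)$, with the symmetric bound for $\epsilon'\ge\epsilon$. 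Pulling this $q$-independent factor out of the integral yields
\begin{equation*}
\left\|\Pi_{\xi}^{\epsilon'}[f]-\Pi_{\xi}^{\epsilon}[f]\right\|_{\nop}
\le \frac{1}{\sqrt{\pi}}\left|\frac{1}{\epsilon'}-\frac{1}{\epsilon}\right|
\int_{-\infty}^{\infty}\xd q\, |f(q)| ,
\end{equation*}
which tends to $0$ as $\epsilon'\to\epsilon$ because $f\in L^1$ and $\epsilon>0$. Thus norm continuity in $\epsilon$ follows immediately from the uniform-in-$q$ estimate already available, with no additional work.
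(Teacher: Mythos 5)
Your proposal is correct and follows essentially the same route as the paper, which simply cites the pointwise strong continuity of Lemma~\ref{lem:strongcont} together with definition~\eqref{eq:defpint} for the first claim and Lemma~\ref{lem:normcont} for the second. Your write-up merely fills in the details the paper leaves implicit (the dominated-convergence step with the uniform bound from Lemma~\ref{lem:bopq}, and the $q$-independence of the estimate \eqref{eq:piqepsilon}), and does so correctly.
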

\begin{proof}
    Strong continuity follows from the strong continuity result of Lemma~\ref{lem:strongcont} with definition~\eqref{eq:defpint}. Norm continuity follows with Lemma~\ref{lem:normcont}.
\end{proof}

\begin{lem}
    \label{lem:dcommute}
    Let $\xi,\xi'\in L$ with $\omega(\xi,\xi')=0$. Then, $[\Pi_{\xi'}^{\epsilon'}[f'],\Pi_{\xi}^{\epsilon}[f]]=0$.
\end{lem}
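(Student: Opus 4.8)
The plan is to reduce the claim to the pointwise commutativity of the operators already contained in Lemma~\ref{lem:prod}. The bridge is that the product $\star$ of forms is exactly operator composition: if $A(\eta,\psi)=\langle\eta,P\psi\rangle$ and $A'(\eta,\psi)=\langle\eta,P'\psi\rangle$ with $P,P'$ bounded, then inserting the completeness relation~\eqref{eq:cohcompl} (with $\eta$ replaced by $P'^{\dagger}\eta$ and $\psi$ by $P\psi$) gives $(A'\star A)(\eta,\psi)=\int_{\xL}\xd\nu(\phi)\,\langle\eta,P'\coh_\phi\rangle\langle\coh_\phi,P\psi\rangle=\langle\eta,P'P\psi\rangle$. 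Thus $B_{\xi'}^{\epsilon'}(q')\star B_{\xi}^{\epsilon}(q)$ represents $\Pi_{\xi'}^{\epsilon'}(q')\Pi_{\xi}^{\epsilon}(q)$, and $B_{\xi'}^{\epsilon'}[f']\star B_{\xi}^{\epsilon}[f]$ represents $\Pi_{\xi'}^{\epsilon'}[f']\Pi_{\xi}^{\epsilon}[f]$.

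I first dispose of the case $\epsilon,\epsilon'>0$, where the pointwise operators exist by Lemma~\ref{lem:bopq}. Since $\omega(\xi,\xi')=0$, Lemma~\ref{lem:prod} applies and gives $B_{\xi'}^{\epsilon'}(q')\star B_{\xi}^{\epsilon}(q)=B_{\xi}^{\epsilon}(q)\star B_{\xi'}^{\epsilon'}(q')$ for all $q,q'$. Evaluating on coherent states, unfolding the forms $B[f]$ by their definition~\eqref{eq:defbint}, and interchanging the $q,q'$-integrals with the $\phi$-integral implicit in $\star$, I obtain
\[
\langle\coh_\gamma,\Pi_{\xi'}^{\epsilon'}[f']\Pi_{\xi}^{\epsilon}[f]\coh_\beta\rangle
=\int_{-\infty}^{\infty}\xd q'\int_{-\infty}^{\infty}\xd q\, f'(q')\,f(q)\,\bigl(B_{\xi'}^{\epsilon'}(q')\star B_{\xi}^{\epsilon}(q)\bigr)(\coh_\gamma,\coh_\beta).
\]
The interchange is justified by absolute convergence: the explicit Gaussian of Lemma~\ref{lem:prod} is integrable in $(q,q')$, $f,f'$ are essentially bounded, and the $\phi$-integral is controlled by Cauchy--Schwarz together with~\eqref{eq:cohcompl}. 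By the pointwise commutation the integrand is invariant under exchanging $(\xi,\epsilon,q)\leftrightarrow(\xi',\epsilon',q')$, so the right-hand side is unchanged if the two operators are swapped; hence the matrix element of the commutator vanishes on coherent states, and by density of $\cD$ in $\cH$ and boundedness, $[\Pi_{\xi'}^{\epsilon'}[f'],\Pi_{\xi}^{\epsilon}[f]]=0$.

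The main obstacle is the extension to $\epsilon=0$ or $\epsilon'=0$: there the pointwise operators $\Pi^{\epsilon}_\xi(q)$ do not exist, and a naive limit fails because a product of two merely weakly convergent families need not converge. The remedy is to pass to the limit in one factor at a time. Keeping $\epsilon>0$ fixed and letting $\epsilon'\to 0$, I use $\Pi_{\xi'}^{\epsilon'}[\overline{f'}]=\Pi_{\xi'}^{\epsilon'}[f']^{\dagger}$ (Lemma~\ref{lem:fbound}) to write $\langle\coh_\gamma,\Pi_{\xi'}^{\epsilon'}[f']\Pi_{\xi}^{\epsilon}[f]\coh_\beta\rangle=\langle\Pi_{\xi'}^{\epsilon'}[\overline{f'}]\coh_\gamma,\Pi_{\xi}^{\epsilon}[f]\coh_\beta\rangle$; since $\Pi_{\xi'}^{\epsilon'}[\overline{f'}]\coh_\gamma\to\Pi_{\xi'}^{0}[\overline{f'}]\coh_\gamma$ weakly by Lemma~\ref{lem:weakcontint} while the second vector stays fixed, the matrix element converges to that of $\Pi_{\xi'}^{0}[f']\Pi_{\xi}^{\epsilon}[f]$, and the same holds in the opposite order. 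This proves commutation for $\epsilon>0,\ \epsilon'=0$; repeating the argument with $\epsilon'=0$ now fixed and $\epsilon\to 0$ (with $\xi,\xi'\neq 0$, so all forms remain defined) gives the case $\epsilon=\epsilon'=0$. This route never appeals to the excluded degenerate case of Lemma~\ref{lem:prod}.
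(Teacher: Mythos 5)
Your proof is correct and follows essentially the same route as the paper's: commutativity of the pointwise forms from the symmetry in Lemma~\ref{lem:prod}, combined with a weak-limit argument via Lemma~\ref{lem:weakcontint} (taking the limit in one factor at a time, with the other factor absorbed into a fixed vector) for the cases where the pointwise operators fail to exist. The only cosmetic difference is that the paper invokes the limit argument only for the genuinely degenerate case ($\epsilon=\epsilon'=0$ with $\xi,\xi'$ linearly dependent), since Lemma~\ref{lem:prod} already covers $\epsilon=\epsilon'=0$ with linearly independent $\xi,\xi'$, whereas you route all $\epsilon=0$ or $\epsilon'=0$ cases through the limit; both are fine.
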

\begin{proof}
    In the context of Lemma~\ref{lem:fbound} this follows from Lemma~\ref{lem:prod}. In the special case that $\xi,\xi'$ are linearly dependent and $\epsilon=\epsilon'=0$ we can set $\epsilon'>0$ and take the weak limit $\epsilon'\to 0$ using Lemma~\ref{lem:weakcontint}. (The weak limit is sufficient here as we already know of the existence of the limiting operator and just have to check that its matrix elements vanish.) % In that case, if $f,f'$ do not have compact support, we can approximate them with sequences of functions of compact support.
\end{proof}

Denote by $\chi_A$ the characteristic function of a set $A$.
Denote by $\cM$ be the Borel $\sigma$-algebra of $\R$.

\begin{lem}
    \label{lem:intproj}
    Let $A,B\in\cM$. Then,
    \begin{equation}
    \Pi_{\xi}[\chi_A] \Pi_{\xi}[\chi_B] = \Pi_{\xi}[\chi_{A\cap B}] .
    \end{equation}
\end{lem}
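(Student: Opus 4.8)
The operators $\Pi_\xi[\chi_A]$ (the case $\epsilon=0$ of the functional calculus of Lemma~\ref{lem:fbound}) all refer to the single field operator $\wq{D_\xi}$, so the claim is really a statement about its functional calculus: it asserts that the family $\{\Pi_\xi[\chi_A]\}_{A\in\cM}$ multiplies like a projection-valued measure. The plan is therefore to identify this family with the spectral measure $E_\xi$ of the self-adjoint operator $\wq{D_\xi}$, after which the identity $\Pi_\xi[\chi_A]\Pi_\xi[\chi_B]=\Pi_\xi[\chi_{A\cap B}]$ is nothing but the defining multiplicativity $E_\xi(A)E_\xi(B)=E_\xi(A\cap B)$ of such a measure together with $\chi_A\chi_B=\chi_{A\cap B}$.

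The key step is this identification. By Lemma~\ref{lem:integrals} at $\epsilon=0$, the operator $\Pi_\xi[e_s]$ has coherent-state matrix elements $\langle\coh_\gamma,\coh_\beta\rangle\exp(\im s D_\xi(\beta,\gamma))\exp(-\tfrac{s^2}{4}\|\xi\|^2)$, which are exactly those of $\exp(\im s\wq{D_\xi})$ recorded in \eqref{eq:charspec}; since the coherent states are dense in $\cH$ and both operators are bounded, $\Pi_\xi[e_s]=\exp(\im s\wq{D_\xi})$, the strongly continuous unitary group generated by $\wq{D_\xi}$. Let $E_\xi$ denote the associated spectral measure. Fixing $\beta,\gamma$, I would then compare the two set functions
\begin{equation}
A\longmapsto\langle\coh_\gamma,\Pi_\xi[\chi_A]\coh_\beta\rangle=\int_A\xd q\, B_\xi(q)(\coh_\gamma,\coh_\beta),\qquad A\longmapsto\langle\coh_\gamma,E_\xi(A)\coh_\beta\rangle .
\end{equation}
Both are finite complex Borel measures on $\R$ (the first because \eqref{eq:smeasure} is an integrable Gaussian density, the second by Cauchy--Schwarz for the projection-valued $E_\xi$), and both have the same characteristic function: the Fourier transform of the first equals $\langle\coh_\gamma,\Pi_\xi[e_s]\coh_\beta\rangle$ by Lemma~\ref{lem:integrals}, that of the second equals $\langle\coh_\gamma,\exp(\im s\wq{D_\xi})\coh_\beta\rangle$ by the spectral theorem, and these coincide by the previous identification.

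By injectivity of the Fourier transform on finite complex measures the two measures agree, so $\langle\coh_\gamma,\Pi_\xi[\chi_A]\coh_\beta\rangle=\langle\coh_\gamma,E_\xi(A)\coh_\beta\rangle$ for every $A\in\cM$; density of the coherent states and boundedness of both operators (Lemma~\ref{lem:fbound}) then give $\Pi_\xi[\chi_A]=E_\xi(A)$, and the lemma follows. I expect the main obstacle to be the bookkeeping of this measure-theoretic identification rather than any new estimate: one must check that the matrix-element set functions are genuinely $\sigma$-additive complex measures, invoke uniqueness of the Fourier transform, and grant---as the motivation already does in writing $\exp(\im s\wq{D_\xi})$---that $\wq{D_\xi}$ is essentially self-adjoint on $\cD$, so that Stone's theorem supplies $E_\xi$. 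A more self-contained alternative would avoid $E_\xi$ and instead evaluate $\Pi_\xi^{\epsilon}[\chi_A]\Pi_\xi^{\epsilon}[\chi_B]$ for $\epsilon>0$ by the composition law of Lemma~\ref{lem:comp}, whose Gaussian kernel $\tfrac{1}{\sqrt{2\pi}\,\epsilon}\exp(-(q-q')^2/2\epsilon^2)$ acts as an approximate identity as $\epsilon\to 0$; the difficulty there is that operator products do not pass to weak limits, so one would first have to upgrade the weak convergence $\Pi_\xi^{\epsilon}[\chi_A]\to\Pi_\xi[\chi_A]$ of Lemma~\ref{lem:weakcontint} to strong convergence through a Cauchy estimate built once more from Lemma~\ref{lem:comp}.
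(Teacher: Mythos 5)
Your proposal is correct in outline but follows a genuinely different route from the paper. You identify $\Pi_\xi[e_s]$ with the unitary group $\exp(\im s\wq{D_\xi})$, pull in the spectral measure $E_\xi$ via Stone's theorem, match the two matrix-element measures by uniqueness of the Fourier transform, and then import multiplicativity from the abstract spectral theorem. The paper instead proves multiplicativity from scratch: it first shows that $\Pi_\xi[\chi_{[a,b]}]$ is idempotent by writing $\langle\coh_\gamma,\Pi_\xi[\chi_{[a,b]}]\Pi_\xi^{\epsilon}[\chi_{[a,b]}]\coh_\beta\rangle$ with only \emph{one} factor regularized --- which is exactly how it sidesteps the weak-limit-of-products obstruction you correctly flag for your "alternative" route --- then uses the composition law of Lemma~\ref{lem:comp}, whose Gaussian kernel acts as an approximate identity as $\epsilon\to 0$, to collapse the double integral; disjoint finite intervals are handled by expanding $(\Pi_\xi[\chi_A]+\Pi_\xi[\chi_B])^2$, and the general case follows since finite intervals generate $\cM$. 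What each approach buys: yours is shorter and conceptually transparent, but it leans on two inputs the paper never establishes and apparently wants to avoid, namely essential self-adjointness of $\wq{D_\xi}$ on $\cD$ (needed for $E_\xi$ to exist; true by Nelson's analytic vector theorem, since coherent states are analytic vectors, but it is an extra theorem) and the rigorous identification of the coherent-state matrix elements of the functional-calculus exponential with expression \eqref{eq:charspec}, which in the paper is only motivational. Since Theorem~\ref{thm:povmspecdec} is precisely the statement that the constructed measure \emph{is} the spectral decomposition, your argument front-loads that conclusion into the lemma, whereas the paper's explicit computation keeps the construction self-contained and independent of the abstract spectral theorem. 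Your second, $\epsilon$-regularized alternative is essentially the paper's actual proof, and your diagnosis of its difficulty is accurate; the resolution is not to upgrade to strong convergence but to regularize only one of the two factors.
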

\begin{proof}
    Due to boundedness of the operators it is enough to show this for matrix elements in $\cD$. Moreover, it is enough to consider finite intervals for $A,B$ as they generate the Borel $\sigma$-algebra on $\R$, using positivity, additivity and the bound $0\le\Pi_{\xi}[\chi_X]\le\id$ for any $X\in\cM$.
    
    We first consider a finite interval $[a,b]$. Then, using Lemmas~\ref{lem:weakcontint} and \ref{lem:comp}, % \ref{lem:prodepsilonzero},
    \begin{align}
        \langle\coh_{\gamma},\Pi_{\xi}[\chi_{[a,b]}]\Pi_{\xi}[\chi_{[a,b]}]\coh_{\beta}\rangle
        & =\lim_{\epsilon\to 0} \langle\coh_{\gamma},\Pi_{\xi}[\chi_{[a,b]}]\Pi_{\xi}^{\epsilon}[\chi_{[a,b]}]\coh_{\beta}\rangle \\
        & =\lim_{\epsilon\to 0} B_{\xi}[\chi_{[a,b]}]\star B_{\xi}^{\epsilon}[\chi_{[a,b]}](\coh_{\gamma},\coh_{\beta}) \\
        & =\lim_{\epsilon\to 0} \int_{a}^{b} \xd q' \int_{a}^{b} \xd q\,
        B_{\xi}(q')\star B_{\xi}^{\epsilon}(q)(\coh_{\gamma},\coh_{\beta}) \\
        & =\lim_{\epsilon\to 0} \int_{a}^{b} \xd q' \int_{a}^{b} \xd q\,
        \frac{1}{\sqrt{\pi}\, \epsilon}\exp\left(-\frac{1}{\epsilon^2}(q-q')^2\right) B_{\xi}(q')(\coh_{\gamma},\coh_{\beta}) \\
        & =\int_{a}^{b} \xd q'\,  B_{\xi}(q')(\coh_{\gamma},\coh_{\beta})\lim_{\epsilon\to 0} \int_{a}^{b} \xd q\,
        \frac{1}{\sqrt{\pi}\, \epsilon}\exp\left(-\frac{1}{\epsilon^2}(q-q')^2\right) \\
        & =\int_{a}^{b} \xd q'\,  B_{\xi}(q')(\coh_{\gamma},\coh_{\beta})\lim_{\epsilon\to 0} \int_{-\infty}^{\infty} \xd q\,
        \frac{1}{\sqrt{\pi}\, \epsilon}\exp\left(-\frac{1}{\epsilon^2}(q-q')^2\right) \\
        & =\int_{a}^{b} \xd q'\,  B_{\xi}(q')(\coh_{\gamma},\coh_{\beta})\lim_{\epsilon\to 0} \int_{-\infty}^{\infty} \xd q\,
        \frac{1}{\sqrt{\pi}\, \epsilon}\exp\left(-\frac{1}{\epsilon^2}q^2\right) \\
        & =\int_{a}^{b} \xd q'\,  B_{\xi}(q')(\coh_{\gamma},\coh_{\beta})
        = \langle\coh_{\gamma},\Pi_{\xi}[\chi_{[a,b]}]\coh_{\beta}\rangle .
    \end{align}
    With this, we have established that $\Pi_{\xi}[\chi_{[a,b]}]$ is an orthogonal projection. Suppose now that $A,B\in\cM$ are disjoint finite intervals. Then, using commutativity (Lemma~\ref{lem:dcommute}),
    \begin{multline}
        \Pi_{\xi}[\chi_{A\cup B}]
        =(\Pi_{\xi}[\chi_{A\cup B}])^2
        =(\Pi_{\xi}[\chi_{A}]+\Pi_{\xi}[\chi_{B}])^2 \\
        =\Pi_{\xi}[\chi_{A}]+\Pi_{\xi}[\chi_{B}]+2 \Pi_{\xi}[\chi_{A}]\Pi_{\xi}[\chi_{B}]
        =\Pi_{\xi}[\chi_{A\cup B}]+2 \Pi_{\xi}[\chi_{A}]\Pi_{\xi}[\chi_{B}] .
    \end{multline}
    In particular, $\Pi_{\xi}[\chi_{A}]\Pi_{\xi}[\chi_{B}]=0$. Using that finite intervals generate the Borel $\sigma$-algebra on $\R$ we can complete the proof.
\end{proof}

\begin{thm}
    \label{thm:povmspecdec}
    The map $\cM\to B(\cH)$ given by $A\mapsto \Pi_{\xi}^{\epsilon}[\chi_A]$ is a positive-operator-valued measure (POVM), providing a decomposition of the operator $\wq{D_{\xi}}$ in the sense,
    \begin{equation}
        \langle\eta,\Pi_{\xi}^{\epsilon}[\id]\psi\rangle\defeq B_{\xi}^{\epsilon}[\id](\eta,\psi) = \langle\eta,\wq{D_{\xi}}\psi\rangle\qquad\forall\eta,\psi\in\cD .
        \label{eq:decomp}
    \end{equation}
    If $\epsilon=0$ this map is even a spectral measure, i.e.\ a projection valued measure (PVM). It provides the spectral decomposition of $\wq{D_{\xi}}$.
\end{thm}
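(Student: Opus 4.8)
The plan is to verify the three defining properties of a POVM—positivity of the values, normalization, and $\sigma$-additivity—drawing almost entirely on the preceding lemmas, and then to upgrade the $\epsilon=0$ case to a PVM. For any $A\in\cM$ the function $\chi_A$ is real, nonnegative, and bounded, so Lemma~\ref{lem:fbound} already provides that $\Pi_{\xi}^{\epsilon}[\chi_A]$ is a well-defined bounded self-adjoint positive operator, and $\chi_A\le\one$ together with $\Pi_{\xi}^{\epsilon}[\one]=\id$ gives $0\le\Pi_{\xi}^{\epsilon}[\chi_A]\le\id$. Normalization $\Pi_{\xi}^{\epsilon}[\chi_{\R}]=\Pi_{\xi}^{\epsilon}[\one]=\id$ and the decomposition property~\eqref{eq:decomp} are read off directly from equations~\eqref{eq:intconst} and~\eqref{eq:intlin} of Lemma~\ref{lem:integrals}, the latter extended from coherent states to all of $\cD$ by sesquilinearity. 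The only genuinely new work is the countable additivity.

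For fixed $\psi\in\cD$ and fixed parameters $(\xi,\epsilon)$, the assignment $A\mapsto\int_A\xd q\,B_{\xi}^{\epsilon}(q)(\psi,\psi)$ defines a finite positive Borel measure on $\R$: the integrand is nonnegative by Lemma~\ref{lem:pos}, continuous hence measurable by Lemma~\ref{lem:weakcont}, and of total mass $\langle\psi,\psi\rangle$ by~\eqref{eq:intconst}. Countable additivity of this scalar measure gives, for a disjoint family $A=\bigsqcup_n A_n$, the relation $\langle\psi,\Pi_{\xi}^{\epsilon}[\chi_A]\psi\rangle=\sum_n\langle\psi,\Pi_{\xi}^{\epsilon}[\chi_{A_n}]\psi\rangle$ for all $\psi\in\cD$. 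To promote this to an operator identity on $\cH$, I would observe that the partial sums $S_N=\sum_{n\le N}\Pi_{\xi}^{\epsilon}[\chi_{A_n}]=\Pi_{\xi}^{\epsilon}[\chi_{\bigcup_{n\le N}A_n}]$ (using linearity of $f\mapsto\Pi_{\xi}^{\epsilon}[f]$ and finite additivity of $\chi$) form an increasing sequence of self-adjoint operators bounded above by $\id$; such a sequence converges in the strong operator topology to its supremum $S\le\id$. Since both $S$ and $\Pi_{\xi}^{\epsilon}[\chi_A]$ are bounded and their diagonal matrix elements agree on the dense subspace $\cD$, they coincide, so the series $\sum_n\Pi_{\xi}^{\epsilon}[\chi_{A_n}]$ converges strongly to $\Pi_{\xi}^{\epsilon}[\chi_A]$. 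This establishes the POVM property for every $\epsilon\ge 0$, with $\xi\ne 0$ understood whenever $\epsilon=0$.

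For the PVM claim at $\epsilon=0$ I would invoke Lemma~\ref{lem:intproj}: specializing to $B=A$ gives $\Pi_{\xi}[\chi_A]^2=\Pi_{\xi}[\chi_A]$, which together with self-adjointness (Lemma~\ref{lem:fbound}) makes each $\Pi_{\xi}[\chi_A]$ an orthogonal projection, while the general multiplicativity $\Pi_{\xi}[\chi_A]\Pi_{\xi}[\chi_B]=\Pi_{\xi}[\chi_{A\cap B}]$ is exactly that lemma. Combined with the normalization and $\sigma$-additivity just established, the map $A\mapsto\Pi_{\xi}[\chi_A]$ is thus a projection-valued measure. To identify it as the spectral measure of $\wq{D_{\xi}}$, I would compare characteristic functions: at $\epsilon=0$ the third identity of Lemma~\ref{lem:integrals} reduces to~\eqref{eq:charspec}, giving $\int_{-\infty}^{\infty}e^{\im s q}\,\xd\langle\eta,\Pi_{\xi}(q)\psi\rangle=\langle\eta,\exp(\im s\wq{D_{\xi}})\psi\rangle$ first on $\cD$ and then, by boundedness and density, on all of $\cH$. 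The uniqueness clause of Stone's theorem then identifies $\Pi_{\xi}$ with the spectral measure of the self-adjoint generator $\wq{D_{\xi}}$, and~\eqref{eq:decomp} expresses the associated decomposition.

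The step I expect to require the most care is the $\sigma$-additivity: passing cleanly from additivity of the scalar measures on the dense domain $\cD$ to strong operator convergence on all of $\cH$, for which the monotone convergence of the bounded self-adjoint partial sums is the key tool. Everything else reduces directly to the cited lemmas.
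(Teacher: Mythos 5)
Your proposal is correct, and its overall skeleton (positivity, normalization and the decomposition identity all read off from Lemmas~\ref{lem:fbound} and \ref{lem:integrals}, with $\sigma$-additivity as the only substantive step) matches the paper. The difference lies in how the $\sigma$-additivity is promoted from the dense domain $\cD$ to all of $\cH$. The paper first establishes weak $\sigma$-additivity on $\cH$ by a $3\delta$ approximation argument --- choosing $\eta\in\cD$ close to $\psi\in\cH$, using the uniform bound $\|\Pi_{\xi}^{\epsilon}[\chi_X]\|_{\nop}\le 1$ to control $|B_{\xi}^{\epsilon}[\chi_X](\psi,\psi)-B_{\xi}^{\epsilon}[\chi_X](\eta,\eta)|$ --- and then bootstraps to strong $\sigma$-additivity by expanding $\|(\Pi_{\xi}^{\epsilon}[\chi_A]-\sum_n\Pi_{\xi}^{\epsilon}[\chi_{A_n}])\psi\|^2$ and using positivity of the defect operator. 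You instead invoke monotone (Vigier-type) convergence of the increasing, uniformly bounded partial sums $S_N$ to get strong convergence in one stroke, and then identify the strong limit with $\Pi_{\xi}^{\epsilon}[\chi_A]$ via agreement of quadratic forms on $\cD$ plus polarization; this is a cleaner and entirely valid alternative, and it also treats $\epsilon>0$ and $\epsilon=0$ uniformly, whereas the paper dispatches $\epsilon>0$ separately via the strong-topology integral in definition~\eqref{eq:defpint}. For the PVM claim you correctly route through Lemma~\ref{lem:intproj} (idempotency and multiplicativity), which the paper's proof of the theorem leaves implicit, and you go further than the paper by identifying the PVM with the spectral measure of $\wq{D_{\xi}}$ via characteristic functions and Stone's theorem; that last step is sound in outline but tacitly uses that $s\mapsto\exp(\im s\wq{D_{\xi}})$ defined by Weyl quantization is a strongly continuous one-parameter unitary group whose generator restricts to $\wq{D_{\xi}}$ on $\cD$ --- a point the paper also does not make explicit, so I would not count it as a gap, only as something to flag if you wanted a fully self-contained argument.
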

\begin{proof}
    We use Lemma~\ref{lem:fbound} throughout. In particular, it implies positivity of $\Pi_{\xi}^{\epsilon}[\chi_A]$ for $A\in\cM$. Identity \eqref{eq:intlin} of Lemma~\ref{lem:integrals} yields expression~\eqref{eq:decomp}. By identity \eqref{eq:intconst} of Lemma~\ref{lem:integrals} we have $\Pi_{\xi}^{\epsilon}[\chi_{\R}]=\id_{\cH}$. Also, $\Pi_{\xi}^{\epsilon}[\chi_{\emptyset}]=0$ is clear. If $\epsilon>0$, strong  $\sigma$-additivity follows from definition~\eqref{eq:defpint}.
    Otherwise, we note that $\sigma$-additivity
    for matrix elements in $\cD$ follows from definition~\eqref{eq:defbint}. This readily implies finite additivity in $\cH$. To show weak $\sigma$-additivity in $\cH$ it is sufficient to consider matrix elements of the form $B_{\xi}^{\epsilon}[\chi_A](\psi,\psi)$ for $\psi\in\cH$. Fix $\delta>0$ and let $\eta\in\cD$ such that $(\|\psi\|+\|\eta\|) \|\psi-\eta\|<\delta$. Let $\{A_n\}_{n\in\N}$ be a sequence of disjoint elements in $\cM$ with $A=\cup_{n\in\N} A_n$.
    
    Note that for any bounded operator $B$ we have,
    \begin{equation}
        \left|\langle \psi,B\psi\rangle-\langle \eta,B\eta\rangle\right|
        =\left|\langle \psi,B(\psi-\eta)\rangle+\langle \psi-\eta,B\eta\rangle\right|
        \le \| B\|_{\Lop} (\|\psi\|+\|\eta\|) \|\psi-\eta\| .
    \end{equation}
    By Lemma~\ref{lem:fbound}, $\|\Pi_{\xi}^{\epsilon}[\chi_A]\|_{\Lop}\le \|\chi_A\|_{\infty}=1$ and thus,
    \begin{equation}
        \left|B_{\xi}^{\epsilon}[\chi_A](\psi,\psi)-B_{\xi}^{\epsilon}[\chi_A](\eta,\eta)\right|<\delta .
        \label{eq:sd1}
    \end{equation}
    By $\sigma$-additivity in $\cD$ we have,
    \begin{equation}
        B_{\xi}^{\epsilon}[\chi_A](\eta,\eta)=\sum_{n=1}^{\infty} B_{\xi}^{\epsilon}[\chi_{A_n}](\eta,\eta) .
    \end{equation}
    By positivity of the summands there is $k\in\N$ such that
    \begin{equation}
        \left|B_{\xi}^{\epsilon}[\chi_A](\eta,\eta)-\sum_{n=1}^{k} B_{\xi}^{\epsilon}[\chi_{A_n}](\eta,\eta)\right|< \delta.
        \label{eq:sd2}
    \end{equation}
    On the other hand, using finite additivity and the same type of estimate as \eqref{eq:sd1} we have,
    \begin{equation}
        \left|\sum_{n=1}^{k} B_{\xi}^{\epsilon}[\chi_{A_n}](\psi,\psi)-\sum_{n=1}^{k} B_{\xi}^{\epsilon}[\chi_{A_n}](\eta,\eta)\right|< \delta.
        \label{eq:sd3}
    \end{equation}
    Combining \eqref{eq:sd1}, \eqref{eq:sd2} and \eqref{eq:sd3} yields the inequality,
    \begin{equation}
        \left|B_{\xi}^{\epsilon}[\chi_{A}](\psi,\psi)-\sum_{n=1}^{k} B_{\xi}^{\epsilon}[\chi_{A_n}](\psi,\psi)\right|< 3\delta.
    \end{equation}
    Since the difference on the left-hand side is non-negative, this implies convergence and thus the required weak $\sigma$-additivity. Now strong $\sigma$-additivity follows from weak $\sigma$-additivity as follows. Let $\psi\in\cH$ and consider,
    \begin{equation}
        \left\| \left(\Pi_{\xi}^{\epsilon}[\chi_A]-\sum_{n=1}^{\infty}\Pi_{\xi}^{\epsilon}[\chi_{A_n}]\right)\psi\right\|^2
        =\left\langle\psi,\left(\Pi_{\xi}^{\epsilon}[\chi_A]-\sum_{n=1}^{\infty}\Pi_{\xi}^{\epsilon}[\chi_{A_n}]\right)\left(\Pi_{\xi}^{\epsilon}[\chi_A]-\sum_{n=1}^{\infty}\Pi_{\xi}^{\epsilon}[\chi_{A_n}]\right)\psi\right\rangle
    \end{equation}
    Note that $\Pi_{\xi}^{\epsilon}[\chi_A]-\sum_{n=1}^{\infty}\Pi_{\xi}^{\epsilon}[\chi_{A_n}]$ is positive due to finite additivity and bounded by the identity. It is thus a bounded operator. We can then carry out the inner sum by weak $\sigma$-additivity, and we obtain zero.
\end{proof}

%!TEX root = main.tex

\section{Quantum operations}
\label{sec:quantop}

\subsection{Discrete outcomes}
\label{sec:doutcomes}

In non-relativistic quantum mechanics, we use the spectral decomposition of a self-adjoint operator to construct the quantum operations that encode the associated measurements. Given the spectral decomposition of the operator $\wq{D_\xi}$, there is no impediment to proceed in an analogous way in quantum field theory, as long as we are interested in discrete outcomes.

Denote by $\cT\subseteq\cB$ the space of trace-class operators on $\cH$ with the trace norm denoted by $\|\cdot\|_{\ntr}$. Let $X\subseteq\R$ be a Lebesgue measurable subset. Define the map $Q_{\xi}[X]:\cT\to\cT$ by,
\begin{equation}
    Q_{\xi}[X](\sigma)\defeq \Pi_{\xi}[\chi_{X}]\, \sigma\, \Pi_{\xi}[\chi_{X}] .
    \label{eq:projop}
\end{equation}
Evidently, $Q_{\xi}[X]$ is completely positive. $Q_{\xi}[X]$ encodes a \emph{selective quantum operation} that tests whether a measurement of the observable $D_{\xi}$ yields an outcome in the subset $X\subseteq\R$. In order to obtain a complete set of alternative outcomes we consider a countable partition of $\R$, say $\{X_k\}_{k\in I}$. That is, $X_k\subseteq\R$ is Lebesgue measurable, $X_k\cap X_j=\emptyset$ if $k\neq j$ and $\R=\bigcup_{k\in I} X_k$. Define $Q_{\xi}:\cT\to\cT$ by $Q_{\xi}\defeq\sum_{k\in I} Q_{\xi}[X_k]$. Then, $Q_{\xi}$ is completely positive and trace-preserving. $Q_{\xi}$ encodes the \emph{non-selective quantum operation} that corresponds to performing the measurement, without reading any outcome. The measurement is projective and thus \emph{ideal} in the sense $Q_{\xi}[X] \circ Q_{\xi}[X] = Q_{\xi}[X]$ and $Q_{\xi}\circ Q_{\xi}= Q_{\xi}$.

The type of measurement considered here extracts from the quantum system only the coarse grained information about which subset carries the outcome. Any two different partitions define different measurements and in particular different quantum operations $Q_{\xi}$. Evidently, there is a hierarchy of such measurements, partially ordered by inclusion of partitions, and we can speak of coarsenings and refinements.

\subsection{Continuous outcomes}
\label{sec:coutcomes}

In the present section, we consider measurements where we extract the precise value of the observable of interest. Concretely, we want to construct quantum operations that implement such measurements of the observable $D_{\xi}$. Naively, we might expect a quantum operation that selects outcome $q\in\R$ to take the form,
\begin{equation}
    \sigma\mapsto \Pi_{\xi}(q) \sigma \Pi_{\xi}(q) ,
    \label{eq:opqtry}
\end{equation}
with $\Pi_{\xi}(q)$ the operator induced by the sesquilinear form $B_{\xi}(q)$ of equation \eqref{eq:smeasure}, the key ingredient of the spectral decomposition of $\wq{D_{\xi}}$.
Then, weighing with the outcome value, we might expect the quantum operation encoding measurement of the expectation value to take the form,
\begin{equation}
    \sigma\mapsto \int_{-\infty}^{\infty} \xd q\, q\, \Pi_{\xi}(q) \sigma \Pi_{\xi}(q) .
    \label{eq:opqinttry}
\end{equation}
However, the operator $\Pi_{\xi}(q)$ does not exist as the sesquilinear form $B_{\xi}(q)$ does not admit an extension to the whole Hilbert space $\cH$. Thus, expression \eqref{eq:opqtry} is not well-defined and neither is expression \eqref{eq:opqinttry}. On the other hand, we do have a family of POVM decompositions of $\wq{D_{\xi}}$ that approximate the spectral decomposition weakly (Lemmas~\ref{lem:weakcont} and \ref{lem:weakcontint}). This suggests to replacing the would-be operators $\Pi_{\xi}(q)$ by the operators $\Pi_{\xi}^{\epsilon}(q)$ with $\epsilon>0$. This indeed makes expressions~\eqref{eq:opqtry} and \eqref{eq:opqinttry} well-defined (in the latter case depending on $\sigma$). What is more, replacing a PVM with a POVM conserves most features relevant for the interpretation as a quantum measurement. We might envisage taking a limit $\epsilon\to 0$ at the end. It turns out that there is another modification necessary of expression \eqref{eq:opqinttry} and thus also of expression~\eqref{eq:opqtry}: A suitable normalization factor has to be inserted.

\begin{lem}
    \label{lem:defopq}
    For $\epsilon>0$, $\xi\in\cH$, $q\in\R$, $\sigma\in\cT$ define,
    \begin{equation}
        A_{\xi}^{\epsilon}(q)(\sigma)\defeq \sqrt{2\pi}\,\epsilon\, \Pi_{\xi}^{\epsilon}(q) \sigma \Pi_{\xi}^{\epsilon}(q) .
        \label{eq:defopq}
    \end{equation}
    \begin{enumerate}[label=\alph*)]
        \item The map $\cT\to\cT:\sigma\mapsto A_{\xi}^{\epsilon}(q)(\sigma)$ is completely positive, and continuous in the trace norm. Moreover, continuity is uniform in $\xi$ and uniform in $\epsilon$ for $\epsilon\ge c$, where $c>0$ is an arbitrary positive constant.
        \item The map $\R\to\cT:q\mapsto A_{\xi}^{\epsilon}(q)(\sigma)$ is continuous in the trace norm.
        \item The map $\R^+\to\cT:\epsilon\mapsto A_{\xi}^{\epsilon}(q)(\sigma)$ is continuous in the trace norm.
    \end{enumerate}
\end{lem}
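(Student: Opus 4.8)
The plan is to reduce all three claims to two facts established earlier: the operator-norm bound $\|\Pi_{\xi}^{\epsilon}(q)\|_{\nop}\le \frac{1}{\sqrt{\pi}\,\epsilon}$ from Lemma~\ref{lem:bopq}, and the norm continuity of $(\epsilon,q)\mapsto \Pi_{\xi}^{\epsilon}(q)$ from Lemma~\ref{lem:normcont}. The only additional ingredient is the elementary trace-ideal inequality $\|B\sigma C\|_{\ntr}\le \|B\|_{\nop}\|C\|_{\nop}\|\sigma\|_{\ntr}$, valid for bounded $B,C$ and trace-class $\sigma$, which also guarantees $\Pi_{\xi}^{\epsilon}(q)\sigma\Pi_{\xi}^{\epsilon}(q)\in\cT$.

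For part a), complete positivity is immediate. Writing $\Pi=\Pi_{\xi}^{\epsilon}(q)$, which is self-adjoint by Lemma~\ref{lem:bopq}, the map $\sigma\mapsto \Pi\sigma\Pi=\Pi\sigma\Pi^{\dagger}$ is of Kraus form with a single Kraus operator and is therefore completely positive; multiplication by the positive scalar $\sqrt{2\pi}\,\epsilon$ preserves this. Continuity of the linear map $\sigma\mapsto A_{\xi}^{\epsilon}(q)(\sigma)$ is equivalent to its boundedness, and the trace-ideal inequality together with Lemma~\ref{lem:bopq} gives
\[
\|A_{\xi}^{\epsilon}(q)(\sigma)\|_{\ntr}\le \sqrt{2\pi}\,\epsilon\,\|\Pi\|_{\nop}^{2}\|\sigma\|_{\ntr}\le \frac{\sqrt{2}}{\sqrt{\pi}\,\epsilon}\,\|\sigma\|_{\ntr} .
\]
Since this bound is independent of $\xi$ and is at most $\frac{\sqrt{2}}{\sqrt{\pi}\,c}$ whenever $\epsilon\ge c$, the stated uniformity in $\xi$ and in $\epsilon\ge c$ follows at once.

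For parts b) and c), I would use the telescoping identity $\Pi\sigma\Pi-\Pi'\sigma\Pi'=\Pi\sigma(\Pi-\Pi')+(\Pi-\Pi')\sigma\Pi'$, where $\Pi,\Pi'$ denote $\Pi_{\xi}^{\epsilon}(q)$ evaluated at the two nearby values of the varying parameter. Taking trace norms and applying the trace-ideal inequality bounds each summand by a product of operator norms, which stay locally bounded by Lemma~\ref{lem:bopq}, times $\|\Pi-\Pi'\|_{\nop}\|\sigma\|_{\ntr}$; the last factor tends to $0$ by the norm continuity of Lemma~\ref{lem:normcont}. For part b) the prefactor $\sqrt{2\pi}\,\epsilon$ is constant, so this finishes the argument. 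For part c) the prefactor also varies, so I would first split, with $\Pi_{\epsilon}\defeq\Pi_{\xi}^{\epsilon}(q)$,
\[
A_{\xi}^{\epsilon}(q)(\sigma)-A_{\xi}^{\epsilon'}(q)(\sigma)=\sqrt{2\pi}\,\epsilon\bigl(\Pi_{\epsilon}\sigma\Pi_{\epsilon}-\Pi_{\epsilon'}\sigma\Pi_{\epsilon'}\bigr)+\sqrt{2\pi}\,(\epsilon-\epsilon')\,\Pi_{\epsilon'}\sigma\Pi_{\epsilon'} ,
\]
treat the first summand by the telescoping estimate above, and bound the second by $\sqrt{2\pi}\,|\epsilon-\epsilon'|\,\|\Pi_{\epsilon'}\|_{\nop}^{2}\|\sigma\|_{\ntr}$, which vanishes as $\epsilon'\to\epsilon$ since the base point $\epsilon$ is fixed and positive.

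The work here is essentially bookkeeping rather than conceptual, so there is no single hard obstacle; the one point requiring care is that the operator-norm bound $\frac{1}{\sqrt{\pi}\,\epsilon}$ diverges as $\epsilon\to 0$. This is precisely why the uniformity in a) is only asserted for $\epsilon\ge c$, and why in c) one must keep $\epsilon'$ in a neighborhood of $\epsilon$ bounded away from $0$ before the estimates take hold; the uniformity in $\xi$, by contrast, is automatic because the bound of Lemma~\ref{lem:bopq} does not involve $\xi$ at all.
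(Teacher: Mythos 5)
Your proposal is correct and follows essentially the same route as the paper: complete positivity from the single-Kraus-operator form, the trace-ideal inequality combined with the bound of Lemma~\ref{lem:bopq} for part a), and a telescoping decomposition combined with Lemma~\ref{lem:normcont} for parts b) and c). The only cosmetic difference is in c), where you split off the scalar prefactor $\sqrt{2\pi}\,\epsilon$ separately, whereas the paper absorbs $\sqrt{\epsilon}$ into the operators and telescopes in $\sqrt{\epsilon}\,\Pi_{\xi}^{\epsilon}(q)$; both reduce to the same continuity statement.
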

\begin{proof}
    For a) complete positivity follows directly from the definition, due to the self-adjointness of $\Pi_{\xi}^{\epsilon}(q)$. As for continuity, due to linearity we consider
    \begin{equation}
        \left\|A_{\xi}^{\epsilon}(q)(\sigma)\right\|_{\ntr}
        =\sqrt{2\pi}\epsilon \left\|\Pi_{\xi}^{\epsilon}(q) \sigma \Pi_{\xi}^{\epsilon}(q)\right\|_{\ntr}
        \le \sqrt{2\pi}\epsilon \|\Pi_{\xi}^{\epsilon}(q)\|_{\nop}^2
        \|\sigma\|_{\ntr}
        \le \sqrt{\frac{2}{\pi}}\frac{1}{\epsilon}\|\sigma\|_{\ntr}
        \le \sqrt{\frac{2}{\pi}}\frac{1}{c}\|\sigma\|_{\ntr} .
    \end{equation}
    Here we have used the bound on $\Pi_{\xi}^{\epsilon}(q)$ of Lemma~\ref{lem:bopq}.
    For b) consider,
    \begin{multline}
        \left\|A_{\xi}^{\epsilon}(q)(\sigma)-A_{\xi}^{\epsilon}(q')(\sigma)\right\|_{\ntr}
        =\sqrt{2\pi}\epsilon \left\|\Pi_{\xi}^{\epsilon}(q) \sigma
        \left(\Pi_{\xi}^{\epsilon}(q)-\Pi_{\xi}^{\epsilon}(q')\right)
        +\left(\Pi_{\xi}^{\epsilon}(q)-\Pi_{\xi}^{\epsilon}(q')\right)
        \sigma \Pi_{\xi}^{\epsilon}(q')\right\|_{\ntr}\\
        \le \sqrt{2\pi}\epsilon \|\sigma\|_{\ntr}
        \left(\|\Pi_{\xi}^{\epsilon}(q)\|_{\nop} + \|\Pi_{\xi}^{\epsilon}(q')\|_{\nop}\right) \|\Pi_{\xi}^{\epsilon}(q)-\Pi_{\xi}^{\epsilon}(q')\|_{\nop}
        \le 2\sqrt{2} \|\sigma\|_{\ntr}
        \|\Pi_{\xi}^{\epsilon}(q)-\Pi_{\xi}^{\epsilon}(q')\|_{\nop} .
    \end{multline}
    We have used the bound on $\Pi_{\xi}^{\epsilon}(q)$ of Lemma~\ref{lem:bopq}. Continuity follows then from Lemma~\ref{lem:normcont}. For c) consider,
    \begin{multline}
        \left\|A_{\xi}^{\epsilon}(q)(\sigma)-A_{\xi}^{\epsilon'}(q)(\sigma)\right\|_{\ntr} \\
        =\sqrt{2\pi} \left\|\sqrt{\epsilon}\Pi_{\xi}^{\epsilon}(q) \sigma
        \left(\sqrt{\epsilon} \Pi_{\xi}^{\epsilon}(q)-\sqrt{\epsilon'} \Pi_{\xi}^{\epsilon'}(q)\right)
        +\left(\sqrt{\epsilon} \Pi_{\xi}^{\epsilon}(q)-\sqrt{\epsilon'} \Pi_{\xi}^{\epsilon'}(q)\right)
        \sigma \sqrt{\epsilon'}\Pi_{\xi}^{\epsilon'}(q)\right\|_{\ntr} \\
        \le \sqrt{2\pi} \|\sigma\|_{\ntr}
        \left(\sqrt{\epsilon}\|\Pi_{\xi}^{\epsilon}(q)\|_{\nop}
         + \sqrt{\epsilon'}\|\Pi_{\xi}^{\epsilon'}(q)\|_{\nop}\right) \|\sqrt{\epsilon} \Pi_{\xi}^{\epsilon}(q)-\sqrt{\epsilon'}\Pi_{\xi}^{\epsilon'}(q)\|_{\nop} \\
        \le \sqrt{2} \|\sigma\|_{\ntr}\left(\frac{1}{\sqrt{\epsilon}}+\frac{1}{\sqrt{\epsilon'}}\right)
        \|\sqrt{\epsilon} \Pi_{\xi}^{\epsilon}(q)-\sqrt{\epsilon'}\Pi_{\xi}^{\epsilon'}(q)\|_{\nop} .
    \end{multline}
    Again, we have used the bound on $\Pi_{\xi}^{\epsilon}(q)$ of Lemma~\ref{lem:bopq}. By Lemma~\ref{lem:normcont} the map $\epsilon\mapsto \Pi_{\xi}^{\epsilon}(q)$ is norm continuous. But this is thus also true for the map $\epsilon\mapsto \sqrt{\epsilon}\Pi_{\xi}^{\epsilon}(q)$. This completes the proof.
\end{proof}

\begin{lem}
    \label{lem:qocomp}
\begin{equation}
    A_{\xi}^{\epsilon'}(q')\circ A_{\xi}^{\epsilon}(q)
    =\sqrt{\frac{2}{\pi (\epsilon^2+\epsilon'^2)}}
    \exp\left(-\frac{2}{\epsilon^2 +\epsilon'^2} (q-q')^2\right)
    A_{\xi}^{\epsilon''}\left(\frac{\epsilon'^2 q + \epsilon q'^2}{\epsilon^2+\epsilon'^2}\right),
    \quad\text{with}\;\; \epsilon''^2=\frac{\epsilon^2 \epsilon'^2}{\epsilon^2+\epsilon'^2} .
    \label{eq:qocomp}
\end{equation}
\end{lem}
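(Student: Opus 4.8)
The plan is to reduce the statement to the composition law for the bounded operators $\Pi_{\xi}^{\epsilon}(q)$ already supplied by Lemma~\ref{lem:comp}, and then to bookkeep the Gaussian prefactors. To begin, I would unfold the two quantum operations: applying them in succession to an arbitrary $\sigma\in\cT$ and using the definition \eqref{eq:defopq} gives
\begin{equation}
    \left(A_{\xi}^{\epsilon'}(q')\circ A_{\xi}^{\epsilon}(q)\right)(\sigma)
    =2\pi\,\epsilon\,\epsilon'\,
    \Pi_{\xi}^{\epsilon'}(q')\Pi_{\xi}^{\epsilon}(q)\,\sigma\,
    \Pi_{\xi}^{\epsilon}(q)\Pi_{\xi}^{\epsilon'}(q') .
\end{equation}
The task thus reduces to simplifying the operator product $\Pi_{\xi}^{\epsilon'}(q')\Pi_{\xi}^{\epsilon}(q)$ together with its adjoint $\Pi_{\xi}^{\epsilon}(q)\Pi_{\xi}^{\epsilon'}(q')$.

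Second, I would lift Lemma~\ref{lem:comp} from sesquilinear forms to operators. The point is that the $\star$ product represents operator composition: for bounded operators with $A(\eta,\psi)=\langle\eta,\hat A\psi\rangle$ and $A'(\eta,\psi)=\langle\eta,\hat A'\psi\rangle$, inserting the completeness relation \eqref{eq:cohcompl} into the defining integral of $A'\star A$ yields $(A'\star A)(\eta,\psi)=\langle\eta,\hat A'\hat A\psi\rangle$. Since every $\Pi_{\xi}^{\epsilon}(q)$ is bounded by Lemma~\ref{lem:bopq}, Lemma~\ref{lem:comp} becomes the operator identity
\begin{equation}
    \Pi_{\xi}^{\epsilon'}(q')\Pi_{\xi}^{\epsilon}(q)
    =N\,\Pi_{\xi}^{\epsilon''}(\bar q),\qquad
    N=\frac{1}{\sqrt{\pi(\epsilon^2+\epsilon'^2)}}
    \exp\left(-\frac{(q-q')^2}{\epsilon^2+\epsilon'^2}\right),\quad
    \bar q=\frac{\epsilon'^2 q+\epsilon^2 q'}{\epsilon^2+\epsilon'^2},
\end{equation}
with $\epsilon''^2=\epsilon^2\epsilon'^2/(\epsilon^2+\epsilon'^2)$. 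Since $N$ and $\bar q$ are real and all the operators are self-adjoint, taking the adjoint gives the same right-hand side for $\Pi_{\xi}^{\epsilon}(q)\Pi_{\xi}^{\epsilon'}(q')$.

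Third, substituting both into the sandwich produces $2\pi\,\epsilon\,\epsilon'\,N^2\,\Pi_{\xi}^{\epsilon''}(\bar q)\,\sigma\,\Pi_{\xi}^{\epsilon''}(\bar q)$, and rewriting $\Pi_{\xi}^{\epsilon''}(\bar q)\,\sigma\,\Pi_{\xi}^{\epsilon''}(\bar q)=(\sqrt{2\pi}\,\epsilon'')^{-1}A_{\xi}^{\epsilon''}(\bar q)(\sigma)$ turns the whole expression into a scalar multiple of $A_{\xi}^{\epsilon''}(\bar q)(\sigma)$. Collecting the constant via $\epsilon\epsilon'/\epsilon''=\sqrt{\epsilon^2+\epsilon'^2}$ and $\sqrt{2\pi}/\pi=\sqrt{2/\pi}$, the prefactor $2\pi\,\epsilon\,\epsilon'\,N^2/(\sqrt{2\pi}\,\epsilon'')$ collapses to $\sqrt{2/(\pi(\epsilon^2+\epsilon'^2))}\,\exp\!\left(-2(q-q')^2/(\epsilon^2+\epsilon'^2)\right)$, which is exactly the claimed coefficient (and the new mean $\bar q$ is the argument of $A_{\xi}^{\epsilon''}$). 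As $\sigma\in\cT$ was arbitrary, this is an identity of superoperators.

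I expect no genuine obstacle here, since Lemma~\ref{lem:comp} already carries all the analytic content. The only mild care required is the form-to-operator translation of the $\star$ product and the squaring of the Gaussian normalization, where the identity $\epsilon\epsilon'/\epsilon''=\sqrt{\epsilon^2+\epsilon'^2}$ is precisely what converts the square $N^2$ of the $B$-level prefactor into the correctly normalized $A$-level prefactor.
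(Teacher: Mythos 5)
Your proposal is correct and follows essentially the same route as the paper, whose proof is the one-line remark that the result follows from Lemma~\ref{lem:comp} and definition~\eqref{eq:defopq}; you have simply filled in the details (the form-to-operator translation of $\star$ via the completeness relation, the self-adjointness argument for the right factor, and the prefactor bookkeeping), all of which check out. Note that your $\bar q=(\epsilon'^2 q+\epsilon^2 q')/(\epsilon^2+\epsilon'^2)$ is the correct argument as dictated by Lemma~\ref{lem:comp}; the expression $\epsilon q'^2$ in the statement of Lemma~\ref{lem:qocomp} is evidently a typo.
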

\begin{proof}
    This follows from Lemma~\ref{lem:comp} and the definition \eqref{eq:defopq}.
\end{proof}

\begin{lem}
    \label{lem:acommute}
    If $\omega(\xi,\xi')=0$, then $[A_{\xi'}^{\epsilon'}(q'), A_{\xi}^{\epsilon}(q)]=0$.
\end{lem}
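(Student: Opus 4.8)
The plan is to reduce the commutativity of the super-operators to the commutativity of the underlying bounded operators $\Pi_{\xi}^{\epsilon}(q)$ and $\Pi_{\xi'}^{\epsilon'}(q')$, and then to obtain the latter from the symmetry of the $\star$-product recorded in Lemma~\ref{lem:prod}. Note that, since the maps $A_{\xi}^{\epsilon}(q)$ are defined only for $\epsilon>0$, we have $\epsilon,\epsilon'>0$ throughout, so that the hypotheses of Lemmas~\ref{lem:prod} and \ref{lem:bopq} are met.

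First I would record that the $\star$-product corresponds to operator composition. If $A,A'$ are the sesquilinear forms of bounded operators $\hat{A},\hat{A}'$, in the sense that $A(\eta,\psi)=\langle\eta,\hat{A}\psi\rangle$ and likewise for $A'$, then inserting the coherent-state completeness relation \eqref{eq:cohcompl} into the definition of $\star$ gives $(A'\star A)(\eta,\psi)=\langle\eta,\hat{A}'\hat{A}\psi\rangle$; that is, $A'\star A$ is the form of $\hat{A}'\hat{A}$. By Lemma~\ref{lem:bopq}, the forms $B_{\xi}^{\epsilon}(q)$ and $B_{\xi'}^{\epsilon'}(q')$ are precisely the forms of the bounded operators $\Pi_{\xi}^{\epsilon}(q)$ and $\Pi_{\xi'}^{\epsilon'}(q')$. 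Invoking the final assertion of Lemma~\ref{lem:prod}, namely that $\omega(\xi,\xi')=0$ implies $B_{\xi'}^{\epsilon'}(q')\star B_{\xi}^{\epsilon}(q)=B_{\xi}^{\epsilon}(q)\star B_{\xi'}^{\epsilon'}(q')$, and using that coherent states span the dense domain $\cD$ together with boundedness of the operators, this upgrades to the operator identity $[\Pi_{\xi'}^{\epsilon'}(q'),\Pi_{\xi}^{\epsilon}(q)]=0$.

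Finally I substitute this into the definition \eqref{eq:defopq}. For $\sigma\in\cT$ one has $A_{\xi'}^{\epsilon'}(q')\circ A_{\xi}^{\epsilon}(q)(\sigma)=2\pi\epsilon\epsilon'\,\Pi_{\xi'}^{\epsilon'}(q')\Pi_{\xi}^{\epsilon}(q)\,\sigma\,\Pi_{\xi}^{\epsilon}(q)\Pi_{\xi'}^{\epsilon'}(q')$, and commuting the two operator products on either side of $\sigma$ turns this into $2\pi\epsilon\epsilon'\,\Pi_{\xi}^{\epsilon}(q)\Pi_{\xi'}^{\epsilon'}(q')\,\sigma\,\Pi_{\xi'}^{\epsilon'}(q')\Pi_{\xi}^{\epsilon}(q)=A_{\xi}^{\epsilon}(q)\circ A_{\xi'}^{\epsilon'}(q')(\sigma)$, which is the claim. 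I expect no genuine obstacle here: the only step needing care is the passage from the form equality of Lemma~\ref{lem:prod} to the operator equality, which rests on boundedness and the density of $\cD$, while the remainder is bookkeeping with the prefactor $2\pi\epsilon\epsilon'$ and the two applications of the $\Pi$-commutation on either side of $\sigma$.
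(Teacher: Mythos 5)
Your proof is correct and follows essentially the same route as the paper: the paper derives the claim from Lemma~\ref{lem:dcommute} (whose own proof is just the symmetry of the $\star$-product in Lemma~\ref{lem:prod}) together with the definition \eqref{eq:defopq}, which is exactly your chain of reasoning. You merely make explicit the step the paper leaves implicit, namely that via the completeness relation \eqref{eq:cohcompl} the $\star$-product of the forms of bounded operators is the form of their composition, so the form identity upgrades to $[\Pi_{\xi'}^{\epsilon'}(q'),\Pi_{\xi}^{\epsilon}(q)]=0$.
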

\begin{proof}
    This follows from Lemma~\ref{lem:dcommute} and the definition \eqref{eq:defopq}.
\end{proof}

For a Lebesgue measurable function $f:\R\to\C$ and $\sigma\in\cT$ we introduce the following notation, whenever the integral exists in the trace norm topology,
\begin{equation}
    A_{\xi}^{\epsilon}[f](\sigma)
    \defeq \int_{-\infty}^{\infty}\xd q\, f(q)\, A_{\xi}^{\epsilon}(q)(\sigma)
    = \sqrt{2\pi}\,\epsilon \int\xd q\, f(q)\, \Pi_{\xi}^{\epsilon}(q) \sigma \Pi_{\xi}^{\epsilon}(q) .
    \label{eq:defaint}
\end{equation}
Note that $A_{\xi}^{\epsilon}[f]$ is completely positive if $f$ is positive, i.e., if $f\ge 0$.

\begin{lem}
    \label{lem:opbound}
    Let $\sigma\in\cT$ be self-adjoint and $f:\R\to\C$ essentially bounded. Then,
    \begin{equation}
        \left\| A_{\xi}^{\epsilon}[f](\sigma)\right\|_{\ntr}
        \le \|f\|_{\infty} \|\sigma\|_{\ntr} .
        \label{eq:opbound}
    \end{equation}
\end{lem}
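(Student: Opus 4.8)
The plan is to avoid estimating the defining integral term by term, since the only $q$-uniform bound available, namely $\|A_{\xi}^{\epsilon}(q)(\sigma)\|_{\ntr}\le\sqrt{2/\pi}\,\epsilon^{-1}\|\sigma\|_{\ntr}$ from Lemma~\ref{lem:bopq}, is independent of $q$ and hence not integrable over $\R$. Instead I would pass to the trace duality $\|T\|_{\ntr}=\sup\{|\tr(BT)|:B\in\cB,\ \|B\|_{\nop}\le 1\}$, so that it suffices to bound $|\tr(B\,A_{\xi}^{\epsilon}[f](\sigma))|$ uniformly over the unit ball of $\cB$. Unfolding the definition~\eqref{eq:defaint} and using cyclicity of the trace gives
\[
\tr\!\left(B\,A_{\xi}^{\epsilon}[f](\sigma)\right)=\sqrt{2\pi}\,\epsilon\int_{-\infty}^{\infty}\xd q\, f(q)\,\tr\!\left(\Pi_{\xi}^{\epsilon}(q)\,B\,\Pi_{\xi}^{\epsilon}(q)\,\sigma\right).
\]

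Here I would invoke self-adjointness of $\sigma$: the spectral theorem for self-adjoint trace-class operators yields $\sigma=\sum_{n}\lambda_{n}\,|e_{n}\rangle\langle e_{n}|$ with $\{e_{n}\}$ orthonormal, $\lambda_{n}\in\R$, and $\sum_{n}|\lambda_{n}|=\|\sigma\|_{\ntr}$. Inserting this and estimating each diagonal term by $|\langle e_{n},\Pi_{\xi}^{\epsilon}(q)B\Pi_{\xi}^{\epsilon}(q)e_{n}\rangle|\le\|B\|_{\nop}\,\|\Pi_{\xi}^{\epsilon}(q)e_{n}\|^{2}\le\|\Pi_{\xi}^{\epsilon}(q)e_{n}\|^{2}$ (using that $\Pi_{\xi}^{\epsilon}(q)$ is positive, hence self-adjoint) reduces everything to controlling $\int f(q)\,\|\Pi_{\xi}^{\epsilon}(q)e_{n}\|^{2}\,\xd q$.

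The crux is the identity $\int_{-\infty}^{\infty}\Pi_{\xi}^{\epsilon}(q)^{2}\,\xd q=\frac{1}{\sqrt{2\pi}\,\epsilon}\,\id_{\cH}$ in the weak sense. This follows by specializing Lemma~\ref{lem:comp} to $\epsilon'=\epsilon$, $q'=q$, which gives $\Pi_{\xi}^{\epsilon}(q)^{2}=\frac{1}{\sqrt{2\pi}\,\epsilon}\Pi_{\xi}^{\epsilon/\sqrt2}(q)$ (recalling that the product $\star$ corresponds to operator composition via the completeness relation~\eqref{eq:cohcompl}), followed by $\Pi_{\xi}^{\epsilon/\sqrt2}[\one]=\id_{\cH}$ from identity~\eqref{eq:intconst} of Lemma~\ref{lem:integrals}. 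Consequently $\int\|\Pi_{\xi}^{\epsilon}(q)e_{n}\|^{2}\,\xd q=\frac{1}{\sqrt{2\pi}\,\epsilon}$ and thus $\int|f(q)|\,\|\Pi_{\xi}^{\epsilon}(q)e_{n}\|^{2}\,\xd q\le\|f\|_{\infty}\,\frac{1}{\sqrt{2\pi}\,\epsilon}$. Summing against $|\lambda_{n}|$ then yields
\[
\left|\tr\!\left(B\,A_{\xi}^{\epsilon}[f](\sigma)\right)\right|\le\sqrt{2\pi}\,\epsilon\,\|f\|_{\infty}\,\frac{1}{\sqrt{2\pi}\,\epsilon}\sum_{n}|\lambda_{n}|=\|f\|_{\infty}\,\|\sigma\|_{\ntr},
\]
where the normalization factor $\sqrt{2\pi}\,\epsilon$ in~\eqref{eq:defopq} cancels exactly. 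Taking the supremum over $\|B\|_{\nop}\le1$ gives the claim.

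I expect the main obstacle to be not the inequality itself but the well-definedness of $A_{\xi}^{\epsilon}[f](\sigma)$: because the integrand has $q$-independent trace norm, the trace-norm Bochner integral need not converge for $f$ of unbounded support. The estimate above resolves this at the same time, since it shows that $B\mapsto\int f(q)\,\tr(B\,A_{\xi}^{\epsilon}(q)(\sigma))\,\xd q$ is a bounded linear functional on the compact operators $\mathcal{K}(\cH)$ with norm at most $\|f\|_{\infty}\|\sigma\|_{\ntr}$; by the duality $\mathcal{K}(\cH)^{*}\cong\cT$ it is represented by a unique trace-class operator, which is identified with $A_{\xi}^{\epsilon}[f](\sigma)$ and whose trace norm then obeys the stated bound. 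The only remaining routine point is the interchange of the trace, the $q$-integration, and the spectral sum, which is justified by the absolute convergence just established.
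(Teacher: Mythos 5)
Your argument is correct and is essentially the paper's own proof: both hinge on Lemma~\ref{lem:comp} in the form $\sqrt{2\pi}\,\epsilon\,\Pi_{\xi}^{\epsilon}(q)^{2}=\Pi_{\xi}^{\epsilon/\sqrt{2}}(q)$ combined with $\|\Pi_{\xi}^{\epsilon/\sqrt{2}}[|f|]\|_{\nop}\le\|f\|_{\infty}$ from Lemma~\ref{lem:fbound}, and your eigenvalue expansion $\sum_{n}|\lambda_{n}|\,\|\Pi_{\xi}^{\epsilon}(q)e_{n}\|^{2}=\tr\left(\Pi_{\xi}^{\epsilon}(q)\,|\sigma|\,\Pi_{\xi}^{\epsilon}(q)\right)$ is exactly the paper's decomposition $\sigma=\sigma^{+}-\sigma^{-}$, $|\sigma|=\sigma^{+}+\sigma^{-}$ in different clothing, with trace duality against the unit ball of $\cB$ replacing the paper's use of the triangle inequality and of the fact that the trace norm of a positive operator is its trace. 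Your closing concern about well-definedness is in fact already resolved by the sharper pointwise bound $\|A_{\xi}^{\epsilon}(q)(\sigma)\|_{\ntr}\le\tr\left(\Pi_{\xi}^{\epsilon/\sqrt{2}}(q)\,|\sigma|\right)$, whose $q$-integral equals $\|\sigma\|_{\ntr}$, so the Bochner integral converges absolutely in trace norm and the detour through $\mathcal{K}(\cH)^{*}\cong\cT$ is not needed.
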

\begin{proof}
    Since $\sigma$ is self-adjoint we can decompose it into its positive and negative parts, $\sigma=\sigma^+ - \sigma^-$, where $\sigma^+$ and $\sigma^-$ are positive operators such that $\sigma^+\sigma^-=\sigma^-\sigma^+=0$. We write $|\sigma|=\sigma^+ + \sigma^-$. Then, using Lemmas~\ref{lem:comp} and \ref{lem:fbound},
    \begin{align*}
        \left\| A_{\xi}^{\epsilon}[f](\sigma)\right\|_{\ntr}
        & \le \sqrt{2\pi}\,\epsilon \int_{-\infty}^{\infty}\xd q\, |f(q)|
            \left\| \Pi_{\xi}^{\epsilon}(q) (\sigma^+ - \sigma^-) \Pi_{\xi}^{\epsilon}(q)\right\|_{\ntr} \\
        & \le \sqrt{2\pi}\,\epsilon
            \int_{-\infty}^{\infty}\xd q\, |f(q)| \left(\left\|\Pi_{\xi}^{\epsilon}(q) \sigma^+ \Pi_{\xi}^{\epsilon}(q)\right\|_{\ntr}
            +\left\|\Pi_{\xi}^{\epsilon}(q) \sigma^- \Pi_{\xi}^{\epsilon}(q)\right\|_{\ntr} \right) \\
        & = \sqrt{2\pi}\,\epsilon \int_{-\infty}^{\infty}\xd q\, |f(q)|\,
        \tr\left(\Pi_{\xi}^{\epsilon}(q)\, |\sigma|\, \Pi_{\xi}^{\epsilon}(q)\right) \\
        & = \int_{-\infty}^{\infty}\xd q\, |f(q)|\,
        \tr\left(\Pi_{\xi}^{\epsilon/\sqrt2}(q)\, |\sigma| \right) \\
        & = \tr\left(\Pi_{\xi}^{\epsilon/\sqrt2}[|f|]\, |\sigma| \right)
        = \left\|\Pi_{\xi}^{\epsilon/\sqrt2}[|f|]\, |\sigma| \right\|_{\ntr} \\
        & \le \left\|\Pi_{\xi}^{\epsilon/\sqrt2}[|f|]\right\|_{\Lop} \|\sigma\|_{\ntr}
        \le \|f\|_{\infty} \|\sigma\|_{\ntr} .
    \end{align*}
\end{proof}

\begin{lem}
    \label{lem:aintcom}
    If $\omega(\xi,\xi')=0$, then $[A_{\xi'}^{\epsilon'}[f'], A_{\xi}^{\epsilon}[f]]=0$.
\end{lem}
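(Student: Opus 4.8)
The plan is to reduce the statement to the pointwise commutativity already available from Lemma~\ref{lem:acommute}, by expanding both orders of composition as iterated trace-norm integrals and interchanging the order of integration. Since both $A_{\xi}^{\epsilon}[f]$ and $A_{\xi'}^{\epsilon'}[f']$ are linear in their argument $\sigma\in\cT$, and every $\sigma$ is a complex linear combination of self-adjoint trace-class operators, it suffices to verify $A_{\xi'}^{\epsilon'}[f']\circ A_{\xi}^{\epsilon}[f](\sigma)=A_{\xi}^{\epsilon}[f]\circ A_{\xi'}^{\epsilon'}[f'](\sigma)$ for self-adjoint $\sigma$. Using definition~\eqref{eq:defaint} together with the fact that each $A_{\xi'}^{\epsilon'}(q')$ is a bounded linear map on $\cT$ (Lemma~\ref{lem:defopq}a), which therefore commutes with the Bochner integral, I would write
\begin{equation}
A_{\xi'}^{\epsilon'}[f']\circ A_{\xi}^{\epsilon}[f](\sigma)
=\int_{-\infty}^{\infty}\xd q'\int_{-\infty}^{\infty}\xd q\, f'(q') f(q)\, A_{\xi'}^{\epsilon'}(q')\left(A_{\xi}^{\epsilon}(q)(\sigma)\right) ,
\end{equation}
and similarly for the opposite order with the roles of the primed and unprimed data exchanged.

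The heart of the argument, and the step I expect to be the main obstacle, is the justification of a Fubini interchange of the two integrations in the trace-norm (Bochner) topology. For this I would establish absolute convergence of the double integral, namely that $\int\int |f'(q')|\,|f(q)|\,\|A_{\xi'}^{\epsilon'}(q')(A_{\xi}^{\epsilon}(q)(\sigma))\|_{\ntr}\,\xd q\,\xd q'$ is finite. This can be done by repeating the estimate in the proof of Lemma~\ref{lem:opbound}: bounding $\|A_{\xi'}^{\epsilon'}(q')(\tau)\|_{\ntr}\le\tr(A_{\xi'}^{\epsilon'}(q')(|\tau|))$ for $\tau=A_{\xi}^{\epsilon}(q)(\sigma)$, using the identity $\sqrt{2\pi}\epsilon'\,\Pi_{\xi'}^{\epsilon'}(q')^2=\Pi_{\xi'}^{\epsilon'/\sqrt2}(q')$ that follows from Lemma~\ref{lem:comp}, and the operator inequality $0\le\Pi_{\xi'}^{\epsilon'/\sqrt2}[|f'|]\le\|f'\|_{\infty}\,\id$ from Lemma~\ref{lem:fbound}, so that the $q'$-integral collapses to $\|f'\|_{\infty}\,\|A_{\xi}^{\epsilon}(q)(\sigma)\|_{\ntr}$; a second application of the same estimate to the $q$-integral then yields the finite bound $\|f'\|_{\infty}\,\|f\|_{\infty}\,\|\sigma\|_{\ntr}$. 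Measurability of the $\cT$-valued integrand $(q,q')\mapsto A_{\xi'}^{\epsilon'}(q')(A_{\xi}^{\epsilon}(q)(\sigma))$ follows from its joint continuity, which in turn follows from the continuity statements in Lemma~\ref{lem:defopq} together with the uniform operator bounds; since $\cT$ is separable this gives strong measurability, so the vector-valued Fubini theorem applies.

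Having interchanged the order of integration, I would invoke the pointwise commutativity $A_{\xi'}^{\epsilon'}(q')\circ A_{\xi}^{\epsilon}(q)=A_{\xi}^{\epsilon}(q)\circ A_{\xi'}^{\epsilon'}(q')$ of Lemma~\ref{lem:acommute} (valid precisely because $\omega(\xi,\xi')=0$) inside the integrand, then pull the bounded map $A_{\xi}^{\epsilon}(q)$ back out of the inner $q'$-integral and recognize the result as $A_{\xi}^{\epsilon}[f]\circ A_{\xi'}^{\epsilon'}[f'](\sigma)$. This yields the desired equality of the two composition orders, hence $[A_{\xi'}^{\epsilon'}[f'],A_{\xi}^{\epsilon}[f]]=0$. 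The only genuinely delicate point is the absolute-convergence and measurability input needed to license Fubini; once that is in place the remaining manipulations are purely formal.
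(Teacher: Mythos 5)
Your proposal is correct and follows essentially the same route as the paper, whose proof is simply the one-line observation that the statement follows from Lemma~\ref{lem:acommute} together with definition~\eqref{eq:defaint}. You have merely made explicit the Fubini interchange and the absolute-convergence estimate (via the argument of Lemma~\ref{lem:opbound}) that the paper leaves implicit.
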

\begin{proof}
    This follows from Lemma~\ref{lem:acommute} and definition \eqref{eq:defaint}.
\end{proof}

\begin{lem}
    \label{lem:acontepsilon}
    Let $\sigma\in \cT$ and $f:\R\to\C$ essentially bounded. Then, the map $\R^+\to \cT: \epsilon\mapsto A_{\xi}^\epsilon[f](\sigma)$ is continuous.
\end{lem}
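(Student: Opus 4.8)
The plan is to establish continuity of $\epsilon \mapsto A_\xi^\epsilon[f](\sigma)$ by reducing to the pointwise continuity already obtained in Lemma~\ref{lem:defopq}~c), combined with a dominated-convergence-type argument to exchange the limit in $\epsilon$ with the integral defining $A_\xi^\epsilon[f]$ in \eqref{eq:defaint}. First I would fix $\epsilon_0 > 0$ and a sequence (or net) $\epsilon \to \epsilon_0$, staying within a compact interval $[c, C] \subseteq \R^+$ with $c > 0$, so that the uniform bounds of Lemma~\ref{lem:bopq}, namely $\|\Pi_\xi^\epsilon(q)\|_{\nop} \le \frac{1}{\sqrt{\pi}\,\epsilon} \le \frac{1}{\sqrt{\pi}\,c}$, are available uniformly in $q$ and $\epsilon$.

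The main obstacle is that the domain of integration is all of $\R$ and $f$ need only be essentially bounded, not compactly supported, so one cannot simply invoke uniform continuity of the integrand on a compact set. The integrand $\epsilon \mapsto f(q)\, A_\xi^\epsilon(q)(\sigma)$ is continuous in the trace norm for each fixed $q$ by Lemma~\ref{lem:defopq}~c), but to pass the limit under the integral I need an integrable dominating bound. Here the crucial feature is the Gaussian weight hidden in the normalization. By Proposition~\ref{prop:piobs} and Lemma~\ref{lem:opbound}'s computation, the factor $\sqrt{2\pi}\,\epsilon\, \Pi_\xi^\epsilon(q) |\sigma| \Pi_\xi^\epsilon(q)$ integrates against $|f|$ to yield $\tr(\Pi_\xi^{\epsilon/\sqrt2}[|f|]\, |\sigma|)$, a finite quantity controlled by $\|f\|_\infty \|\sigma\|_{\ntr}$; more usefully, for the \emph{difference} $A_\xi^\epsilon(q) - A_\xi^{\epsilon_0}(q)$ one has a pointwise-in-$q$ trace-norm bound that decays in $q$ because each $A_\xi^\epsilon(q)(\sigma)$ carries, through $\Pi_\xi^\epsilon(q)$, the Gaussian concentration in $q$ about $D_\xi$-eigenvalues.

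Concretely, I would decompose $\sigma$ into positive and negative parts as in Lemma~\ref{lem:opbound}, reducing to positive $\sigma$, and then estimate
\begin{equation}
    \left\| A_\xi^\epsilon[f](\sigma) - A_\xi^{\epsilon_0}[f](\sigma)\right\|_{\ntr}
    \le \int_{-\infty}^\infty \xd q\, |f(q)|\, \left\| A_\xi^\epsilon(q)(\sigma) - A_\xi^{\epsilon_0}(q)(\sigma)\right\|_{\ntr} .
\end{equation}
The integrand converges to $0$ pointwise by Lemma~\ref{lem:defopq}~c), and it is dominated by $|f(q)|\,(\|A_\xi^\epsilon(q)(\sigma)\|_{\ntr} + \|A_\xi^{\epsilon_0}(q)(\sigma)\|_{\ntr})$. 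The point is that $q \mapsto \|A_\xi^\epsilon(q)(\sigma)\|_{\ntr}$, being $\sqrt{2\pi}\,\epsilon\,\tr(\Pi_\xi^\epsilon(q)\sigma\Pi_\xi^\epsilon(q))$ for positive $\sigma$, equals (after folding two Gaussians as in Lemma~\ref{lem:comp}) $\tr(\Pi_\xi^{\epsilon/\sqrt2}(q)\sigma)$, whose integral over $q$ is $\|\sigma\|_{\ntr}$ and which, for $\epsilon$ in the compact interval $[c,C]$, admits a dominating integrable majorant uniform in $\epsilon$. Dominated convergence then forces the integral to vanish in the limit, giving continuity at $\epsilon_0$. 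The last subtlety to check is that the dominating function is integrable against $|f|\,\|f\|_\infty$-type factors uniformly; I expect this to follow from the uniformity statement in Lemma~\ref{lem:defopq}~a) together with the $L^1$-in-$q$ control just described, so no genuinely new estimate beyond the cited lemmas should be needed.
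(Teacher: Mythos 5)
Your overall strategy (pointwise trace-norm convergence of the integrand from Lemma~\ref{lem:defopq}~c) plus passage to the limit under the integral) is reasonable, but the step you wave through is exactly the crux, and it does not follow from the lemmas you cite. You dominate the integrand by $|f(q)|\bigl(\|A_{\xi}^{\epsilon}(q)(\sigma)\|_{\ntr}+\|A_{\xi}^{\epsilon_0}(q)(\sigma)\|_{\ntr}\bigr)$ and correctly identify $\|A_{\xi}^{\epsilon}(q)(\sigma)\|_{\ntr}=\tr\bigl(\Pi_{\xi}^{\epsilon/\sqrt2}(q)\sigma\bigr)=:g_{\epsilon}(q)$ for positive $\sigma$, with $\int g_{\epsilon}=\tr\sigma$ for every $\epsilon$. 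But to apply the classical dominated convergence theorem you need a \emph{single} integrable function dominating $g_{\epsilon}$ for all $\epsilon$ in a neighborhood of $\epsilon_0$, i.e.\ integrability of $\sup_{\epsilon\in[c,C]}g_{\epsilon}$. Uniformly bounded $L^1$ norms plus a uniform sup bound (Lemma~\ref{lem:bopq}) do not give this: it is a maximal-function statement. It happens to be true here, because by Lemma~\ref{lem:projsumid} with $\xi'=0$ one has $g_{\epsilon}=G_{\delta}*g_{c}$ with $\delta^2=(\epsilon^2-c^2)/2$ and a normalized Gaussian $G_\delta$ of bounded width, and $g_c\in L^1\cap L^\infty$; but turning that into an integrable majorant for the supremum requires a genuinely new estimate (a restricted maximal-function bound), not just the cited lemmas. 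A cleaner repair within your framework is to invoke the generalized dominated convergence theorem (Pratt's lemma): the dominating functions $|f|(g_{\epsilon}+g_{\epsilon_0})$ converge pointwise to $2|f|g_{\epsilon_0}$ and their integrals $\tr(\Pi_{\xi}^{\epsilon/\sqrt2}[|f|]\sigma)+\tr(\Pi_{\xi}^{\epsilon_0/\sqrt2}[|f|]\sigma)$ converge to the integral of the limit by the same argument as in Lemma~\ref{lem:tralim}; that closes the proof.

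For comparison, the paper avoids any limit-under-the-integral issue altogether: it writes $A_{\xi}^{\epsilon}(q)(\sigma)-A_{\xi}^{\epsilon'}(q)(\sigma)$ in terms of $\sqrt{\epsilon}\Pi_{\xi}^{\epsilon}(q)-\sqrt{\epsilon'}\Pi_{\xi}^{\epsilon'}(q)$ and uses the Fourier representation of Lemma~\ref{lem:unitaryintegral} to bound this difference in operator norm by $c(\epsilon,\epsilon')=\frac{1}{\pi}\int_{-\infty}^{\infty}\xd t\,\bigl|\sqrt{\epsilon'}e^{-\epsilon'^2t^2}-\sqrt{\epsilon}e^{-\epsilon^2t^2}\bigr|$, which is \emph{uniform in $q$} and tends to $0$ as $\epsilon'\to\epsilon$. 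The entire integral is then $O(c(\epsilon,\epsilon'))$ times a quantity controlled by $\|f\|_{\infty}\|\sigma\|_{\ntr}$. The moral is that the $q$-uniform smallness extracted from the explicit Gaussian-in-$t$ representation is what makes the proof short; your route substitutes for it a measure-theoretic convergence argument whose domination hypothesis you have not actually verified.
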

\begin{proof}
    Let $\epsilon,\epsilon'>0$,
\begin{align*}
    & \left\|A_{\xi}^{\epsilon}[f](\sigma)-A_{\xi}^{\epsilon'}[f](\sigma)\right\|_{\ntr}
    \le \int_{-\infty}^{\infty}\xd q\, |f(q)| \left\|A_{\xi}^{\epsilon}(q)(\sigma)-A_{\xi}^{\epsilon'}(q)(\sigma)\right\|_{\ntr} \\
    & =\sqrt{2\pi} \int_{-\infty}^{\infty}\xd q\, |f(q)|
       \left\|\sqrt{\epsilon}\Pi_{\xi}^{\epsilon}(q) \sigma
       \left(\sqrt{\epsilon} \Pi_{\xi}^{\epsilon}(q)-\sqrt{\epsilon'} \Pi_{\xi}^{\epsilon'}(q)\right)
       +\left(\sqrt{\epsilon} \Pi_{\xi}^{\epsilon}(q)-\sqrt{\epsilon'} \Pi_{\xi}^{\epsilon'}(q)\right)
       \sigma \sqrt{\epsilon'}\Pi_{\xi}^{\epsilon'}(q)\right\|_{\ntr} \\
    &\le\sqrt{2\pi} \int_{-\infty}^{\infty}\xd q\, |f(q)|\\
    & \quad \left(\left\|\sqrt{\epsilon}\Pi_{\xi}^{\epsilon}(q) \sigma
    \left(\sqrt{\epsilon} \Pi_{\xi}^{\epsilon}(q)-\sqrt{\epsilon'} \Pi_{\xi}^{\epsilon'}(q)\right)\right\|_{\ntr}
    +\left\|\left(\sqrt{\epsilon} \Pi_{\xi}^{\epsilon}(q)-\sqrt{\epsilon'} \Pi_{\xi}^{\epsilon'}(q)\right)
    \sigma \sqrt{\epsilon'}\Pi_{\xi}^{\epsilon'}(q)\right\|_{\ntr}\right) \\
    & \le\sqrt{2\pi}\, \|\sigma\|_{\ntr} \int_{-\infty}^{\infty}\xd q\, |f(q)|
        \left(\sqrt{\epsilon}\left\|\Pi_{\xi}^{\epsilon}(q) \right\|_{\Lop}
        + \sqrt{\epsilon'}\left\|\Pi_{\xi}^{\epsilon'}(q)\right\|_{\Lop}\right) \left\|\sqrt{\epsilon} \Pi_{\xi}^{\epsilon}(q)-\sqrt{\epsilon'}  \Pi_{\xi}^{\epsilon'}(q)\right\|_{\Lop} \\
    & \le\sqrt{2\pi}\, \|\sigma\|_{\ntr}\,  c(\epsilon,\epsilon') \int_{-\infty}^{\infty}\xd q\, |f(q)|
        \left(\sqrt{\epsilon}\left\|\Pi_{\xi}^{\epsilon}(q)\right\|_{\Lop}
        +\sqrt{\epsilon'}\left\|\Pi_{\xi}^{\epsilon'}(q)\right\|_{\Lop}\right) \\
    & \le\sqrt{2\pi}\, \|\sigma\|_{\ntr}\,  c(\epsilon,\epsilon')
        \left(\sqrt{\epsilon}\left\|\Pi_{\xi}^{\epsilon}[|f|]\right\|_{\Lop}
        +\sqrt{\epsilon'}\left\|\Pi_{\xi}^{\epsilon'}[|f|]\right\|_{\Lop}\right) \\
    & \le\sqrt{2\pi}\, \|\sigma\|_{\ntr}\,  c(\epsilon,\epsilon')
        \left(\sqrt{\epsilon}
        +\sqrt{\epsilon'}\right) \|f\|_{\infty} .
\end{align*}
We use an estimate similar to \eqref{eq:piqepsilon}. Namely,
\begin{equation}
    \left\|\sqrt{\epsilon}\Pi_{\xi}^{\epsilon}(q)
    -\sqrt{\epsilon'}\Pi_{\xi}^{\epsilon'}(q)\right\|_{\nop}
    \le c(\epsilon,\epsilon')\defeq
    \frac{1}{\pi}\int_{-\infty}^{\infty}\xd t\, \left| \sqrt{\epsilon'} e^{-\epsilon'^2 t^2}-\sqrt{\epsilon}e^{-\epsilon^2 t^2}\right| .
\end{equation}
Crucially, the bound $c(\epsilon,\epsilon')$ converges to $0$ when $\epsilon$ and $\epsilon'$ approach each other.
\end{proof}

\begin{lem}
    \label{lem:tracereduce}
    \begin{equation}
        \tr\left(A_{\xi}^{\epsilon}[f](\sigma)\right)
        =\tr\left(\Pi_{\xi}^{\epsilon/\sqrt{2}}[f]\, \sigma\right) .
    \end{equation}
\end{lem}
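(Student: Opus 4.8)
The plan is to reduce the claim to the single operator identity
\[
\bigl(\Pi_{\xi}^{\epsilon}(q)\bigr)^2
=\frac{1}{\sqrt{2\pi}\,\epsilon}\,\Pi_{\xi}^{\epsilon/\sqrt2}(q),
\]
and then integrate against $f$. Starting from the definition \eqref{eq:defaint}, I would move the trace inside the $q$-integral and use cyclicity of the trace to write, for each $q$,
\[
\tr\bigl(A_{\xi}^{\epsilon}(q)(\sigma)\bigr)
=\sqrt{2\pi}\,\epsilon\,\tr\bigl(\Pi_{\xi}^{\epsilon}(q)\,\sigma\,\Pi_{\xi}^{\epsilon}(q)\bigr)
=\sqrt{2\pi}\,\epsilon\,\tr\bigl((\Pi_{\xi}^{\epsilon}(q))^2\,\sigma\bigr).
\]
The key algebraic input is Lemma~\ref{lem:comp} specialized to $\epsilon'=\epsilon$ and $q'=q$: there the Gaussian prefactor becomes $1$, the argument of $B_{\xi}^{\epsilon''}$ collapses to $q$, and $\epsilon''=\epsilon/\sqrt2$, giving $B_{\xi}^{\epsilon}(q)\star B_{\xi}^{\epsilon}(q)=\tfrac{1}{\sqrt{2\pi}\,\epsilon}\,B_{\xi}^{\epsilon/\sqrt2}(q)$. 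Since the $\star$-product of sesquilinear forms corresponds, via the completeness relation \eqref{eq:cohcompl}, to composition of the associated bounded operators, this is precisely the displayed operator identity. Substituting it cancels the factor $\sqrt{2\pi}\,\epsilon$ and leaves $\tr\bigl(A_{\xi}^{\epsilon}(q)(\sigma)\bigr)=\tr\bigl(\Pi_{\xi}^{\epsilon/\sqrt2}(q)\,\sigma\bigr)$, the per-$q$ form of the claim, which in fact already appears inside the chain of equalities in the proof of Lemma~\ref{lem:opbound}.

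Having the integrand identity, I would integrate against $f$ and pull the trace back out, recognizing the right-hand side through definition \eqref{eq:defpint}:
\[
\tr\bigl(A_{\xi}^{\epsilon}[f](\sigma)\bigr)
=\int_{-\infty}^{\infty}\xd q\, f(q)\,\tr\bigl(\Pi_{\xi}^{\epsilon/\sqrt2}(q)\,\sigma\bigr)
=\tr\bigl(\Pi_{\xi}^{\epsilon/\sqrt2}[f]\,\sigma\bigr).
\]

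The only real obstacle is justifying the two interchanges of trace and integral, which is a matter of continuity and absolute integrability rather than substance. By linearity in $\sigma$ one may reduce to positive $\sigma$, where all integrands are nonnegative. For the left interchange I would invoke continuity of the trace functional in the trace norm together with the trace-norm convergence of the defining integral for $A_{\xi}^{\epsilon}[f](\sigma)$ (Lemma~\ref{lem:defopq}), while the bound of Lemma~\ref{lem:opbound} supplies absolute integrability of $|f(q)|\,\tr(\Pi_{\xi}^{\epsilon/\sqrt2}(q)\,|\sigma|)$. For the right interchange, norm continuity of $q\mapsto\Pi_{\xi}^{\epsilon/\sqrt2}(q)$ (Lemma~\ref{lem:normcont}) makes $\Pi_{\xi}^{\epsilon/\sqrt2}[f]$ well-defined via \eqref{eq:defpint}, and the trace commutes with the resulting norm-convergent integral. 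No further input is required.
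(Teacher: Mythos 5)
Your proposal is correct and follows essentially the same route as the paper, whose proof is exactly the "explicit calculation with Lemma~\ref{lem:comp}" you carry out: the specialization $\epsilon'=\epsilon$, $q'=q$ giving $(\Pi_{\xi}^{\epsilon}(q))^2=\tfrac{1}{\sqrt{2\pi}\,\epsilon}\Pi_{\xi}^{\epsilon/\sqrt2}(q)$, cyclicity of the trace, and integration against $f$. Your additional care with the trace--integral interchanges only makes explicit what the paper leaves implicit.
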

\begin{proof}
    This follows by explicit calculation with Lemma~\ref{lem:comp}.
\end{proof}

\begin{lem}
    \label{lem:tralim}
    \begin{equation}
        \lim_{\epsilon\to 0}\tr\left(A_{\xi}^{\epsilon}[f](\sigma)\right)
        =\tr\left(\Pi_{\xi}[f]\, \sigma\right) .
    \end{equation}
\end{lem}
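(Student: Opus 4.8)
The plan is to reduce the statement to weak operator convergence paired against a trace-class operator, using the algebraic identity already established. By Lemma~\ref{lem:tracereduce} we have $\tr\bigl(A_\xi^\epsilon[f](\sigma)\bigr)=\tr\bigl(\Pi_\xi^{\epsilon/\sqrt2}[f]\,\sigma\bigr)$, and since $\epsilon\to 0$ is equivalent to $\delta\defeq\epsilon/\sqrt2\to 0$, it suffices to establish
\begin{equation}
  \lim_{\delta\to 0^+}\tr\bigl(\Pi_\xi^{\delta}[f]\,\sigma\bigr)=\tr\bigl(\Pi_\xi[f]\,\sigma\bigr),
\end{equation}
where I write $\Pi_\xi[f]=\Pi_\xi^{0}[f]$ and assume $\xi\neq 0$, so that the limiting operator is defined through Lemma~\ref{lem:fbound}.

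Two ingredients from the earlier lemmas drive the argument. First, Lemma~\ref{lem:fbound} supplies the uniform bound $\|\Pi_\xi^{\delta}[f]\|_{\nop}\le\|\Re(f)\|_\infty+\|\Im(f)\|_\infty=:M$, valid for every $\delta\ge 0$. Second, Lemma~\ref{lem:weakcontint} gives weak continuity of $(\xi,\epsilon)\mapsto\Pi_\xi^{\epsilon}[f]$ on the domain excluding only $(0,0)$; for fixed $\xi\neq 0$ the point $(\xi,0)$ lies inside this domain, so $\langle\eta,\Pi_\xi^{\delta}[f]\psi\rangle\to\langle\eta,\Pi_\xi[f]\psi\rangle$ as $\delta\to 0^+$ for all $\eta,\psi\in\cH$. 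Thus $\Pi_\xi^{\delta}[f]$ converges to $\Pi_\xi[f]$ in the weak operator topology while remaining uniformly bounded by $M$.

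It then remains to promote this weak convergence to convergence of the trace pairing against the fixed trace-class $\sigma$. I would argue sequentially, which suffices for a limit along $\delta\to 0^+$, and use the singular value decomposition $\sigma=\sum_k s_k\,v_k\langle u_k,\cdot\rangle$ with orthonormal systems $\{u_k\},\{v_k\}$ and $\sum_k s_k=\|\sigma\|_{\ntr}<\infty$. Evaluating the trace in an orthonormal basis gives, for any bounded operator $T$,
\begin{equation}
  \tr(T\sigma)=\sum_k s_k\,\langle u_k,T v_k\rangle .
\end{equation}
Applying this to $T=\Pi_\xi^{\delta_n}[f]$ for an arbitrary sequence $\delta_n\to 0^+$ and to $T=\Pi_\xi[f]$, each summand converges by the weak convergence above, while $|s_k\langle u_k,\Pi_\xi^{\delta_n}[f]v_k\rangle|\le M s_k$ furnishes a $\delta$-independent summable majorant. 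Dominated convergence for the series then yields $\tr\bigl(\Pi_\xi^{\delta_n}[f]\sigma\bigr)\to\tr\bigl(\Pi_\xi[f]\sigma\bigr)$, which is the desired limit.

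The only genuinely delicate point — and the step I would spell out most carefully — is this final passage, since weak operator convergence by itself does not force convergence of $\tr(T_n\sigma)$ for arbitrary bounded $\sigma$. The trace-class hypothesis on $\sigma$ together with the uniform bound $M$ is exactly what makes the series dominated; equivalently, one may approximate $\sigma$ in trace norm by a finite-rank $\sigma_0$, handle $\sigma_0$ through finitely many convergent matrix elements, and control the remainder by $M\|\sigma-\sigma_0\|_{\ntr}$. Everything else is routine bookkeeping resting on Lemmas~\ref{lem:tracereduce}, \ref{lem:fbound} and \ref{lem:weakcontint}.
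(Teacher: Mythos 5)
Your proposal is correct and follows essentially the same route as the paper: reduce via Lemma~\ref{lem:tracereduce}, obtain convergence on rank-one pieces from the weak continuity of Lemma~\ref{lem:weakcontint}, and pass to general trace-class $\sigma$ using the $\epsilon$-uniform operator-norm bound of Lemma~\ref{lem:fbound}. Your singular-value-decomposition/dominated-convergence phrasing is just a more explicit rendering of the paper's "approximate $\sigma$ by linear combinations of projectors in the trace norm".
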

\begin{proof}
    With Lemma~\ref{lem:tracereduce} it remains to show
    \begin{equation}
        \lim_{\epsilon\to 0}\tr\left(\Pi_{\xi}^{\epsilon/\sqrt{2}}[f]\, \sigma\right)=\tr\left(\Pi_{\xi}[f]\, \sigma\right) .
    \end{equation}
    First, take $\sigma$ to be a projection operator, then this follows from the weak convergence result of Lemma~\ref{lem:weakcontint}. Now recall $|\tr(A\sigma)|\le \| A\|_\nop \|\sigma\|_{\tr}$ for a bounded operator $A$. Since Lemma~\ref{lem:fbound} gives an estimate of $\|\Pi_{\xi}^{\epsilon}[f]\|_{\nop}$ independent of $\epsilon$, we can approximate $\sigma$ by linear combinations of projectors in the trace norm.
\end{proof}

\begin{lem}
    \label{lem:traceop}
    Let $\epsilon>0$, $\sigma\in\cT$.
    \begin{align}
        \tr\left(A_{\xi}^{\epsilon}[\one](\sigma)\right)
          & =\tr(\sigma), \label{eq:trconst} \\
        \tr\left(A_{\xi}^{\epsilon}[\id](|\coh_{\beta}\rangle\langle\coh_{\gamma}|)\right)
          & =\langle\coh_\gamma, \coh_\beta\rangle D_{\xi}(\beta,\gamma)
            =\langle\coh_\gamma, \wq{D_{\xi}}\coh_\beta\rangle
            =\tr\left(\wq{D_{\xi}} |\coh_{\beta}\rangle\langle\coh_{\gamma}|\right),
            \label{eq:trlin} \\
        \tr\left(A_{\xi}^{\epsilon}[e_s](|\coh_\beta\rangle\langle\coh_\gamma|)\right)
          & = \langle\coh_\gamma, \coh_\beta\rangle \exp\left(\im s D_{\xi}(\beta,\gamma)\right) \exp\left(-\frac{s^2}{4} \left(\|\xi\|^2+\frac{\epsilon^2}{2}\right)
         \right) .
         \label{eq:trchar}
    \end{align}
\end{lem}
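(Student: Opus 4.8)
The plan is to reduce all three identities to matrix elements of the operators $\Pi_{\xi}^{\epsilon/\sqrt{2}}[f]$ between coherent states, which are already computed in Lemma~\ref{lem:integrals}. The bridge is Lemma~\ref{lem:tracereduce}, which gives $\tr(A_{\xi}^{\epsilon}[f](\sigma)) = \tr(\Pi_{\xi}^{\epsilon/\sqrt{2}}[f]\,\sigma)$ for each of the three choices $f\in\{\one,\id,e_s\}$. I would combine this with the elementary rank-one trace identity $\tr(M\,|\coh_{\beta}\rangle\langle\coh_{\gamma}|)=\langle\coh_{\gamma},M\coh_{\beta}\rangle$, valid for any operator $M$ defined on the coherent states, so that the right-hand side becomes the sesquilinear form $B_{\xi}^{\epsilon/\sqrt{2}}[f](\coh_{\gamma},\coh_{\beta})$ whose value is supplied directly by Lemma~\ref{lem:integrals}. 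Note that the last equality in \eqref{eq:trlin} is itself just an instance of this rank-one trace identity, read in the opposite direction.

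For \eqref{eq:trconst} I would first argue that $\Pi_{\xi}^{\epsilon/\sqrt{2}}[\one]=\id_{\cH}$. Indeed, the constant identity of Lemma~\ref{lem:integrals} gives $\langle\coh_{\gamma},\Pi_{\xi}^{\epsilon/\sqrt{2}}[\one]\coh_{\beta}\rangle = B_{\xi}^{\epsilon/\sqrt{2}}[\one](\coh_{\gamma},\coh_{\beta})=\langle\coh_{\gamma},\coh_{\beta}\rangle$ for all $\beta,\gamma\in L$; since the coherent states are total in $\cH$ and $\Pi_{\xi}^{\epsilon/\sqrt{2}}[\one]$ is bounded by Lemma~\ref{lem:fbound}, agreement of these matrix elements with those of the identity forces the operator identity. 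Then Lemma~\ref{lem:tracereduce} yields $\tr(A_{\xi}^{\epsilon}[\one](\sigma))=\tr(\id_{\cH}\,\sigma)=\tr(\sigma)$. The reason the full operator identity, rather than a single matrix element, is needed here is that \eqref{eq:trconst} must hold for arbitrary $\sigma\in\cT$.

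For \eqref{eq:trlin} and \eqref{eq:trchar}, where $\sigma=|\coh_{\beta}\rangle\langle\coh_{\gamma}|$, the rank-one trace identity converts the reduced trace into $B_{\xi}^{\epsilon/\sqrt{2}}[\id](\coh_{\gamma},\coh_{\beta})$ and $B_{\xi}^{\epsilon/\sqrt{2}}[e_s](\coh_{\gamma},\coh_{\beta})$ respectively. For the linear case, Lemma~\ref{lem:integrals} gives $\langle\coh_{\gamma},\coh_{\beta}\rangle D_{\xi}(\beta,\gamma)=\langle\coh_{\gamma},\wq{D_{\xi}}\coh_{\beta}\rangle$, which is manifestly independent of the regulator, so the substitution $\epsilon\mapsto\epsilon/\sqrt{2}$ has no effect. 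For the characteristic-function case, Lemma~\ref{lem:integrals} produces the Gaussian factor $\exp(-\tfrac{s^2}{4}(\|\xi\|^2+(\epsilon/\sqrt{2})^2))$, and since $(\epsilon/\sqrt{2})^2=\epsilon^2/2$ this is precisely the factor $\exp(-\tfrac{s^2}{4}(\|\xi\|^2+\tfrac{\epsilon^2}{2}))$ appearing in \eqref{eq:trchar}.

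The computation is routine once Lemma~\ref{lem:tracereduce} is in hand, so there is no substantive analytic obstacle; the only points requiring care are two bookkeeping matters. The first is the passage from a single matrix element to the full operator identity $\Pi_{\xi}^{\epsilon/\sqrt{2}}[\one]=\id_{\cH}$ needed to handle a general trace-class $\sigma$ in \eqref{eq:trconst}. The second is the correct propagation of the regulator rescaling $\epsilon\mapsto\epsilon/\sqrt{2}$ from Lemma~\ref{lem:tracereduce}, which is exactly what turns the variance contribution $\epsilon^2$ of Lemma~\ref{lem:integrals} into the $\epsilon^2/2$ of \eqref{eq:trchar}. I would also remark that in the linear case $\Pi_{\xi}^{\epsilon/\sqrt{2}}[\id]$ is unbounded, so the expression $\langle\coh_{\gamma},\Pi_{\xi}^{\epsilon/\sqrt{2}}[\id]\coh_{\beta}\rangle$ must be read as the well-defined form $B_{\xi}^{\epsilon/\sqrt{2}}[\id](\coh_{\gamma},\coh_{\beta})$, consistent with the convention adopted in \eqref{eq:decomp}.
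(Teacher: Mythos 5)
Your proposal is correct and follows exactly the paper's route: the paper's proof is the one-line statement that the lemma "follows combining Lemmas~\ref{lem:tracereduce} and \ref{lem:integrals}", which is precisely your reduction via the rescaling $\epsilon\mapsto\epsilon/\sqrt{2}$ followed by the coherent-state matrix elements. Your additional bookkeeping (the operator identity $\Pi_{\xi}^{\epsilon/\sqrt2}[\one]=\id_{\cH}$ for general $\sigma$, and the origin of the $\epsilon^2/2$ in \eqref{eq:trchar}) just makes explicit what the paper leaves implicit.
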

\begin{proof}
    This follows combining Lemmas~\ref{lem:tracereduce} and \ref{lem:integrals}.
\end{proof}

We now have at our disposal quantum operations that encode a precise measurement of the value of the observable $D_{\xi}$. For fixed $\epsilon>0$ the quantum operation that selects outcome $q\in\R$ is $A_{\xi}^{\epsilon}(q)$ as defined in expression~\eqref{eq:defopq}.
Since a single point $q$ has measure zero in $\R$ it generally does not make sense to use the quantum operation $A_{\xi}^{\epsilon}(q)$ directly. Instead, the relevant quantum operation is the weighted version $A_{\xi}^{\epsilon}[f]$ with weight function $f$, see expression~\eqref{eq:defaint}.
In particular, the \emph{non-selective quantum operation} that corresponds to the measurement without reading any outcome is $A_{\xi}^{\epsilon}[\one]$. It is the requirement that $A_{\xi}^{\epsilon}[\one]$ be trace-preserving that fixes the numerical factor in definition~\eqref{eq:defaint}. This has been carried over also to definition~\eqref{eq:defopq}.
Let $X\subseteq\R$ be Lebesgue measurable. Then, the \emph{selective quantum operation} that tests whether the outcome lies in the subset $X$ is $A_{\xi}^{\epsilon}[\chi_X]$. In particular, $X\mapsto A_{\xi}^{\epsilon}[\chi_X]$ defines a \emph{quantum instrument} \cite{DaLe:opapquantprob}. On the other hand, the quantum operation encoding measurement of the expectation value is $A_{\xi}^{\epsilon}[\id]$.

Crucially, for each $\epsilon>0$, all quantum operations refer to the same measurement, which extracts the precise value of the observable $D_{\xi}$ from the quantum system. This is in contrast to the discrete outcome setting of Section~\ref{sec:doutcomes}, where each partition defines a different measurement with a different amount of information extracted from the quantum system. In particular, the quantum operations $Q_{\xi}[X]$ and $A_{\xi}^{\epsilon}[\chi_X]$ are fundamentally different, even though they apparently answer the same question. In the first case, only information whether the value lies in $X$ is extracted from the quantum system. In the second case, the exact value is extracted from the quantum system, and then it is checked if this value lies in $X$. For the same reason, there is no analog in the discrete outcome setting of the weighted quantum operation $A_{\xi}^{\epsilon}[\id]$ for the expectation value, although it would be possible to approximate it.

Instead of using the spectral decomposition of $\wq{D_{\xi}}$ to construct quantum operations for measuring it, we have used a 1-parameter family of POVM decompositions that approximate the spectral decomposition. This raises several questions. On the one hand, one may ask how good the constructed quantum operations are in measuring the observable $D_{\xi}$. A partial answer to this is given by Lemma~\ref{lem:traceop}, relation~\eqref{eq:trlin}. If we start with an initial state, measure, and then discard, the quantum operation $A_{\xi}^{\epsilon}[\id]$ corresponds exactly to measuring the expectation value of the observable $D_{\xi}$, independently of the parameter $\epsilon$. That is for a single measurement, we obtain a prefect result for all values of $\epsilon$. Of course, once we consider composites of various measurements, a dependence on the parameter(s) $\epsilon$ will generically appear.

Another question is, how the quantum operations behave under change of the parameter $\epsilon$, and whether we can take a limit $\epsilon\to 0$. As to the first question, Lemma~\ref{lem:acontepsilon} provides a continuity result under a change of $\epsilon$. As to the second question, while it seems a limit $\lim_{\epsilon\to 0} A_{\xi}^{\epsilon}[f]$ as a quantum operation does not in general exist, this is not really necessary. What we need instead is that the limit of those expressions that yield probabilities or expectation values of measurements do exist. This is a much weaker requirement, weaker even than the existence of the limit $\lim_{\epsilon\to 0} A_{\xi}^{\epsilon}[f](\sigma)$ for a fixed state $\sigma\in\cT$. In this sense Lemma~\ref{lem:tralim} is promising as it shows the existence of a limit under the trace.

Finally, we consider the question whether the measurements defined here are \emph{ideal}. While clearly the quantum operations for $\epsilon>0$ are not ideal, as can be read off from Lemma~\ref{lem:qocomp}, the same result suggests that the limit $\epsilon\to 0$ is ideal. Observe first that for $\epsilon'=\epsilon$ expression \eqref{eq:qocomp} simplifies to,
\begin{equation}
    A_{\xi}^{\epsilon}(q')\circ A_{\xi}^{\epsilon}(q)
    =\frac{1}{\sqrt{\pi} \epsilon}
    \exp\left(-\frac{1}{\epsilon^2} (q-q')^2\right)
    A_{\xi}^{\epsilon/\sqrt2}\left(\frac{q + q'}{2}\right) .
\end{equation}
Informally, we can take the limit of this expression to obtain
\begin{equation}
    "\lim_{\epsilon\to 0} A_{\xi}^{\epsilon}(q')\circ A_{\xi}^{\epsilon}(q)
    = \delta\left(\frac12(q-q)'\right) \lim_{\epsilon\to 0} A_{\xi}^{\epsilon}\left(q\right)" .
\end{equation}
In particular this suggests,
\begin{equation}
    "\lim_{\epsilon\to 0} A_{\xi}^{\epsilon}[\chi_A]\circ A_{\xi}^{\epsilon}[\chi_A]
    = \lim_{\epsilon\to 0} A_{\xi}^{\epsilon}[\chi_A]" .
\end{equation}
We leave rigorous considerations for future work.

%!TEX root = main.tex

\section{Locality and Causality}
\label{sec:loccaus}

We suppose that spacetime is a globally hyperbolic manifold and that we have a foliation of spacetime by spacelike Cauchy hypersurfaces labeled by a global time function. We denote the space of germs of solutions of the equations of motion in a neighborhood of the hypersurface $\Sigma_t$ at time $t$ by $L_t$. Equivalently, $L_t$ is the space of initial data at time $t$. The Fock space $\cH_t$ over $L_t$ is the space of states at time $t$. Crucially, the symplectic form $\omega_t$ on $L_t$ is a local expression. That is, if $\phi,\phi'\in L_t$ have support on disjoint subsets in $\Sigma_t$, then $\omega_t(\phi,\phi')=0$. What is more, the correspondence \eqref{eq:linobsdual} between linear observables on $L_t$ and elements of $L_t$ preserves this notion of locality. That is, the linear observable $D_\xi$ vanishes on germs that have a support disjoint from the support of $\xi$.
In the quantum theory this notion of locality is reflected in the commutators of field operators. We have from \eqref{eq:linca} and \eqref{eq:ccr},
\begin{equation}
    [\wq{D_{\xi'}},\wq{D_{\xi}}]=2\im\omega(\xi,\xi') .
    \label{eq:comhs}
\end{equation}
In particular, if $\xi$ and $\xi'$ have disjoint support on the hypersurface, the commutator vanishes.

In the following, it is convenient to move to a picture where, as usual, we make reference only to a single Hilbert space $\cH$ of states, e.g., by arbitrarily selecting one spacelike hypersurface. In the classical theory we use the correspondence between germs on a spacelike hypersurface and global solutions to identify observables on spaces of germs with observables on global solutions. The conservation of the symplectic form in time gives rise to a symplectic form on global solutions. The correspondence \eqref{eq:linobsdual} thus extends to a correspondence between linear observables on the space of global solutions (that we also call) $L$ and elements of $L$. Crucially, this remains true in the quantum theory. That is, let $\xi\in L$ be a global solution that restricts to germs $\xi_1\in L_1$ at time $t_1$ and $\xi_2\in L_2$ at time $t_2$, where $t_2> t_1$. Then, the operator $\wq{D_{\xi_2}}$ on $\cH_2$ is precisely the time-evolved version of the operator $\wq{D_{\xi_1}}$ on $\cH_1$, $\wq{D_{\xi_2}}=U_{[t_1,t_2]}\wq{D_{\xi_1}}U_{[t_1,t_2]}^{\dagger}$. We may thus identify these operators, giving rise to a well-defined operator $\wq{D_{\xi}}$ on $\cH$, where $\xi\in L$ is a global solution. While the commutator equation \eqref{eq:comhs} acquires in this way a global interpretation, its hypersurface specific interpretation is still valid, for any spacelike hypersurface. In particular, this means that given two global solutions $\xi,\xi'\in L$, if there is any spacelike hypersurface on which these solutions have disjoint supports, the commutator $[\wq{D_{\xi'}},\wq{D_{\xi}}]$ vanishes.

\subsection{Locality}
\label{sec:locality}

We say that a global solution $\xi\in L$ is \emph{localizable} in a subset $S$ of a spacelike hypersurface $\Sigma$ if its restriction to a germ on the hypersurface has support inside the subset. We assume locality and non-degeneracy of the symplectic form, so that $\xi$ is localizable in $S$ if and only if the symplectic from with any solution $\xi'$ that is localizable in the complement $\Sigma\setminus S$ vanishes, $\omega(\xi,\xi')=0$. Similarly, we say that a linear observable is \emph{localizable} in $S$ if and only if the corresponding solution $\xi\in L$ is localizable in $S$. Correspondingly, in the quantum theory we say that a field operator $\wq{D_{\xi}}$ is \emph{localizable} in $S$ if and only if the solution $\xi$ is localizable in $S$. Then, a field operator $E$ is localizable in $S$ if and only if it commutes with any field operator localizable in $\Sigma\setminus S$, due to relation \eqref{eq:comhs}. We extend this characterization to any operator $E$ on the Hilbert space $\cH$. That is, we say that $E$ is \emph{localizable} in $S$ if and only if $E$ commutes with any field operator localizable in $\Sigma\setminus S$.\footnote{It would be mathematically cleaner to use bounded operators for this definition, such as the unitaries $\exp(\im \wq{D_{\xi}})$. However, we feel this would unnecessarily complicate the discussion and leave these refinements to the knowledgeable reader.}

Remarkably, the results of Section~\ref{sec:povmdec} show that not only the operators arising in the spectral decomposition of the filed operator $\wq{D_\xi}$, but also those arising in the $\epsilon$-family of POVM decompositions have precisely the same locality properties as $\wq{D_\xi}$ itself, or equivalently as the classical observable $D_{\xi}$ or the solution $\xi$. This is particularly due to Lemmas~\ref{lem:prod}, \ref{lem:integrals}, and \ref{lem:dcommute} as well as the definition~\eqref{eq:defpint}. Thus, for $\epsilon>0$ the operator $\Pi_{\xi}^{\epsilon}(q)$ is localizable wherever $\xi$ is localizable. Similarly, $\Pi_{\xi}^{\epsilon}[f]$ is localizable wherever $\xi$ is localizable (for any $\epsilon$).

It is straightforward to extend this notion of locality to quantum operations. Thus, we say that a quantum operation $R:\cT\to\cT$ is \emph{localizable} in a subset $S$ of a spacelike hypersurface $\Sigma$ if and only if it commutes both ways with any field operator $\wq{D_{\xi}}$ localizable in $\Sigma\setminus S$, i.e., $R(\wq{D_{\xi}}\sigma)=\wq{D_{\xi}} R(\sigma)$ and $R(\sigma\wq{D_{\xi}})= R(\sigma) \wq{D_{\xi}}$. It is then easy to see from the definitions~\eqref{eq:projop}, \eqref{eq:defopq} and \eqref{eq:defaint} of Section~\ref{sec:quantop} that all quantum operations defined in connection with the observable $D_{\xi}$ have the same locality properties as $\wq{D_{\xi}}$, $D_{\xi}$ and $\xi$ itself, see also Lemma~\ref{lem:aintcom}.

\begin{prop}
  Let $\xi\in L$ be localizable in a subset $S$ of a spacelike hypersurface $\Sigma$. Then, $\Pi_{\xi}[f]$ and $Q_{\xi}[X]$ are localizable in $S$. Also, for $\epsilon>0$, $\Pi_{\xi}^{\epsilon}(q)$, $\Pi_{\xi}^{\epsilon}[f]$, $A_{\xi}^{\epsilon}(q)$ and $A_{\xi}^{\epsilon}[f]$ are localizable in $S$.
\end{prop}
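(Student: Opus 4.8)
The plan is to reduce every claim about localizability to the defining criterion established just before the proposition: an operator (or quantum operation) is localizable in $S$ precisely when it commutes with every field operator $\wq{D_{\xi'}}$ that is localizable in the complement $\Sigma\setminus S$. By the locality assumption on the symplectic form together with non-degeneracy, $\xi$ being localizable in $S$ is equivalent to $\omega(\xi,\xi')=0$ for every $\xi'$ localizable in $\Sigma\setminus S$. Thus the entire proof rests on showing that all the listed objects commute with $\wq{D_{\xi'}}$ whenever $\omega(\xi,\xi')=0$. I would organize the proof as a sequence of short deductions, one per object, each invoking an already-proven commutation lemma.

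First I would dispatch the operators $\Pi_\xi^\epsilon(q)$ and $\Pi_\xi^\epsilon[f]$ for $\epsilon>0$. The key input is Lemma~\ref{lem:dcommute}, which states $[\Pi_{\xi'}^{\epsilon'}[f'],\Pi_{\xi}^{\epsilon}[f]]=0$ whenever $\omega(\xi,\xi')=0$. To turn commutation with another $\Pi$-operator into commutation with the unbounded field operator $\wq{D_{\xi'}}$ itself, I would express $\wq{D_{\xi'}}$ via its own spectral (or POVM) decomposition using Theorem~\ref{thm:povmspecdec}: the identity $\langle\eta,\wq{D_{\xi'}}\psi\rangle = B_{\xi'}^{\epsilon'}[\id](\eta,\psi)$ from equation~\eqref{eq:decomp}, combined with the commuting-projector structure, lets me conclude that $\Pi_\xi^\epsilon[f]$ commutes with $\wq{D_{\xi'}}$ on the domain $\cD$. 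For the spectral-measure case $\epsilon=0$, the operators $\Pi_\xi[f]=\Pi_\xi^0[f]$ are handled by the same argument, noting that Lemma~\ref{lem:dcommute} explicitly covers the degenerate case $\epsilon=\epsilon'=0$ by a weak limiting argument.

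Next I would treat the quantum operations $Q_\xi[X]$, $A_\xi^\epsilon(q)$ and $A_\xi^\epsilon[f]$. Here the relevant notion of localizability is the one for maps $R:\cT\to\cT$, namely $R(\wq{D_{\xi'}}\sigma)=\wq{D_{\xi'}}R(\sigma)$ and $R(\sigma\wq{D_{\xi'}})=R(\sigma)\wq{D_{\xi'}}$. Since $Q_\xi[X](\sigma)=\Pi_\xi[\chi_X]\,\sigma\,\Pi_\xi[\chi_X]$ and $A_\xi^\epsilon(q)(\sigma)=\sqrt{2\pi}\,\epsilon\,\Pi_\xi^\epsilon(q)\,\sigma\,\Pi_\xi^\epsilon(q)$, once $\Pi_\xi[\chi_X]$ and $\Pi_\xi^\epsilon(q)$ are known to commute with $\wq{D_{\xi'}}$, the factor of $\wq{D_{\xi'}}$ can simply be slid past both $\Pi$-factors from either side, giving the required intertwining. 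The integrated version $A_\xi^\epsilon[f]$ follows by integrating this pointwise statement over $q$, using definition~\eqref{eq:defaint}; alternatively one invokes Lemma~\ref{lem:aintcom} in the same spirit as Lemma~\ref{lem:dcommute}.

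The main obstacle I anticipate is the domain/unboundedness bookkeeping: $\wq{D_{\xi'}}$ is unbounded with domain $\cD$, so the commutation relations must be read as identities of sesquilinear forms on $\cD$ (or on a suitable core) rather than as genuine operator identities, and the passage from "commutes with all $\Pi_{\xi'}^{\epsilon'}[f']$" to "commutes with $\wq{D_{\xi'}}$" needs the decomposition~\eqref{eq:decomp} to be applied carefully, checking that matrix elements in $\cD$ agree. For the quantum-operation statements one must additionally confirm that $\wq{D_{\xi'}}\sigma$ and $\sigma\wq{D_{\xi'}}$ remain trace-class (or at least that the defining trace pairings make sense) so that the intertwining identities are meaningful on $\cT$. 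Once these functional-analytic technicalities are acknowledged, every individual claim is a one-line consequence of an already-established commutation lemma, so the proof is essentially a clean enumeration rather than a computation.
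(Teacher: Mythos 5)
Your proposal is correct and follows essentially the same route as the paper, which in fact states this proposition without a formal proof and justifies it only through the surrounding discussion citing Lemmas~\ref{lem:prod}, \ref{lem:integrals}, \ref{lem:dcommute} and \ref{lem:aintcom} together with the definitions \eqref{eq:defpint}, \eqref{eq:projop}, \eqref{eq:defopq} and \eqref{eq:defaint}. The domain issues you flag for the unbounded $\wq{D_{\xi'}}$ are real but are deliberately left at the same informal level by the author, who notes in a footnote that a cleaner treatment would use the bounded unitaries $\exp(\im\wq{D_{\xi}})$ instead.
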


\begin{figure}
  \centering
  \includegraphics[width=0.6\textwidth]{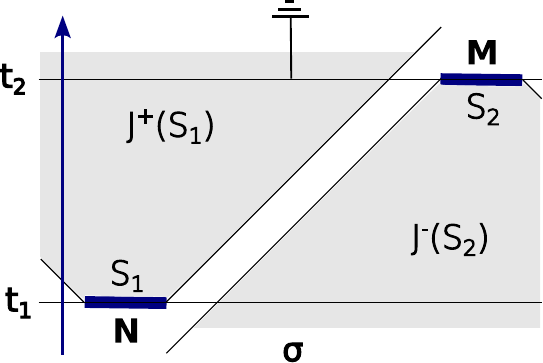}
\caption{Setup with two measurements $N$ and $M$. $N$ is non-selective, while $M$ is selective. By locality, the outcome of $M$ should not depend on whether or not $N$ is performed.}
\label{fig:locality}
\end{figure}

We proceed to illustrate this notion of locality in a measurement context. Suppose we do two measurements, see Figure~\ref{fig:locality}. At time $t_1$ we do a non-selective measurement $N$ that is localizable in a subset $S_1$ of the equal-time hypersurface at $t_1$. At a later time $t_2$ we do a selective measurement $M$ that is localizable in a subset $S_2$ of the equal-time hypersurface at $t_2$. We further assume that the causal future $J^+(S_1)$ of $S_1$ does not intersect $S_2$. This means that $N$ is localizable at time $t_2$ in a set disjoint from $S_2$. Equivalently, we assume that the causal past $J^-(S_2)$ does not intersect $S_1$. This means that $M$ is localizable at time $t_1$ in a set disjoint from $S_1$. With our previous definitions, this implies that the quantum operations $N$ and $M$ commute. Suppose we have an initial state $\sigma$ at $t_1$. Then, the outcome of the measurement is,
\begin{equation}
    \tr(M\circ N(\sigma))=\tr(N\circ M(\sigma))=\tr(M(\sigma)) .
\end{equation}
In particular, the measurement outcome is the same as if the measurement $N$ was not performed.
That is, a non-selective measurement $N$ localizable outside the causal past of a set where a selective measurement $M$ can be localized, does not influence its results.

In the standard formulation of quantum theory the \emph{causality axiom} can be interpreted as the statement that a non-selective measurement cannot affect the outcome probabilities of measurements performed before. This is mathematically implemented by the requirement that a non-selective quantum operation be trace preserving. The present notion of \emph{locality} can be interpreted as the natural relativistic extension of this principle. Thus, a non-selective measurement localizable on a subset of a spacelike hypersurface cannot affect the outcome probabilities of measurements performed outside the causal future of the subset.

\subsection{Causal transparency}
\label{sec:caustransp}

We turn to another constraint that a physically realizable notion of measurement must satisfy: The dynamics of the measurement must not only respect the causal structure of spacetime outside the set where the measurement is localized (this is what locality amounts to), but also inside. In 1993, Rafael Sorkin demonstrated in a seminal work that a projective measurement quite generically violates this condition by enabling superluminal signaling \cite{Sor:impossible}. This is detected through the transmission of a signal between two other measurements that cannot normally communicate causally.

\begin{figure}
  \centering
  \includegraphics[width=0.6\textwidth]{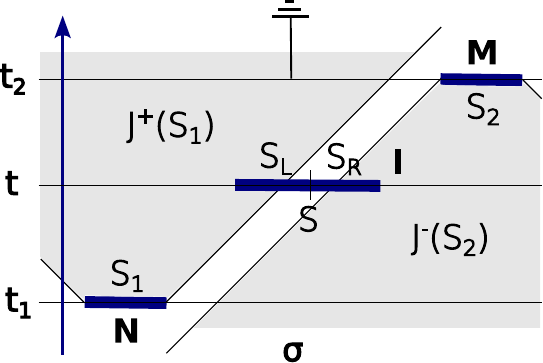}
\caption{Setup with three measurements $N$, $I$ and $M$. $N$ and $I$ are non-selective, while $M$ is selective. Given locality, causal transparency of $I$ means that the outcome of $M$ should not depend on whether or not $N$ is performed.}
\label{fig:caustrans}
\end{figure}

The setup is a modification of the previously discussed setup of Figure~\ref{fig:locality}. At an intermediate time between the initial measurement $N$ and the final measurement $M$, another non-selective measurement $I$ is inserted, see Figure~\ref{fig:caustrans}. This is localizable in a set $S$, which intersects both the causal future of $S_1$, where $N$ is localizable, and the causal past of $S_2$, where $M$ is localizable. Given that $I$ and $M$ are performed, the question is whether we can tell from the outcome at $M$, whether or not the measurement $N$ was performed. If this is the case, the implication is that $I$ has enabled superluminal signaling, which we can think of as happening in $S$. Sorkin has shown for a large class of projective measurements in scalar field theory (with $S$ covering the whole hypersurface at the intermediate time) that such superluminal signaling takes place. This result has long been taken as an indication that conceiving of a measurement theory for quantum field theory as analogous to that of non-relativistic quantum mechanics is problematic. In particular, it suggests that the route of constructing quantum operations through spectral decompositions of self-adjoint operators might not be viable, as these would constitute essentially projective measurements.

We will call measurements that do not lead to superluminal signaling in any scenario of the type described \emph{causally transparent}, as they are "transparent" to the casual structure of spacetime.\footnote{Sorkin used the term "locality". However, we need a different term to differentiate this property from the notion of locality considered in Section~\ref{sec:locality}.} We show in the following that, contrary to expectations, the non-selective quantum operations $A_{\xi}^{\epsilon}[\one]$ encoding the measurement of the linear observable $D_{\xi}$ are causally transparent.

\begin{thm}
  \label{thm:caustransp}
  Let $S_1,S,S_2$ be subsets of the equal-time hypersurfaces at distinct times $t_1,t, t_2$ such that $S_2$ does not intersect the causal future of $S_1$. Let $N$ be a non-selective quantum operation localizable at $S_1$, $M$ a selective quantum operation localizable at $S_2$ and $\xi\in L_t$ localizable in $S$. Let $\epsilon>0$ and $I=A_{\xi}^{\epsilon}[\one]$ and $\sigma\in\cT$. Then,
  \begin{equation}
     \tr\left((M\comp I\comp N)(\sigma)\right)=\tr\left((M\comp I)(\sigma)\right) .
     \label{eq:ctid}
  \end{equation}
  Here, the symbol $\diamond$ means that composition according to the temporal order of operations is applied.
\end{thm}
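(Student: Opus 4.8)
The plan is to reduce the identity to a no-signaling statement for a single effect operator obtained by dualizing $I$, and then to use the special structure of $I=A_{\xi}^{\epsilon}[\one]$ as a Gaussian average of unitary conjugations to show that this dualized effect remains localized outside $J^{+}(S_1)$. Since the times satisfy $t_1<t<t_2$, the temporal composition $M\comp I\comp N$ is the ordinary functional composition $M\circ I\circ N$. First I would replace the selective operation $M$ by its effect $F\defeq M^{*}(\id)$, a positive bounded operator with $0\le F\le\id$. Because $M$ is localizable at $S_2$, so is $F$: for any $\wq{D_\zeta}$ localizable in $\Sigma\setminus S_2$ one has $\tr(F\wq{D_\zeta}\rho)=\tr(M(\wq{D_\zeta}\rho))=\tr(\wq{D_\zeta}M(\rho))=\tr(M(\rho\wq{D_\zeta}))=\tr(\wq{D_\zeta}F\rho)$, whence $[F,\wq{D_\zeta}]=0$. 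Writing $\hat F\defeq I^{*}(F)$, the left-hand side of \eqref{eq:ctid} becomes $\tr(F\,(I\circ N)(\sigma))=\tr(\hat F\,N(\sigma))$, while the right-hand side is $\tr(F\,I(\sigma))=\tr(\hat F\,\sigma)$, so the whole theorem reduces to showing $\tr(\hat F\,N(\sigma))=\tr(\hat F\,\sigma)$.

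Next I would make $\hat F$ explicit. Since $D_\xi$ is linear, the unitary $\wq{G_\xi^t}$ of Lemma~\ref{lem:unitaryintegral} is $U_t\defeq\exp(2\im t\,\wq{D_\xi})$, so $\Pi_{\xi}^{\epsilon}(q)=\frac1\pi\int\xd t\,e^{-\epsilon^2 t^2-2\im q t}U_t$. As $\Pi_{\xi}^{\epsilon}(q)$ is self-adjoint, $I$ is self-dual, $\hat F=I^{*}(F)=\sqrt{2\pi}\,\epsilon\int\xd q\,\Pi_{\xi}^{\epsilon}(q)\,F\,\Pi_{\xi}^{\epsilon}(q)$. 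Substituting the integral representation and performing the $q$-integration, which produces a factor $\delta(t+t')$ collapsing the double integral to the diagonal $t'=-t$, yields the manifestly positive random-unitary form
\begin{equation}
    \hat F=\frac{\sqrt{2}\,\epsilon}{\sqrt\pi}\int_{-\infty}^{\infty}\xd t\,e^{-2\epsilon^2 t^2}\,U_t\,F\,U_t^{\dagger}.
\end{equation}

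The crucial and hardest step is to show that $\hat F$ is localizable outside $J^{+}(S_1)$, even though $\xi$, and hence each $U_t$, is localizable only in $S$, which may well intersect $J^{+}(S_1)$. Let $\zeta$ be localizable in $J^{+}(S_1)\cap\Sigma_t$. The causal hypothesis $S_2\cap J^{+}(S_1)=\emptyset$ forces $J^{+}(S_1)$ and $J^{-}(S_2)$ to be disjoint, so $\zeta$ is spacelike to $S_2$ and therefore $[\wq{D_\zeta},F]=0$ by the localizability of $F$. The decisive observation is now that $[\wq{D_\zeta},\wq{D_\xi}]=2\im\omega(\zeta,\xi)$ is a c-number by \eqref{eq:comhs}, so conjugation by $U_t$ merely shifts $\wq{D_\zeta}$ by a constant, $U_t^{\dagger}\wq{D_\zeta}U_t=\wq{D_\zeta}+4t\,\omega(\zeta,\xi)$. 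The shift then cancels in the commutator,
\begin{equation}
    [\wq{D_\zeta},U_t F U_t^{\dagger}]=U_t\,[\wq{D_\zeta}+4t\,\omega(\zeta,\xi),F]\,U_t^{\dagger}=U_t\,[\wq{D_\zeta},F]\,U_t^{\dagger}=0,
\end{equation}
and integrating over $t$ gives $[\wq{D_\zeta},\hat F]=0$. This is exactly the mechanism of causal transparency: the dephasing of a single field operator cannot relay to the region of $M$ the disturbance of $\wq{D_\zeta}$ inflicted by $N$, because conjugation by $U_t$ acts on commuting field operators only by a central shift.

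Finally I would invoke the no-signaling result of Section~\ref{sec:locality}. As $\hat F$ is positive, bounded and localizable outside $J^{+}(S_1)$, the selective operation $\hat M(\rho)\defeq\hat F^{1/2}\rho\,\hat F^{1/2}$ is localizable outside $J^{+}(S_1)$ as well, since $\hat F^{1/2}$ is a continuous function of $\hat F$ and commutes with the same field operators. A non-selective $N$ localizable at $S_1$ commutes with any measurement localizable outside $J^{+}(S_1)$ and is trace-preserving, so $\tr(\hat F\,N(\sigma))=\tr(\hat M(N(\sigma)))=\tr(N(\hat M(\sigma)))=\tr(\hat M(\sigma))=\tr(\hat F\,\sigma)$, closing the argument. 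The main obstacle is the localizability of $\hat F$ in the third paragraph; the c-number nature of the field commutator is precisely what keeps the conjugated effect outside $J^{+}(S_1)$ although $U_t$ itself does not commute with $\wq{D_\zeta}$. A secondary technical point, which I would treat as in the paper's footnote by passing to the bounded unitaries $\exp(\im\wq{D_\zeta})$, is that the field operators are unbounded, so the commutator identities must be read on a suitable common domain or in their exponentiated form.
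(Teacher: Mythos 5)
Your proof is correct, but it takes a genuinely different route from the paper. The paper works in the Schr\"odinger picture: it splits $S=S_L\cup S_R$ and correspondingly $\xi=\xi_L+\xi_R$ with $\omega(\xi_L,\xi_R)=0$, uses the convolution identity of Lemma~\ref{lem:projsumid} to factor $\Pi_{\xi}^{\epsilon}(q)$ into $\Pi_{\xi_L}^{\epsilon'}$ and $\Pi_{\xi_R}^{\epsilon'}$ pieces, commutes the $L$-pieces past $M$ and the $R$-pieces into $N$, and recombines the $L$-pieces into $\Pi_0^{\epsilon}(s'-s)$, a multiple of the identity, so that trace preservation of $N$ finishes the job. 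You instead dualize: you pass to the effect $F=M^*(\id)$, exhibit $I^*$ as a Gaussian random-unitary (displacement) channel via the representation of Lemma~\ref{lem:unitaryintegral}, and observe that conjugation by $U_t=\exp(2\im t\wq{D_\xi})$ shifts any $\wq{D_\zeta}$ only by the c-number $4t\,\omega(\zeta,\xi)$, so that $I^*(F)$ inherits the commutation properties of $F$ and remains localizable away from $J^+(S_1)$; the Section~\ref{sec:locality} no-signaling argument then closes the proof. Your mechanism is more conceptual and in fact stronger: it nowhere uses the geometric splitting of $S$ (indeed not even the localizability of $\xi$ in $S$), and it isolates exactly why these non-selective operations are causally transparent. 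The paper's route buys fully explicit Gaussian integral identities at every step. Three minor points, none fatal: (i) you silently assume $t_1<t<t_2$; the other orderings follow from locality alone, as the paper notes in its first line; (ii) the interchange of the $q$-integral with the $t,t'$-integrals producing $\delta(t+t')$ needs the same kind of justification as the paper's appendix computations (the convergence of $\int\xd q\,\Pi_{\xi}^{\epsilon}(q)F\Pi_{\xi}^{\epsilon}(q)$ can be controlled via the Gaussian decay of $\|\Pi_{\xi}^{\epsilon}(q)\coh_\beta\|$ from Lemma~\ref{lem:prod}); (iii) your final step, like the paper's, relies on the assertion of Section~\ref{sec:locality} that causally disjoint quantum operations commute, which the paper states but does not derive from its definitions, so this is not a gap specific to your argument.
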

\begin{proof}
  If we do not have $t_1< t< t_2$, the statement follows already from locality. Thus, assume $t_1< t< t_2$. Then, there is a partition of the set $S$ into a disjoint union $S=S_L\cup S_R$ such that $S_L$ does not intersect the causal past of $S_2$ and $S_R$ does not intersect the causal future of $S_1$. Correspondingly, there is a decomposition $\xi=\xi_L + \xi_R$, such that $\xi_L$ is localizable in $S_L$ and $\xi_R$ is localizable in $S_R$. Set $\epsilon'=\epsilon/\sqrt2$. Using the commutation properties due to localizability, Lemma~\ref{lem:projsumid}, the identity $\Pi_{-\xi}^{\epsilon}(q)=\Pi_{\xi}^{\epsilon}(-q)$ and the fact that $\Pi_0^\epsilon(q)$ is a multiple of the identity, we find,
  \begin{align*}
    \tr\left((M\comp I\comp N)(\sigma)\right)
     & =\tr\left((M\circ I\circ N)(\sigma)\right) \\
     & =\tr\left(\left(M\circ A_{\xi}^{\epsilon}[\one]\circ N\right)(\sigma)\right) \\
     & =\sqrt{2\pi}\epsilon\int_{-\infty}^{\infty}\xd q\,
       \tr\left(M\left(\Pi_{\xi}^{\epsilon}(q) N (\sigma) \Pi_{\xi}^{\epsilon}(q)\right)\right) \\
     & =\sqrt{2\pi}\epsilon\int_{-\infty}^{\infty}\xd q\,\xd s\,\xd s'\,
       \tr\left(M\left(\Pi_{\xi_L}^{\epsilon'}(q-s)\Pi_{\xi_R}^{\epsilon'}(s)
        N(\sigma) \Pi_{\xi_R}^{\epsilon'}(s')\Pi_{\xi_L}^{\epsilon'}(q-s')\right)\right) \\
     & =\sqrt{2\pi}\epsilon\int_{-\infty}^{\infty}\xd q\,\xd s\,\xd s'\,
       \tr\left(\Pi_{\xi_L}^{\epsilon'}(q-s')\Pi_{\xi_L}^{\epsilon'}(q-s)\, M\circ N\left(\Pi_{\xi_R}^{\epsilon'}(s)
        \sigma \Pi_{\xi_R}^{\epsilon'}(s')\right)\right) \\
     & =\sqrt{2\pi}\epsilon\int_{-\infty}^{\infty}\xd q\,\xd s\,\xd s'\,
       \tr\left(\Pi_{-\xi_L}^{\epsilon'}(s'-s-q)\Pi_{\xi_L}^{\epsilon'}(q)\, N\circ M\left(\Pi_{\xi_R}^{\epsilon'}(s)
        \sigma \Pi_{\xi_R}^{\epsilon'}(s')\right)\right) \\
     & =\sqrt{2\pi}\epsilon\int_{-\infty}^{\infty}\xd s\,\xd s'\,
       \tr\left(\Pi_{0}^{\epsilon}(s'-s)\, N\circ M\left(\Pi_{\xi_R}^{\epsilon'}(s)
        \sigma \Pi_{\xi_R}^{\epsilon'}(s')\right)\right) \\
     & =\sqrt{2\pi}\epsilon\int_{-\infty}^{\infty}\xd s\,\xd s'\,
       \tr\left(N\left(\Pi_{0}^{\epsilon}(s'-s)\, M\left(\Pi_{\xi_R}^{\epsilon'}(s)
        \sigma \Pi_{\xi_R}^{\epsilon'}(s')\right)\right)\right) \\
     & =\sqrt{2\pi}\epsilon\int_{-\infty}^{\infty}\xd s\,\xd s'\,
       \tr\left(\Pi_{0}^{\epsilon}(s'-s)\, M\left(\Pi_{\xi_R}^{\epsilon'}(s)
        \sigma \Pi_{\xi_R}^{\epsilon'}(s')\right)\right) \\
     & =\sqrt{2\pi}\epsilon\int_{-\infty}^{\infty}\xd q\,\xd s\,\xd s'\,
       \tr\left(\Pi_{\xi_L}^{\epsilon'}(q-s')\Pi_{\xi_L}^{\epsilon'}(q-s)\, M\left(\Pi_{\xi_R}^{\epsilon'}(s)
        \sigma \Pi_{\xi_R}^{\epsilon'}(s')\right)\right) \\
     & =\sqrt{2\pi}\epsilon\int_{-\infty}^{\infty}\xd q\,\xd s\,\xd s'\,
       \tr\left(M\left(\Pi_{\xi_L}^{\epsilon'}(q-s)\Pi_{\xi_R}^{\epsilon'}(s)
        \sigma \Pi_{\xi_R}^{\epsilon'}(s')\Pi_{\xi_L}^{\epsilon'}(q-s')\right)\right) \\
     & =\tr\left((M\circ I)(\sigma)\right)=\tr\left((M\comp I)(\sigma)\right) .
  \end{align*}
\end{proof}

Crucially, the identity \eqref{eq:ctid} holds for any $\epsilon>0$. Thus, in particular, it applies to the limit $\epsilon\to 0$, if it exists.

%!TEX root = main.tex

\section{Discussion and Outlook}
\label{sec:outlook}

We construct in the first part of this work (Sections~\ref{sec:motivation} and \ref{sec:povmdec}) the spectral decomposition of field operators of free bosonic quantum field theory in an explicit form (Theorem~\ref{thm:povmspecdec}). What is more, we construct a norm-continuous (Lemmas~\ref{lem:normcont} and \ref{lem:strongcontint}) one-parameter family of positive-operator-valued measure (POVM) decompositions (Theorem~\ref{thm:povmspecdec}), having the spectral decomposition as a weak limit (Lemmas~\ref{lem:weakcont} and \ref{lem:weakcontint}). This family also exhibits a semigroup structure (Lemma~\ref{lem:comp}). Crucially, in contrast to the spectral measure, the POVM measures yield well-defined bounded operators at points (Lemma~\ref{lem:bopq}).

In the second part of this work (Section~\ref{sec:quantop}) we construct quantum operations encoding measurements of the linear observable corresponding to the field operator. We distinguish two types: One in which the real line of possible values is partitioned into subsets representing discrete outcomes. The quantum operations distinguishing between these discrete outcomes are readily obtained from the projection operators arising from integrating the spectral measure over the corresponding subsets (Section~\ref{sec:doutcomes}). The other type corresponds to a fully fine-grained measurement of the real value of the observable with continuous outcome (Section~\ref{sec:coutcomes}). Due to the singular nature of the spectral measure at points, this cannot be constructed directly from the spectral measure. Rather, we use the one-parameter family of POVM measures, obtaining in this way a trace-norm-continuous one-parameter family of generators of quantum operations (Lemmas~\ref{lem:defopq} and \ref{lem:acontepsilon}). While the "true" measurement of the observable should correspond to the limit where the POVM measures approximate the spectral measure, this limit does not exist for the quantum operations in the trace norm. However, a much weaker notion of limit is required to describe measurement processes. Only for the composite expression of a probability or expectation value does this limit need to exist. Lemma~\ref{lem:tralim} implies the existence of the limit under the trace and thus provides strong evidence for its existence in the cases of interest. What is more, we have provided evidence that the limit defines an ideal measurement. A detailed examination of these questions is out of scope for the present paper, and should be addressed in future investigations.

Locality and causality properties of the constructed quantum operations are examined in the third part of this work (Section~\ref{sec:loccaus}). We show that not only the spectral measure, but also the family of POVMs have the same locality properties as the field operator to which they correspond. That is, the measures and operators constructed from them are localizable in the same subsets of a spacelike hypersurface as the field operator, and have the same commutation properties. What is more, all quantum operations, both of the discrete and the continuous outcome type, inherit these locality properties. In particular, a non-selective measurement performed outside the causal past of a selective one, cannot influence the outcome statistics of the latter, assuming a subsequent discard (Section~\ref{sec:locality}).

A causality property crucial for measurements to be physically implementable is that the dynamics of the measurement must not violate special relativity. This would be the case if the measurement enables signaling (which would thus be superluminal) between other measurements in relatively spacelike separated spacetime regions. We call this principle \emph{causal transparency} (Section~\ref{sec:caustransp}). We provide a formal proof (Theorem~\ref{thm:caustransp}) that the one-parameter family of non-selective quantum operations of the continuous outcome type constructed in Section~\ref{sec:coutcomes} satisfies causal transparency. This implies in particular, that the limit where this family approaches the "true" measurement of the corresponding observable, if it exists (see above), also satisfies causal transparency.

To put the latter results into perspective, we specialize to Klein-Gordon theory which is often used to discuss questions of locality and causality in the literature. Thus, any field operator can be written in the form 
\begin{equation}
  \hat{\phi}(f,g)=\int\xd^3 x\, \left(f(x)\hat{\phi}(t,x) + g(x)\hat{\dot{\phi}}(t,x)\right)  .
\end{equation}
The support of the field operator is the union of the supports of the functions $f$ and $g$ on the hypersurface at $t$ in terms of our discussion of locality in Section~\ref{sec:locality}. In particular, the elements of the phase space $L$ are in correspondence to pairs $(f,g)$ and allow for a rather straightforward interpretation in terms of initial data. On the other hand, it is often more convenient to use a smearing function $f$ in spacetime rather than space, which also allows to omit the operator $\hat{\dot{\phi}}(t,x)$. In this setting, variations of our continuous outcome measurement family have appeared previously in the literature in the context of weak measurements, see \cite{Jub:causalupdates,Bed:generalmeasure} and references therein. What is more, Jubb has already argued that such measurements satisfy causal transparency \cite{Jub:causalupdates}.

Sorkin showed in a seminal work that a large class of projection valued measurements violate causal transparency \cite{Sor:impossible}. Since Sorkin's argument was quite generic, this has been taken to mean that any projection-valued measurement probably violates causal transparency. Since quantum operations corresponding to self-adjoint operators are normally constructed out of projectors arising from the spectral decomposition of the operator, this has been interpreted as an important obstacle in obtaining a reasonable theory of quantum measurement for quantum field theory \cite{PaFr:eliminatingimpossible}.
In the context of the present work, this affects in particular the discrete outcome measurements discussed in Section~\ref{sec:doutcomes}. Indeed, the conclusion that they violate causal transparency has been reinforced in a recent more specific analysis of these types of measurements in Klein-Gordon theory by Albertini and Jubb \cite{AlJu:measurecausal}. This makes our proof of causal transparency for the continuous outcome measurements surprising, as one might think of these as arising from an (infinite) refinement of the discrete outcome measurements.
In particular, our result suggests that a satisfactory theory of measurement, broadly analogous to that of non-relativistic quantum mechanics, but respecting locality and causality, can be constructed for quantum field theory after all.

One of the first questions that arises in constructing such a theory is whether and how we can measure observables other than linear ones. The functional calculus of operations that we have introduced in Section~\ref{sec:coutcomes} suggests one avenue to answer this question. It tells us in particular how we can measure any observable that arises as a function of a linear one. What is more, the corresponding non-selective measurement is the same one independent of the function and in particular satisfies causal transparency. Moreover, this approach can be extended to observables that admit representations as functions of several linear observables. Thus, the quantum operation \eqref{eq:defaint} generalizes straightforwardly to the following one:
\begin{equation}
  A_{\xi_1,\ldots,\xi_n}^{\epsilon_1,\ldots\epsilon_n}[f]
  \defeq \int_{\R^n}\xd q_1\cdots\xd q_n\, f(q_1,\ldots,q_n)\, A_{\xi_n}^{\epsilon_n}(q_n)\circ\cdots\circ A_{\xi_1}^{\epsilon_1}(q_1) .
  \label{eq:opcomp}
\end{equation}
Crucially, the corresponding non-selective quantum operation is simply the composition of the non-selective quantum operations for $\xi_1,\ldots,\xi_n$, i.e., $A_{\xi_n}^{\epsilon_n}[\one]\circ\cdots\circ A_{\xi_1}^{\epsilon_1}[\one]$ and thus causally transparent.

We have already seen (in Section~\ref{sec:coutcomes}) that an important difference to the usual non-relativistic recipes of measurement lies in the fact that the full value of the linear observable is extracted from the quantum system, even if we are interested in a function of the observable that forgets some of this information. For example, if we want to measure the square $D^2$ of a linear observable $D$, we also extract its sign, whereas in the usual scheme based on the square $\widehat{D}^2$ of the operator $\widehat{D}$, this is not the case. What difference does this make to expectation values? For a single measurement, that is, if we just prepare, then measure, then discard, this is easy to answer. As can be read off from comparing expressions~\eqref{eq:charspec} and \eqref{eq:trchar} for any function of the observable there is no difference at all in the expectation value in the limit $\epsilon\to 0$. For joint probabilities and expectation values of composite measurements there will certainly be differences. However, there we have much less evidence what the "right" predictions should be. Indeed, physical realizability of the measurement (in the form of casual transparency) might be a more important principle in choosing a scheme than sticking to recipes from the non-relativistic realm. In any case, the development and practical application of the proposed approach will show its merits, or not.

An important direction for generalizing the presented approach is to consider observables that are extended not only in space, but also in time. To this end, it is convenient to switch to a fully covariant formalism, and a natural choice is the path integral. An observable in a spacetime region $M$ is then given by a map $K_M\to\R$, where $K_M$ denotes the space of field configurations in $M$. Note that the previously used notion of observable on the phase space $L$ on a (usually spacelike) hypersurface $\Sigma$ can be recovered as a degenerate case, where the region $M$ is squeezed to an infinitesimal neighborhood of $\Sigma$ \cite{Oe:feynobs,CoOe:locgenvac}. (This is called a \emph{slice observable}.) Since the path integral implements Weyl quantization (for the case of slice observables), the spacetime analog of the operator $\Pi_{\xi}^{\epsilon}(q)$ is obtained by inserting the spacetime analog of the observable $H_{\xi}^{\epsilon}(q)$ of Proposition~\ref{prop:piobs} into the path integral. For a linear observable $D:K_M\to\R$ this observable thus takes the form
\begin{equation}
    H_{D}^{\epsilon}(q)(\phi)
    \defeq\frac{1}{\sqrt{\pi}\epsilon}\exp\left(-\frac{1}{\epsilon^2}(D(\phi)-q)^2\right) .
\end{equation}
With this we can define the spacetime analogs of the quantum operations \eqref{eq:defopq} and \eqref{eq:defaint}. Spacetime analogs of quantum operations are called \emph{probes} in the \emph{positive formalism}, which is the natural spacetime generalization of the non-relativistic compositional framework of quantum operations, based on the path integral \cite{Oe:dmf,Oe:posfound}. Another advantage of this spacetime setting is that an explicit time-ordering of observables that would determine operator or operation ordering as in expression \eqref{eq:opcomp} is not required. While a proper development of the spacetime setting is out of scope for the present work, we do mention that an analog of Theorem~\ref{thm:caustransp} does hold for non-selective spacetime extended probes of the continuous outcome type, showing their causal transparency. Details will be reported elsewhere.

\subsection*{Acknowledgments}

This work was partially supported by UNAM-PAPIIT project grant IN106422 and UNAM-PASPA-DGAPA.

\subsection*{Statements and Declarations}

No datasets have been analyzed or generated in this work. The author has no competing interests to declare that are relevant to the content of this article.

\appendix

%!TEX root = main.tex

\section{Appendix}
\label{sec:app}

\begin{proof} of equation \eqref{eq:prod} of Lemma~\ref{lem:prod}. We use the representation of the sesquilinear form given by equation \eqref{eq:intrepsql} and the identity $D_{\xi}(\beta,\gamma)=\frac{\im}{2}(\{\beta,\xi\}-\{\xi,\gamma\})$ which follows from relations \eqref{eq:lip}, \eqref{eq:linobsdual} and \eqref{eq:betagammaint}.
  \begin{multline}
    (B_{\xi'}^{\epsilon'}(q')\star B_{\xi}^{\epsilon}(q))(\coh_\gamma, \coh_\beta)
    =\int_{\xL}\xd\nu(\phi)\, B_{\xi'}^{\epsilon'}(q')(\coh_\gamma,\coh_\phi) B_{\xi}^{\epsilon}(q)(\coh_\phi, \coh_\beta) \\
    =\frac{1}{\pi^2}\int_{\xL}\xd\nu(\phi)
    \int_{-\infty}^{\infty}\xd t \int_{-\infty}^{\infty}\xd t'\, e^{-(\epsilon^2 +\|\xi\|^2) t^2 -(\epsilon'^2 +\|\xi'\|^2) t'^2}
    \langle \coh_{\gamma},\coh_{\phi}\rangle\,\langle \coh_{\phi},\coh_{\beta}\rangle\\
    \exp\left(2\im t (D_{\xi}(\beta,\phi)-q)+ 2\im t' (D_{\xi'}(\phi,\gamma)-q')\right) \\
    =\frac{1}{\pi^2}
    \int_{-\infty}^{\infty}\xd t \int_{-\infty}^{\infty}\xd t'\, e^{-(\epsilon^2 +\|\xi\|^2) t^2 -(\epsilon'^2 +\|\xi'\|^2) t'^2}
    \int_{\xL}\xd\nu(\phi)
    \langle \coh_{\gamma},\coh_{\phi}\rangle\,\langle \coh_{\phi},\coh_{\beta}\rangle\\
    \exp\left(\frac12\{2t\xi,\phi\} - \frac12\{\phi,2t'\xi'\}
    +2\im t (D_{\xi}(\beta,0)-q)+ 2\im t' (D_{\xi'}(0,\gamma)-q')\right) \\
    =\frac{1}{\pi^2}
    \int_{-\infty}^{\infty}\xd t \int_{-\infty}^{\infty}\xd t'\, e^{-(\epsilon^2 +\|\xi\|^2) t^2 -(\epsilon'^2 +\|\xi'\|^2) t'^2}
    \exp\left(2\im t (D_{\xi}(\beta,0)-q)+ 2\im t' (D_{\xi'}(0,\gamma)-q')\right) \\
    \int_{\xL}\xd\nu(\phi) \langle \coh_{\gamma-2t'\xi'},\coh_{\phi}\rangle\,\langle \coh_{\phi},\coh_{\beta+2 t\xi}\rangle\\
    =\frac{1}{\pi^2}
    \int_{-\infty}^{\infty}\xd t \int_{-\infty}^{\infty}\xd t'\, e^{-(\epsilon^2 +\|\xi\|^2) t^2 -(\epsilon'^2 +\|\xi'\|^2) t'^2}
    \exp\left(2\im t (D_{\xi}(\beta,0)-q)+ 2\im t' (D_{\xi'}(0,\gamma)-q')\right) \\
    \langle \coh_{\gamma-2t'\xi'},\coh_{\beta+2 t\xi}\rangle\\
    =\frac{1}{\pi^2}
    \int_{-\infty}^{\infty}\xd t \int_{-\infty}^{\infty}\xd t'\, e^{-(\epsilon^2 +\|\xi\|^2) t^2 -(\epsilon'^2 +\|\xi'\|^2) t'^2}
    \exp\left(2\im t (D_{\xi}(\beta,0)-q)+ 2\im t' (D_{\xi'}(0,\gamma)-q')\right) \\
    \langle \coh_{\gamma},\coh_{\beta}\rangle
    \exp\left(\frac12\{2t\xi,\gamma\} -\frac12 \{\beta,2 t'\xi'\}
     -2\{\xi,\xi'\} t t'\right) \\
    =\langle \coh_{\gamma},\coh_{\beta}\rangle\frac{1}{\pi^2}
    \int_{-\infty}^{\infty}\xd t \int_{-\infty}^{\infty}\xd t'\, e^{-(\epsilon^2 +\|\xi\|^2) t^2 -(\epsilon'^2 +\|\xi'\|^2) t'^2
    -2\{\xi,\xi'\} t t'} \\
    \exp\left(2\im t (D_{\xi}(\beta,\gamma)-q)+ 2\im t' (D_{\xi'}(\beta,\gamma)-q')\right) \\
    =\langle \coh_\gamma, \coh_\beta\rangle
    \frac{1}{\pi\sqrt{(\|\xi\|^2+\epsilon^2) (\|\xi'\|^2+\epsilon'^2)- \{\xi,\xi'\}^2}}\\
    \exp\left(\frac{1}{(\|\xi\|^2+\epsilon^2) (\|\xi'\|^2+\epsilon'^2)- \{\xi,\xi'\}^2}
       \left(
       2\{\xi,\xi'\}\left(D_{\xi}(\beta,\gamma)-q\right)\left(D_{\xi'}(\beta,\gamma)-q'\right)\right. \right. \\
       \left. \left. -(\|\xi'\|^2+\epsilon'^2)\left(D_{\xi}(\beta,\gamma)-q\right)^2
       -(\|\xi\|^2+\epsilon^2)\left(D_{\xi'}(\beta,\gamma)-q'\right)^2
   \right)\right) .
  \end{multline}
\end{proof}

\begin{proof} of equation \eqref{eq:projsumid} of Lemma~\ref{lem:projsumid}.
    \begin{multline}
        \int_{-\infty}^{\infty}\xd s \left(B_{\xi'}^{\epsilon'}(q-s)\star B_{\xi}^{\epsilon}(s)\right)(\coh_\gamma,\coh_\beta)
        =\int_{-\infty}^{\infty}\xd s\,
        \langle \coh_\gamma, \coh_\beta\rangle
        \frac{1}{\pi\sqrt{(\|\xi\|^2+\epsilon^2) (\|\xi'\|^2+\epsilon'^2)- \{\xi,\xi'\}^2}}\\
        \exp\left(\frac{1}{(\|\xi\|^2+\epsilon^2) (\|\xi'\|^2+\epsilon'^2)- \{\xi,\xi'\}^2}
           \left(
           2\{\xi,\xi'\}\left(D_{\xi}(\beta,\gamma)-s\right)\left(D_{\xi'}(\beta,\gamma)-(q-s)\right)\right. \right. \\
           \left. \left. -(\|\xi'\|^2+\epsilon'^2)\left(D_{\xi}(\beta,\gamma)-s\right)^2
           -(\|\xi\|^2+\epsilon^2)\left(D_{\xi'}(\beta,\gamma)-(q-s)\right)^2
        \right)\right) \\
        =\int_{-\infty}^{\infty}\xd s\,
        \langle \coh_\gamma, \coh_\beta\rangle
        \frac{1}{\pi\sqrt{(\|\xi\|^2+\epsilon^2) (\|\xi'\|^2+\epsilon'^2)- \{\xi,\xi'\}^2}}\\
        \exp\left(\frac{1}{(\|\xi\|^2+\epsilon^2) (\|\xi'\|^2+\epsilon'^2)- \{\xi,\xi'\}^2}
          \left(-\left(2\{\xi,\xi'\}+\|\xi\|^2+\|\xi'\|^2+\epsilon^2+\epsilon'^2\right) s^2 \right.\right.\\
          +2\left( (\{\xi,\xi'\}+\|\xi'\|^2+\epsilon'^2)D_{\xi}(\beta,\gamma)
          -(\{\xi,\xi'\}+\|\xi\|^2+\epsilon^2)(D_{\xi'}(\beta,\gamma)-q)\right) s \\
          \left. \left. +2\{\xi,\xi'\}
          D_{\xi}(\beta,\gamma)\left(D_{\xi'}(\beta,\gamma)-q\right) 
          -(\|\xi'\|^2+\epsilon'^2)\left(D_{\xi}(\beta,\gamma)\right)^2-(\|\xi\|^2+\epsilon^2)\left(D_{\xi'}(\beta,\gamma)-q\right)^2
        \right)\right) \\
        =\langle \coh_\gamma, \coh_\beta\rangle
        \frac{1}{\sqrt{\pi}\sqrt{2\{\xi,\xi'\}+\|\xi\|^2+\|\xi'\|^2+\epsilon^2+\epsilon'^2}}\\
        \exp\left(\frac{1}{(\|\xi\|^2+\epsilon^2) (\|\xi'\|^2+\epsilon'^2)- \{\xi,\xi'\}^2}
         \left(\frac{1}{2\{\xi,\xi'\}+\|\xi\|^2+\|\xi'\|^2+\epsilon^2+\epsilon'^2} \right.\right.\\
         \left. \left( (\{\xi,\xi'\}+\|\xi'\|^2+\epsilon'^2)D_{\xi}(\beta,\gamma)
         -(\{\xi,\xi'\}+\|\xi\|^2+\epsilon^2)(D_{\xi'}(\beta,\gamma)-q)\right)^2\right) \\
         \left. +2\{\xi,\xi'\}
         D_{\xi}(\beta,\gamma)\left(D_{\xi'}(\beta,\gamma)-q\right) 
         -(\|\xi'\|^2+\epsilon'^2)\left(D_{\xi}(\beta,\gamma)\right)^2-(\|\xi\|^2+\epsilon^2)\left(D_{\xi'}(\beta,\gamma)-q\right)^2
         \right) \\
        =\langle \coh_\gamma, \coh_\beta\rangle
          \frac{1}{\sqrt{\pi}\sqrt{2\{\xi,\xi'\}+\|\xi\|^2+\|\xi'\|^2+\epsilon^2+\epsilon'^2}}\\
        \exp\left(\frac{1}{(\|\xi\|^2+\epsilon^2) (\|\xi'\|^2+\epsilon'^2)- \{\xi,\xi'\}^2}
        \frac{1}{2\{\xi,\xi'\}+\|\xi\|^2+\|\xi'\|^2+\epsilon^2+\epsilon'^2} \right.\\
          \left( (\{\xi,\xi'\}+\|\xi'\|^2+\epsilon'^2)^2 (D_{\xi}(\beta,\gamma))^2
          +(\{\xi,\xi'\}+\|\xi\|^2+\epsilon^2)^2(D_{\xi'}(\beta,\gamma)-q)^2 \right.\\
           \left. -2 (\{\xi,\xi'\}+\|\xi'\|^2+\epsilon'^2)(\{\xi,\xi'\}+\|\xi\|^2+\epsilon^2)D_{\xi}(\beta,\gamma)(D_{\xi'}(\beta,\gamma)-q)\right. \\
          +\left(2\{\xi,\xi'\}+\|\xi\|^2+\|\xi'\|^2+\epsilon^2+\epsilon'^2\right) \\
          \left. \left. \left(2\{\xi,\xi'\}
          D_{\xi}(\beta,\gamma)\left(D_{\xi'}(\beta,\gamma)-q\right) 
          -(\|\xi'\|^2+\epsilon'^2)\left(D_{\xi}(\beta,\gamma)\right)^2-(\|\xi\|^2+\epsilon^2)\left(D_{\xi'}(\beta,\gamma)-q\right)^2
          \right)\right)\right) \\
        =\langle \coh_\gamma, \coh_\beta\rangle
          \frac{1}{\sqrt{\pi}\sqrt{2\{\xi,\xi'\}+\|\xi\|^2+\|\xi'\|^2+\epsilon^2+\epsilon'^2}}\\
        \exp\left(-\frac{1}{2\{\xi,\xi'\}+\|\xi\|^2+\|\xi'\|^2+\epsilon^2+\epsilon'^2}
          \left(D_{\xi}(\beta,\gamma)+D_{\xi'}(\beta,\gamma)-q\right) 
          \right)
    \end{multline}
\end{proof}

\newcommand{\eprint}[1]{\href{https://arxiv.org/abs/#1}{#1}}
\bibliographystyle{stdnodoi} % bibliography
\bibliography{stdrefsb}
\end{document}